\theoremstyle{theorem}
\newtheorem{thm}{Theorem}
\newtheorem{assumption}[thm]{Assumption}{\bfseries}{\itshape}
\newtheorem{prop}[thm]{Proposition}
\newtheorem{proposition}[thm]{Proposition}
\newtheorem{heur}[thm]{Heuristic}
\newtheorem{lemma}[thm]{Lemma}
\newtheorem{conjec}[thm]{Conjecture}
\theoremstyle{definition}
\newtheorem{definition}[thm]{Definition}
\theoremstyle{remark}
\newtheorem{remark}[thm]{Remark}
\newtheorem{fact}[thm]{Fact}
 \renewcommand{\leq}{\leqslant} 
\renewcommand{\geq}{\geqslant} 
\newcommand{\F}{\ensuremath{\mathbb{F}}}
\newcommand{\Fq}{\ensuremath{\mathbb{F}_q}}
\newcommand{\fq}{\ensuremath{\mathbb{F}_q}}
\newcommand{\Fqm}{\ensuremath{\mathbb{F}_{q^m}}}
\newcommand{\fqm}{\ensuremath{\mathbb{F}_{q^m}}}
\newcommand{\Z}{\mathbb{Z}}
\newcommand{\mat}[1]{\ensuremath{\boldsymbol{#1}}}
\newcommand{\code}[1]{\ensuremath{\mathscr{#1}}}
\newcommand{\AC}{\code{A}}
\newcommand{\BC}{\code{B}}
\newcommand{\CC}{\code{C}}
\newcommand{\DC}{\code{D}}
\newcommand{\XC}{\code{X}}
\newcommand{\Gm}{\mat{G}}
\renewcommand{\Im}{\mat{I}}
\newcommand{\Mm}{\mat{M}}       
\newcommand{\Pm}{\mat{P}}
\newcommand{\Vm}{\mat{V}}
\newcommand{\Xm}{\mat{X}}
\newcommand{\Ym}{\mat{Y}}
\newcommand{\Wm}{\mat{W}}
\newcommand{\zerom}{\mat{0}}
\newcommand{\av}{\mat{a}}
\newcommand{\bv}{\mat{b}}
\newcommand{\cv}{\mat{c}}
\newcommand{\dv}{\mat{d}}
\newcommand{\uv}{\mat{u}}
\newcommand{\vv}{\mat{v}}
\newcommand{\xv}{{\mat{x}}}
\newcommand{\yv}{{\mat{y}}}
\newcommand{\cB}{{\mathcal{B}}}
\newcommand{\cE}{{\mathcal{E}}}
\newcommand{\cM}{{\mathcal{M}}}
\newcommand{\cS}{{\mathcal{S}}}
\newcommand{\cV}{{\mathcal{V}}}
\newcommand{\cU}{{\mathcal{U}}}
\newcommand{\GRS}[3]{\text{\bf GRS}_{#1}(#2,#3)}
\newcommand{\Alt}[3]{\code{A}_{#1}(#2, #3)}
\newcommand{\Goppa}[2]{\code{G}(#1, #2)}
\newcommand{\sh}[2]{\mathbf{Sh}_{#1}\left(#2\right)}
\newcommand{\pu}[2]{\mathbf{Pct}_{#1}\left(#2\right)}
\newcommand{\puv}[2]{#1_{\widecheck{#2}}}
\newcommand{\sss}[1]{\stackrel{\mathbf{Sh}_{#1}}{\supseteq}}
\newcommand{\Tr}[1]{\textrm{Tr}\left( #1 \right)}
\newcommand{\trace}[1]{\Tr{#1}} \newcommand{\Trm}[1]{\textrm{Tr}_{\Fqm/\Fq}\left( #1 \right)}
\newcommand{\dual}[1]{\ensuremath{{#1}^{\perp}}}
\newcommand{\trsp}[1]{{#1}^{\top}}
\newcommand{\starp}[2]{{#1} \star {#2}}
\newcommand{\sq}[1]{#1^{\star 2}}
\newcommand{\sqb}[1]{\left(#1\right)^{\star 2}}
\newcommand{\cond}[2]{\text{\bf Cond}\left(#1, #2\right)}
\newcommand{\Ical}{\mathcal{I}}
\newcommand{\rank}{\mathbf{Rank}}
\newcommand{\eqdef}{\stackrel{\text{def}}{=}}
\newcommand{\ie}{\textit{i.e.}\,}
\newcommand{\Span}[2]{\left\langle \, #1 \, \right\rangle_{#2}}
\newcommand{\Fqspan}[1]{\left\langle \, #1 \, \right\rangle_{\Fq}}
\newcommand{\Fqmspan}[1]{\left\langle \, #1 \, \right\rangle_{\Fqm}}
\newcommand{\floor}[1]{\left\lfloor #1 \right\rfloor}
\newcommand{\Iintv}[2]{\llbracket #1 , #2 \rrbracket}
\newcommand{\card}[1]{\lvert #1 \rvert}
\newcommand{\bigO}[1]{\mathcal{O}\left(#1\right)}
\newcommand{\macaulay}[1]{\cM^{\text{acaulay}}\left(#1\right)}
\DeclareMathOperator{\LM}{LM}
\title[Polynomial time attack on high rate random alternant codes]{Polynomial time key-recovery attack on high rate random alternant codes}
\author{Magali Bardet}
 \address{Univ Rouen Normandie, Normandie Univ, LITIS UR 4108, 76000 Rouen, France}
 \email{magali.bardet@univ-rouen.fr}
 \author{Rocco Mora}
 \address{Inria, 2 rue Simone Iff, 75012 Paris, France\\
   Sorbonne Universit\'es, UPMC Univ Paris 06}
 \email{rocco.mora@inria.fr}
 \author{Jean--Pierre Tillich}
 \address{Inria, 2 rue Simone Iff, 75012 Paris, France}
 \email{jean-pierre.tillich@inria.fr}
\begin{document}

\maketitle

\begin{abstract}
A long standing open question is whether the distinguisher of high rate alternant codes or Goppa codes  \cite{FGOPT11} can be turned into an algorithm recovering
the algebraic structure of such codes from the mere knowledge of an arbitrary generator matrix of it. This would allow to break the McEliece scheme  as
soon as the code rate is large enough and would break all instances of the CFS signature scheme. We give for the first time a positive answer for this problem when the code is {\em a generic alternant code}
and when the code field size $q$ is small : $q \in \{2,3\}$ and for {\em all} regime of other parameters for which the aforementioned distinguisher works. This breakthrough has been obtained by two different ingredients : \\
(i) a way of using code shortening and the component-wise product of codes to derive
from the original alternant code a sequence of alternant codes of decreasing degree up to getting an alternant code of degree $3$
(with a multiplier and support related to those of the original alternant code);\\
 (ii) an original Gröbner basis approach which takes into account the non standard constraints on the multiplier and support of an alternant code which recovers in polynomial time the relevant algebraic structure of an alternant code of degree $3$ from the mere knowledge of a basis for it.
\end{abstract}
\section{Introduction}
\label{sec:intro}

\subsection*{The McEliece scheme.}
The McEliece encryption scheme \cite{M78}, which dates back to 1978 and which is almost as old as RSA \cite{RSA78}, is a code-based cryptosystem built upon the family of binary Goppa codes. We will denote this system by McEliece-binary Goppa from now on since we will be interested in variations of the McEliece cryptosystem obtained by changing the underlying code family.
 It is equipped with very fast encryption and decryption algorithms and has very small ciphertexts but large public keysize. Contrarily to RSA which is broken by quantum computers \cite{S94a}, 
it is also widely viewed as a viable quantum safe cryptosystem. A variation of this public key cryptosystem intended to be IND-CCA secure and an associated key exchange protocol \cite{ABCCGLMMMNPPPSSSTW20} are now 
in the process of being standardized by the NIST as quantum safe cryptosystems.  
It should be noted that the best quantum algorithm for breaking it \cite{KT17a} has exponential complexity and the corresponding exponent barely improves the exponent of the best 
classical algorithm \cite{BM17} by about $40$ percent.

The attacks on the McEliece scheme can be put in two classes: message-recovery attacks on one hand and key-recovery attacks on the other hand. The first one consists in inverting the McEliece encryption without finding a trapdoor and makes use of generic decoding algorithms. Despite considerable improvements \cite{S88,CC98,MMT11,BJMM12,MO15,BM17}, all these algorithms have exponential complexity. The second one consists in trying to recover the private key or an equivalent private key which allows to decipher as efficiently as a legitimate user. The most efficient attack of this kind was given in \cite{LS01} and essentially consists in trying all Goppa polynomials and all possible supports, where the verification consists in solving a code equivalence problem which is often easy with the help of the support splitting algorithm \cite{S00}. The complexity of the second attack is also exponential with an exponent which is much bigger than the one obtained for the first attack. This is why the first kind of attack is considered as the main threat against McEliece-Goppa, and consequently
 the parameters of all those schemes based on Goppa codes have been chosen to thwart the first attack.

\subsection*{Towards efficient key recovery attacks.}
This picture began to change due on one hand to variations on the original McEliece proposal based on binary Goppa codes of rate close to $\frac{1}{2}$ by either considering very high rate binary Goppa codes for devising signature schemes \cite{CFS01}, or by moving from binary Goppa codes to nonbinary Goppa codes over large alphabets \cite{BLP10,BLP11a}, thus decreasing significantly the extension degree $m$ over which the (secret) support of the Goppa code is defined. Recall here that the Goppa code is defined over some finite field $\fq$ whereas the support of the Goppa code is defined over an extension field $\fqm$ (see Definition \ref{def:Goppa} which follows). On the other hand, more structured versions of the McEliece system also appeared, based on quasi-cyclic codes such as \cite{BCGO09,BIGQUAKE} or on 
quasi-dyadic codes such as \cite{MB09,BLM11,BBBCDGGHKNNPR17}. 

The quasi-cyclic or quasi-dyadic Goppa codes could be attacked by an algebraic modeling \cite{FOPT10,GL09} for the secret key which could be efficiently solved with Gr\"obner bases techniques  because the added structure allowed to reduce drastically the number of unknowns of the algebraic system. By trying to solve the same algebraic system in the case of high rate Goppa codes it was also found that Gr\"obner bases techniques behaved very differently when the system corresponds to a Goppa code instead of a random linear code of the same length and dimension. This approach lead to 
\cite{FGOPT11} that gave a way to distinguish high rate Goppa codes from random codes. 
It is based on the kernel of a linear system related to the aforementioned algebraic system.  Indeed, it was shown to have an unexpectedly high dimension when instantiated with Goppa codes or the more general family of alternant codes rather than with random linear codes. This distinguisher was later on given another interpretation in \cite{MP12}, where it was proved that this dimension is related to the dimension of the square of the dual of the Goppa code. Very recently, \cite{MT21} revisited \cite{FGOPT11} and gave rigorous bounds for the dimensions of the square codes of Goppa or alternant codes and a better insight on the algebraic structure of these squares. Recall here that the component-wise product of codes (also called the Schur product) is defined from the component-wise product of vectors 
$\av=(a_i)_{1 \leq i \leq n}$ and $\bv=(b_i)_{1 \leq i \leq n}$
\[\starp{\av}{\bv}\eqdef(a_1 b_1,\dots,a_n b_n)\]
 by
 \begin{definition}
The  \textit{component-wise product of codes} $\CC,\DC$ over $\F$ with the same length $n$ is defined as
   \[
      \starp{\CC}{\DC}\eqdef \Span{\starp{\cv}{\dv} \mid \cv \in \CC, \dv \in \DC}{\F}.       \]
     If $\CC=\DC$, we call $\sq{\CC}\eqdef\starp{\CC}{\CC}$ the \textit{square code} of $\CC$. 
 \end{definition}

\subsection*{Square code and cryptanalysis.}
Interestingly enough, it was proved in \cite{CGGOT14} that square code considerations could also be used to mount an attack on McEliece or Niederreiter schemes based 
on Generalized Reed-Solomon (GRS) codes. Recall that this scheme was proposed in \cite{N86} and was subsequently broken in \cite{SS92}. Note that when the extension degree of the Goppa code is $1$ (i.e. the support of the Goppa code is defined over the same field as the Goppa code itself), a Goppa code is indeed a GRS code, so a McEliece scheme based on a Goppa code of extension degree $1$ can be attacked with the \cite{SS92} attack. However, this does not seem to generalize to higher extension degrees; i.e. on  McEliece schemes based on Goppa codes in general. The point of the new attack \cite{CGGOT14},  is that it uses arguments on square codes for which there is hope that they could be 
applied to a much broader class of Goppa codes. It is insightful here to recall in a simplified way the attack obtained in \cite[\S5]{CGGOT14}. 

Indeed, GRS codes turn out to display a very peculiar property with respect to the square of codes. It is readily seen that $\dim \sq{\CC} \le \min\left(n,\binom{k +1}{2}\right)$ where $k$ and $n$ are respectively the dimension and length of $\CC$. It turns out that for random codes 
the upper-bound is almost always an equality \cite{CCMZ15}, whereas the situation for GRS codes is completely different: in this case, we namely have 
\begin{equation}
\label{eq:GRSsquaredim}
\dim \sq{\CC} = \min\left(n,2k-1\right).
\end{equation}
 The reason of this peculiar behavior comes from the fact that GRS codes are polynomial evaluation codes. Indeed, recall that a GRS code is given by
\begin{definition}[GRS code]\label{def:GRS}
  Let $\xv=(x_1,\dots,x_n)\in\F^n$ be a vector of pairwise distinct entries and $\yv=(y_1,\dots,y_n)\in\F^n$ a vector of nonzero entries. The $[n, k]$ \textit{generalized Reed-Solomon (GRS) code} with \textit{support} $\xv$ and \textit{multiplier} $\yv$ is
  \[
  \GRS{k}{\xv}{\yv}\eqdef\{(y_1 P(x_1),\dots,y_n P(x_n)) \mid P \in \F[z], \deg P < k\}.
  \]
  \end{definition}
Since the Schur product of two polynomial evaluations of degree $\deg P \leq k-1$ and $\deg Q \leq k-1$ respectively is itself a polynomial evaluation of degree $\deg (P\cdot Q)=\deg P + \deg Q < 2k-1$: 
$(y_i P(x_i))_i \star (y_i Q(x_i))_i = (y_i^2 P\cdot Q(x_i))_i$, it is readily seen that
\begin{equation}
\label{eq:GRSsquareexpression}
\sq{\GRS{k}{\xv}{\yv}} =\GRS{2k-1}{\xv}{\yv\star\yv}, 
\end{equation}
which explains \eqref{eq:GRSsquaredim}. In a sense, the square code construction ``sees'' the polynomial structure of the GRS code. 
A key recovery attack could be mounted as follows. Recall that it amounts here to recover from an arbitrary generator matrix of 
a GRS code $\CC=\GRS{k}{\xv}{\yv}$  a pair $(\xv',\yv')$ satisfying 
$\CC=\GRS{k}{\xv'}{\yv'}$. Let us  define $\CC(i)$ as the subcode of the GRS code $\CC$ given by
\[
\CC(i) = \left\{(y_i P(x_j))_{1 \leq j \leq n}: \deg P < k, \text{ $x_1$ is a zero of order $\geq i$ of $P$}\right\}, 
\]
then
\begin{enumerate}
\item[(i)] $\CC(1)$ can be readily computed from $\CC$ since it is the shortened code of $\CC$ in the first position.
\item[(ii)] We have in general the equality 
\begin{equation}
\label{eq:attackCOT}
\CC(i-1) \star \CC(i+1) = \sq{\CC(i)}
\end{equation}
coming from the fact that the product of two polynomials which have a zero of order $i$ at $x_1$ gives a polynomial with a zero of order $2i$ in $x_1$ and so does the product 
of a polynomial with a zero of order $i-1$ in $x_1$ with a polynomial which has a zero of order $i+1$  at the same place.
\item[(iii)] 
Solving the equation $\XC \star \AC = \BC$ for two known linear codes $\AC$ and $\BC$ amounts to solve a linear system in the case where
$\XC$ is the maximal code satisfying $\XC \star \AC \subseteq \BC$. This is indeed the case here for
$\AC=\CC(i-1)$ and $\BC=\sq{\CC(i)}$. $\XC$ corresponds in such a case to the {\em conductor} of $\AC$ into $\BC$,  which can be computed by solving a linear system (see below).
\end{enumerate}
\begin{definition}
Let $\CC, \DC \subseteq \F^n$ be two codes. The \textit{conductor} of $\CC$ into $\DC$ is
\[
\cond{\CC}{\DC}\eqdef \{\uv \in \Fq^n \mid \starp{\uv}{\CC}\subseteq\DC\},
\]
where $\starp{\uv}{\CC}\eqdef\{\starp{\uv}{\cv} \mid \cv \in \CC\}$.
\end{definition}
\begin{proposition}[{\cite[Lemma 7]{CMP17}}]
  The conductor can be computed with 
\[
\cond{\CC}{\DC} = \dual{(\CC \star \dual{\DC})}.
\]
\end{proposition}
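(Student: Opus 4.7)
The plan is to unfold both sides of the claimed equality into explicit orthogonality conditions and recognize them as the same condition, using the key adjunction identity
\[
\langle \uv \star \cv, \dv \rangle = \langle \uv, \cv \star \dv \rangle,
\]
which holds for any three vectors in $\F^n$ simply because both sides equal $\sum_{i=1}^n u_i c_i d_i$. This identity turns the defining constraint of the conductor (a containment of component-wise products in $\DC$) into an orthogonality relation against $\DC^\perp$.

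Concretely, I would first rewrite the containment $\uv \star \CC \subseteq \DC$ as: for every $\cv \in \CC$, the vector $\uv \star \cv$ lies in $\DC$, which by the definition of the dual is equivalent to $\langle \uv \star \cv, \dv \rangle = 0$ for all $\cv \in \CC$ and all $\dv \in \DC^\perp$. Applying the adjunction identity then rewrites this as $\langle \uv, \cv \star \dv \rangle = 0$ for all $\cv \in \CC$ and $\dv \in \DC^\perp$.

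The final step is to observe that, as $\cv$ and $\dv$ range over $\CC$ and $\DC^\perp$ respectively, the products $\cv \star \dv$ are precisely the generators of $\CC \star \DC^\perp$. Since orthogonality to a spanning set is equivalent to orthogonality to the whole subspace, the condition becomes $\uv \in (\CC \star \DC^\perp)^\perp$, giving both inclusions at once and hence the claimed equality.

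There is no real obstacle here: the statement is essentially the adjoint relation between the Schur product and duality, and the only care needed is to note that it suffices to check the orthogonality condition on generators of $\CC$ and of $\DC^\perp$ rather than on arbitrary elements of $\CC \star \DC^\perp$, which is immediate by bilinearity of the Schur product and of the inner product.
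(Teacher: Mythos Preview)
Your proof is correct and is the standard argument for this identity. The paper itself does not prove this proposition; it merely cites it from \cite{CMP17}, so there is no in-paper proof to compare against. Your adjunction argument via $\langle \uv \star \cv, \dv \rangle = \langle \uv, \cv \star \dv \rangle$ is exactly the natural route and is essentially how the result is established in the literature.
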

The first two points show that we can therefore
compute $\CC(2)$ in polynomial time, because $\CC(1)$ and $\CC(0)$ are known (the first is the shortened code and the second is the code $\CC$ itself). We can iterate this process and compute 
recursively the decreasing set of codes
$\CC(3)$, $\CC(4), \cdots$  and stop when we get a code of dimension $1$ (i.e. $C(k-1)$) which basically reveals a great deal of information about the multiplier $\yv$. 
It is then straightforward with this approach to finish the attack to recover the whole algebraic structure of $\CC$. Note that we have basically computed a decreasing set of codes
\[
\CC=\CC(0) \supset \CC(1) \supset \CC(2) \cdots \supset \CC(k-1)
\]
that we will call a {\em filtration} in what follows.

This approach works basically like this to attack a McEliece scheme based on GRS codes \cite{CGGOT14}, but interestingly enough it also applies to Wild Goppa codes of extension degree $2$ as shown in  \cite{COT14}. Such schemes were indeed proposed in  \cite{BLP10}. This extension degree  corresponds to the largest extension degree where we can expect the Goppa code to behave differently from a random linear code with respect to the square code dimension. 
Roughly speaking in this case, even if Goppa codes are subfield subcodes of GRS codes, the equality  \eqref{eq:attackCOT} ``almost'' holds and this is sufficient to mount a similar attack.
However as explained in \cite{COT14}, this approach is bound to fail when the extension degree $m$ is bigger than $2$. However as observed in \cite{MP12}, even when $m>2$, the square code $\sq{\CC}$ can also be 
of unusually small dimension when the rate of the Goppa code is close to $1$, but this time not by taking $\CC$ to be the Goppa code itself, but by choosing $\CC$ to be the {\em dual} of the Goppa code. This strongly suggests that an approach similar to \cite{CGGOT14,COT14} could be followed to attack McEliece schemes based on very high rate Goppa codes. Even if the parameters of the McEliece schemes proposed in the literature are never in the regime where the dimension of the square of the dual of the Goppa code behaves differently from a random linear code, there is the notable exception of the code-based signature scheme \cite{CFS01}, which is based in a crucial way on high rate Goppa codes, and which similarly to the McEliece scheme would be broken, if we can recover the unknown support of the Goppa code from an arbitrary generator matrix for it. However, the fact that the dual code is actually the trace code of a GRS code but not a subfield subcode of a GRS code loses a lot of the original polynomial structure and seems to complicate very significantly this approach. This is still an open problem since the problem was explicitly raised in \cite{FGOPT11}. 
 
\subsection*{Our contribution.}
In the present article, we make what we consider to be a significant step in this direction. We will namely show that somewhat unexpectedly, an equality related to \eqref{eq:attackCOT} holds, when taking duals of {\em (generic) high rate alternant codes}, but not when we take Goppa codes. This is extremely surprising because Goppa codes are just alternant codes with a peculiar structure.  To explain the phenomenon we are witnessing, let us recall the definition of an alternant code which is a subfield subcode of a GRS code:
\begin{definition}[alternant code]\label{def:alternant}
    Let $n\le q^m$, for some positive integer $m$. Let $\GRS{r}{\xv}{\yv}$ be the GRS code over $\Fqm$ of dimension $r$ with support $\xv \in \Fqm^n$ and multiplier $\yv\in (\Fqm\setminus\{0\})^n$. The \textit{alternant code} with support $\xv$ and multiplier $\yv$ and \textit{degree} $r$ over $\Fq$ is
    \[
    \Alt{r}{\xv}{\yv}\eqdef \GRS{r}{\xv}{\yv}^\perp \cap \F_q^n= \GRS{n-r}{\xv}{\yv^\perp} \cap \F_q^n,
    \]
    where  \[
  \yv^\perp\eqdef\left(\frac{1}{\pi'_\xv(x_1)y_1},\dots,\frac{1}{\pi'_\xv(x_n)y_n}\right)
  \]
  and $\pi'_\xv$ is the derivative of $\pi_\xv$.
    The integer $m$ is called \textit{extension degree} of the alternant code.
  \end{definition}
  In other words, an alternant code is the dual (over $\Fq$) of a GRS code defined over $\Fqm$. It is also a subfield subcode of a GRS code since the dual of a GRS code is a GRS code : $\GRS{r}{\xv}{\yv}^\perp=\GRS{n-r}{\xv}{\yv^\perp}$, see 
   \cite[Theorem~4, p.~304]{MS86}.
  We also recall that Goppa codes form a particular family of alternant codes
   \begin{definition}[Goppa code]\label{def:Goppa}
    Let $\xv\in\Fqm^n$ be a support vector and $\Gamma\in\Fqm[z]$ a polynomial of degree $r$ such that $\Gamma(x_i)\neq 0$ for all $i \in \{1,\dots,n\}$. The \textit{Goppa code} of degree $r$ with support $\xv$ and \textit{Goppa polynomial} $\Gamma$ is defined as
    \[
    \Goppa{\xv}{\Gamma}\eqdef\Alt{r}{\xv}{\yv},
    \]
    where $\yv\eqdef\left(\frac{1}{\Gamma(x_1)},\dots,\frac{1}{\Gamma(x_n)}\right).$
  \end{definition}
  The very unusual behavior we observe in the case of a generic alternant code $\Alt{r}{\xv}{\yv}$ is that in a certain high rate regime, 
  if we shorten its {\em dual} in one position $i$ and take its square to get $\BC \eqdef\sqb{\sh{i}{\Alt{r}{\xv}{\yv}^\perp}}$, where
  $\sh{i}{\CC}$ denotes the code $\CC$ shortened in position $i$, then the maximal code $\XC$ satisfying
  \[
\XC \star \AC \subseteq \BC
  \]
  is the dual of a certain alternant code of degree $r-1$:
  \[
  \XC=  \Alt{r-1}{\puv{\xv}{i}}{\puv{\yv}{i}(\puv{\xv}{i}-x_i)}^\perp.
  \]
  where $\puv{\xv}{i}$ denotes the vector $\xv$ where we have dropped the index $i$ and $\AC$ is the dual of the shortened alternant code in position $i$, i.e.
  $\left(\sh{i}{\Alt{r}{\xv}{\yv}}\right)^\perp$. We also use the short notation $\puv{\yv}{i}(\puv{\xv}{i}-x_i)$ for $\puv{\yv}{i}\star (x_j-x_i)_{j\ne i}$. Note that this code is actually the dual of an alternant code since $\left(\sh{i}{\Alt{r}{\xv}{\yv}}\right)^\perp= \Alt{r}{\puv{\xv}{i}}{\puv{\yv}{i}}^\perp$ (see Proposition \ref{prop:sh_alt}). In other words
  \begin{equation}
  \label{eq:fundamental}
  \cond{\Alt{r}{\puv{\xv}{i}}{\puv{\yv}{i}}}{\sqb{\sh{i}{\Alt{r}{\xv}{\yv}^\perp}}}=\Alt{r-1}{\puv{\xv}{i}}{\puv{\yv}{i}(\puv{\xv}{i}-x_i)}^\perp.
  \end{equation}
  This means that starting from a generic alternant code $\Alt{r}{\xv}{\yv}$ of degree $r$, we can derive in polynomial time, by first computing two auxiliary codes $\AC =  \left(\sh{i}{\Alt{r}{\xv}{\yv}}\right)^\perp$ and 
  $\BC=\sqb{\sh{i}{\Alt{r}{\xv}{\yv}^\perp}}$, and then computing the conductor $\cond{\AC}{\BC}$ of $\AC$ into $\BC$, an alternant code of degree $r-1$. As we will see,  there are only two conditions to be met for performing this task:
  \begin{enumerate}
  \item[(i)] $r \geq q+1$ where $q$ is the alphabet size of the alternant
    code,
  \item[(ii)] $\sqb{\sh{i}{\Alt{r}{\xv}{\yv}^\perp}}$ is not the full code
    $\fq^{n-1}$ where $n$ is the codelength of the alternant code.
  \end{enumerate}
 By iterating this process, we can compute in polynomial time some kind of ``filtration'' of duals of alternant codes of decreasing degree 
 \begin{equation}
 \label{eq:filtration}
 \AC_r^\perp = \Alt{r}{\xv}{\yv}^\perp \sss{i_1} \AC_{r-1}^\perp  \sss{i_2} \cdots  \sss{i_{r-q}}{\AC_q}^\perp
 \end{equation}
  with multipliers and support which are related to the original support and multiplier (and from which the original support and multiplier can be easily recovered). Here the notation $\AC \sss{i} \BC$ means that 
  \[
  \sh{i}{\AC} \supseteq \BC.
  \]
  What can we do with this sequence? The point is that if the degree of the alternant code is small enough, we can compute its support and multiplier by solving a low degree algebraic system related to the algebraic systems considered in \cite{FOPT10,FGOPT13}. We will detail this in the particular case where $r=3$ and show that in this case, solving the system can be performed in polynomial time with Gr\"obner basis techniques. Roughly speaking the reason for this is that we have a conjunction of factors : a very overdetermined and highly structured system which gives during the Gr\"obner basis computation many new very low degree equations. We will also show that it is possible to speed up significantly the system solving process by introducing in the algebraic modeling new low degree polynomial equations which are not in the ideal of the original algebraic equations from \cite{FGOPT13} and which express the fact that the multiplier vector has only non zero entries and the support vector has only distinct entries. This will result in the end in a very efficient system solving procedure. Note that the aforementioned procedure reaches an alternant code $\AC_3$ of degree $3$  when the field size $q$ is either equal to $2$ or $3$. In other words, we have at the end a way to break a McEliece scheme based on {\em binary or ternary alternant codes}  as soon as $\sqb{\sh{i}{\Alt{r}{\xv}{\yv}^\perp}}$ is not the full code $\fq^{n-1}$. By using the formula given in \cite{MT21}, this is the case when
 \begin{eqnarray}
n-1 & > & \binom{rm+1}{2}-\frac{m}{2}(r-1)\left((2e+1)r-2\frac{q^{e}-1}{q-1}\right), \label{eq:un}\\
\text{where }e &\eqdef &\max\{i \in \mathbb{N} \mid r\ge q^i+1\}=\floor{\log_q(r-1)} \nonumber
 \end{eqnarray} 
We give in Table \ref{tab:summary} the known cases where it is possible to attack a McEliece scheme based on alternant codes together with the new attack proposed here:
\begin{table}[h!]
\begin{tabular}{|c|c|}
\hline 
paper & restriction \\ \hline 
\cite{SS92,CGGOT14} & $m=1$ \\ \hline
\cite{COT14} & $m=2$ + Wild Goppa code \\ \hline
this paper & $q=2$ or $q=3$, $m$ arbitrary + high rate condition \eqref{eq:un} \\ 
& (does not apply in the particular case of Goppa codes) \\ \hline
\end{tabular}
  \caption{Summary of polynomial time attacks on McEliece schemes based on alternant codes with the conditions to apply them.\label{tab:summary}}
\end{table}

In a nutshell, our contribution can be summarized as follows.

\paragraph{\em Breaking the $m=2$ barrier.}
 It is has been a long standing open problem after the \cite{SS92} attack on McEliece-GRS whether it is also possible to attack subfield subcodes of GRS codes, i.e. attack McEliece-alternant or McEliece-Goppa. A first step in this direction was made in \cite{CGGOT14} where a new attack on McEliece-GRS was derived with a hope to generalize it to McEliece-alternant or McEliece-Goppa because it is in essence only based on the fact that certain alternant or Goppa codes behave differently from random codes with respect to the dimension of the square code.  This was confirmed in \cite{COT14} by attacking McEliece-wild Goppa in the particular case where the extension degree $m$ is $2$, but the method used there which uses squares of the (shortened) Goppa code is bound to fail for higher extension degrees. Here we break for the first time the $m=2$ barrier, which was even conjectured at some point to be the ultimate limit for such algebraic attacks to work in polynomial time and show that we can actually attack McEliece-alternant for {\em any} extension degree $m$  provided that the rate of the alternant code is sufficiently large \eqref{eq:un} and the field size sufficiently low $q=2$ or $q=3$. Our attack is also based on square code considerations, but this time on the {\em dual} of the alternant code. The point is that in this case the square of the dual can also be distinguished from a random code in this regime \cite{FGOPT11,MP12}.  The attack is however much more involved in this case, because the dual loses the simple polynomial evaluation formulation of the Goppa code, since it is in this case the trace of a GRS code and not subfield subcode of a GRS code. Understanding the structure of the square is more complicated as was already apparent in \cite{MT21} which started such a task.
 
 \smallskip
 \paragraph{\em Opening the road for attacking the CFS scheme.}
 Interestingly our attack does not work at all when the alternant code has the additional structure of being a Goppa code. However this work could open the road for also attacking this subcase, in which case we could hope to break the CFS scheme \cite{CFS01} which operates precisely in the high rate regime where the square of the dual of the Goppa code behaves abnormally.

 \smallskip
\paragraph{\em New  algebraic attack.}
Our attack consists in two phases, the first phase computes a filtration of the dual of the alternant code  by computing iteratively conductors and the second phase solves with Gröbner bases techniques a variation of the algebraic system considered in \cite{FGOPT13} and recovers the support and the multiplier from the dual of the alternant code of degree $3$ we have at the end of the filtration when $q=2$ or $q=3$. We improve rather significantly upon the complexity of solving this system by adding new equations expressing the constraints on the support (all elements are distinct) and the multipliers (all elements are nonzero). By using certain heuristics that we confirmed experimentally we are able to prove that the Gröbner basis computation takes polynomial time and give a complete algebraic explanation of each step of the computation. It is likely that this analysis could be carried over for larger constant degree alternant codes. This would allow to break McEliece-alternant for larger field size than $3$.

A proof-of-concept implementation in MAGMA of the whole attack can be found at \url{https://github.com/roccomora/HighRateAlternant}.

 \section{Notation and prerequisites}
This section fixes the notation and reviews some tools and constructions needed for understanding the filtration attack and the rest of the article. These include the notion of subfield subcodes, trace codes, puncturing and shortening operators.

\subsection{Notation}

\subsubsection*{Finite fields.}
We denote by $\F$ a generic finite field and we will use it whenever it is not important to specify the field size. We will extensively consider the finite field extension $\Fq\subseteq \Fqm$, where $\Fq$ and $\Fqm$ are two finite fields with $q$ and $q^m$ elements respectively, for a prime power $q$ and a positive integer $m$.

\subsubsection*{Vector and matrix  notation.}
Vectors are indicated by lowercase bold letters $\xv$ and matrices by uppercase bold letters $\Mm$. By convention, vectors coordinates are indexed starting from 1. Moreover, $\Iintv{a}{b}$ indicates the closed integer interval between $a$ and $b$. Given a function $f$ acting on $\F$ and a vector $\xv=(x_i)_{1\le i \le n} \in \F$, the expression $f(\xv)$ is the component-wise mapping of $f$ on $\xv$, i.e. $f(\xv)=(f(x_i))_{1\le i \le n}$. We will even apply this with functions $f$ acting on $\F \times \F$: for instance for two vectors
$\xv$ and $\yv$ in $\F^n$ and two positive integers $a$ and $b$ we denote by $\xv^a\yv^b$ the vector $(x_i^a y_i^b)_{1 \leq i \leq n}$. It will also be convenient to use a notation to say that we drop a few positions in a vector. If $\xv=(x_i)_{i \in \Iintv{1}{n}}$ and $\Ical$ is a subset of positions, we denote by $\puv{\xv}{\Ical}$ the vector 
$\puv{\xv}{\Ical} \eqdef (x_i)_{i \in \Iintv{1}{n}\setminus \Ical}$.  When there is just one position $i$ in $\Ical$ we simply write
$\puv{\xv}{i}$ in this case (for instance, if $\xv=(x_1,x_2,x_3)$ then $\puv{\xv}{2}=(x_1,x_3)$).
\subsubsection*{Vector spaces}
Vector spaces are indicated by $\CC$. For two vector spaces $\CC$ and $\DC$, the notation $\CC\oplus \DC$ means that the two vector spaces are in direct sum, i.e.  that $\CC\cap\DC=\{0\}$.
\subsection{Trace codes and subfield subcodes}

A  useful $\Fq$-linear map from $\Fqm$ to its subfield $\Fq$ is the \textit{trace operator}:
\begin{definition}
Given the finite field extension $\Fqm/\Fq$, we define the \textit{field trace} $\textrm{Tr}_{\Fqm/\Fq}$ for all $x\in \Fqm$ as the $\Fq$-linear map from $\Fqm$ to $\Fq$ such that
\[
\Trm{x}=\sum_{i=0}^{m-1} x^{q^i}.
\]
The definition extends to vectors $\xv\in\Fqm^n$ so that the trace acts component-wise:
\[
\Trm{\xv}=(\Trm{x_1},\dots,\Trm{x_n})
\]
and hence to codes $\CC$ over $\Fqm$
\[                                                                                      
\Trm{\CC}=\{\Trm{\cv} \mid \cv \in \CC\}.
\]
\end{definition}
We will omit the subscript $\Fqm/\Fq$ and write more coincisely $\textrm{Tr}$ from now on, whenever the extension field is clear from the context.
Another standard construction to move from codes over $\Fqm$ to codes over a subfield $\Fq$ is the one of subfield subcode.
\begin{definition}
	Let $\CC\subseteq \Fqm^n$ be a code. The code
	\[
		\CC_{\rvert \Fq}\eqdef \CC \cap \Fq^n = \{\cv \in \CC \mid \forall i \in \Iintv{1}{n},\,c_i \in \Fq\}
	\]
	is called \textit{subfield subcode} over $\Fq$ of $\CC$.
\end{definition} 
It is easy to see that $\CC_{\rvert \Fq}$ is an $\Fq$-linear subspace of $\Fqm^n$, where $n$ is the length of $\CC$.
We conclude this subsection by stating the main classical result linking trace codes to subfield subcodes, i.e. Delsarte's theorem.
\begin{thm}{(Delsarte's Theorem \cite{D75})} \label{thm: Delsarte}
Let $\CC$ be a code over $\Fqm$. Then
\[
(\CC_{\rvert \Fq})^\perp=\Trm{\CC^\perp}.
\]
\end{thm}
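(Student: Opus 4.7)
The plan is to establish the equality by double dualization inside $\Fq^n$. Since $\Trm{\CC^\perp}$ is an $\Fq$-linear subspace of $\Fq^n$, the standard identity $W^{\perp\perp}=W$ for $\Fq$-subspaces $W\subseteq \Fq^n$ means it suffices to prove
\[
\left(\Trm{\CC^\perp}\right)^\perp = \CC_{\rvert \Fq}
\]
(the outer $\perp$ being taken in $\Fq^n$). Both inclusions will follow from a single computation, together with one crucial use of the $\Fqm$-linearity of $\CC^\perp$.

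For the easy inclusion $\CC_{\rvert \Fq} \subseteq (\Trm{\CC^\perp})^\perp$, I take $\uv \in \CC_{\rvert \Fq} \subseteq \CC$ and $\cv \in \CC^\perp$. Because the entries of $\uv$ lie in $\Fq$, the $\Fq$-linearity of the trace lets it pass through the sum:
\[
\sum_{i=1}^n u_i\,\Trm{c_i} \;=\; \Trm{\sum_{i=1}^n u_i c_i} \;=\; \Trm{\langle \uv, \cv\rangle} \;=\; 0,
\]
where the last equality uses $\langle \uv,\cv\rangle=0$ since $\uv\in\CC$ and $\cv\in\CC^\perp$. Thus $\uv$ is $\Fq$-orthogonal to every vector of $\Trm{\CC^\perp}$.

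For the reverse inclusion, suppose $\uv \in \Fq^n$ satisfies $\langle \uv, \Trm{\cv}\rangle = 0$ for every $\cv \in \CC^\perp$. The same rearrangement rewrites this as $\Trm{\langle \uv, \cv\rangle} = 0$ for every $\cv \in \CC^\perp$. I now upgrade this scalar-trace condition to an actual vanishing of the $\Fqm$-inner product: since $\CC^\perp$ is $\Fqm$-linear, I may replace $\cv$ by $\lambda \cv$ for any $\lambda \in \Fqm$, obtaining $\Trm{\lambda\,\langle \uv,\cv\rangle} = 0$ for all $\lambda \in \Fqm$. Non-degeneracy of the trace pairing $\Fqm \times \Fqm \to \Fq$ forces $\langle \uv,\cv\rangle = 0$ for every $\cv \in \CC^\perp$, hence $\uv \in (\CC^\perp)^\perp = \CC$. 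Combined with $\uv \in \Fq^n$ this yields $\uv \in \CC_{\rvert \Fq}$.

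The main (and only) subtle point is this second inclusion, where the $\Fqm$-linearity of $\CC^\perp$ is essential: without it, vanishing of the trace of a single element of $\Fqm$ would not force that element to vanish. Dualizing the resulting identity $(\Trm{\CC^\perp})^\perp = \CC_{\rvert \Fq}$ inside $\Fq^n$ then produces Delsarte's theorem.
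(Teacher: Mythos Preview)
Your proof is correct and is the standard argument for Delsarte's theorem. Note, however, that the paper does not actually supply a proof of this statement: it is quoted as a classical result with a citation to \cite{D75}, so there is no ``paper's own proof'' to compare against. What you have written is precisely the usual double-inclusion argument via non-degeneracy of the trace form, and it would serve perfectly well as the omitted justification.
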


\subsection{Shortening and Puncturing}
\begin{definition}
Given a code $\CC \subseteq \F^n$ and a subset $\Ical \subseteq \Iintv{1}{n}$, the \textit{punctured} code $\pu{\Ical}{\CC}$ and the \textit{shortened} code $\sh{\Ical}{\CC}$ over $\Ical$ are defined respectively as
\begin{align*}
\pu{\Ical}{\CC}=& \left\{ (c_i)_{i \in \Iintv{1}{n}\setminus \Ical } \mid \cv \in \CC\right\},\\
\sh{\Ical}{\CC}=& \left\{ (c_i)_{i \in \Iintv{1}{n}\setminus \Ical } \mid \exists\,\cv=(c_i)_{i \in \Iintv{1}{n}} \in \CC \text{ s.t. }\forall\,i \in \Ical, c_i = 0\right\}.
\end{align*}
\end{definition}
For the sake of simplicity, when $\Ical = \{i\}$, we denote the punctured and the shortened codes in $\Ical$ with $\pu{i}{\CC}$ and $\sh{i}{\CC}$ respectively.

Since we will frequently move from a code to its dual, it is worthwhile to mention how the shortening and puncturing combine with the dual operator:
\begin{prop} \cite[Theorem~1.5.7]{HP03} \label{prop: sh_pu_dual}
Let $\CC$ be a linear code of length $n$ and $\Ical \subset \Iintv{1}{n}$. Then
\[
	\sh{\Ical}{\CC^\perp}=\pu{\Ical}{\CC}^\perp \quad \text{and} \quad \pu{\Ical}{\CC^\perp}=\sh{\Ical}{\CC}^\perp.
\]
\end{prop}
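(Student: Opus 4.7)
The plan is to prove the two stated identities by unwinding the definitions of shortening, puncturing, and the standard inner product, and by exploiting the involutive property $(\CC^\perp)^\perp = \CC$. It suffices to prove the first identity $\sh{\Ical}{\CC^\perp} = \pu{\Ical}{\CC}^\perp$; the second will follow essentially for free by applying the first to the code $\CC^\perp$.

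For the first identity, I would argue by double inclusion working coordinate by coordinate. Fix $\uv \in \F^{n-|\Ical|}$ indexed by $\Iintv{1}{n}\setminus\Ical$, and let $\widetilde{\uv}\in\F^n$ be its extension by zeros on $\Ical$. The key observation is that for any $\cv\in\CC$,
\[
\langle \widetilde{\uv},\cv\rangle = \sum_{i\notin \Ical} u_i c_i = \langle \uv,\pu{\Ical}{\cv}\rangle.
\]
Now $\uv \in \sh{\Ical}{\CC^\perp}$ is, by definition, exactly the statement that $\widetilde{\uv}\in\CC^\perp$, i.e., that $\langle\widetilde{\uv},\cv\rangle=0$ for every $\cv\in\CC$. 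By the identity above, this is equivalent to $\langle\uv,\pu{\Ical}{\cv}\rangle=0$ for every $\cv\in\CC$, and since $\pu{\Ical}{\CC}=\{\pu{\Ical}{\cv}\mid\cv\in\CC\}$, this in turn is equivalent to $\uv\in\pu{\Ical}{\CC}^\perp$. Thus both memberships are equivalent and the first identity is proved.

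To obtain the second identity, I would substitute $\CC^\perp$ in place of $\CC$ in the first identity, which gives
\[
\sh{\Ical}{(\CC^\perp)^\perp} = \pu{\Ical}{\CC^\perp}^\perp.
\]
Using $(\CC^\perp)^\perp=\CC$ on the left side yields $\sh{\Ical}{\CC}=\pu{\Ical}{\CC^\perp}^\perp$, and taking duals of both sides gives the desired $\pu{\Ical}{\CC^\perp}=\sh{\Ical}{\CC}^\perp$.

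There is no real obstacle here: the entire content of the proof is the elementary computation $\langle\widetilde{\uv},\cv\rangle=\langle\uv,\pu{\Ical}{\cv}\rangle$, which says that pairing an $\Ical$-supported-on-zero vector against $\cv$ is the same as pairing its non-$\Ical$ part against the puncturing of $\cv$. The only mild subtlety worth noting is that one must verify that the map $\uv\mapsto\widetilde{\uv}$ really does set up a bijection between $\sh{\Ical}{\CC^\perp}$ and $\{\widetilde{\uv}\in\CC^\perp : \widetilde{u}_i=0\ \forall i\in\Ical\}$, which is immediate from the definition of shortening.
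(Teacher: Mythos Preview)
Your proof is correct and is exactly the standard argument for this classical identity. The paper does not supply its own proof of this proposition but simply cites \cite[Theorem~1.5.7]{HP03}, so there is nothing to compare against; your write-up would serve perfectly well as a self-contained justification.
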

Finally we recall the well known fact that a shortened alternant codes is itself an alternant code 
\begin{prop} \label{prop:sh_alt}
  Let $\Alt{r}{\xv}{\yv}$ be an alternant code of length $n$ and $\Ical\subseteq\{1,\dots, n\}$. Then
  \[
    \sh{\Ical}{\Alt{r}{\xv}{\yv}}=\Alt{r}{\puv{\xv}{\Ical}}{\puv{\yv}{\Ical}}.
  \]
\end{prop}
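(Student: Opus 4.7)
The plan is to reduce the claim to two elementary facts: first, that shortening commutes with taking subfield subcode (so shortening an alternant code equals shortening the ambient $\GRS{}{}{}^\perp$ code and then intersecting with $\Fq^{n-|\Ical|}$), and second, that puncturing a GRS code simply punctures its support and multiplier. The unfolding of $\Alt{r}{\xv}{\yv}$ via Definition \ref{def:alternant} will then immediately give the result.

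First I would prove the lemma $\sh{\Ical}{\CC \cap \Fq^n}=\sh{\Ical}{\CC}\cap \Fq^{n-|\Ical|}$ for any $\Fqm$-linear code $\CC\subseteq \Fqm^n$. The inclusion $\subseteq$ is clear since if $\cv\in \CC\cap \Fq^n$ has zero coordinates on $\Ical$, then the vector $(c_j)_{j\notin \Ical}$ lies in $\Fq^{n-|\Ical|}$. For $\supseteq$, given $\yv\in \sh{\Ical}{\CC}\cap \Fq^{n-|\Ical|}$, lift to some $\cv\in \CC$ which has $c_i=0$ for $i\in \Ical$ and $(c_j)_{j\notin \Ical}=\yv$; because $\yv\in \Fq^{n-|\Ical|}$ and the coordinates on $\Ical$ are $0\in \Fq$, the lift itself lies in $\CC\cap \Fq^n$.

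Next I would apply this to $\CC=\GRS{r}{\xv}{\yv}^\perp$ using Definition \ref{def:alternant}:
\[
\sh{\Ical}{\Alt{r}{\xv}{\yv}}=\sh{\Ical}{\GRS{r}{\xv}{\yv}^\perp}\cap \Fq^{n-|\Ical|}.
\]
Then Proposition \ref{prop: sh_pu_dual} turns the shortening of the dual into the dual of a puncturing:
\[
\sh{\Ical}{\GRS{r}{\xv}{\yv}^\perp}=\pu{\Ical}{\GRS{r}{\xv}{\yv}}^\perp.
\]

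The remaining step is the identity $\pu{\Ical}{\GRS{r}{\xv}{\yv}}=\GRS{r}{\puv{\xv}{\Ical}}{\puv{\yv}{\Ical}}$, which is immediate from Definition \ref{def:GRS}: puncturing just removes the evaluations at positions in $\Ical$, so the punctured code is $\{(y_j P(x_j))_{j\notin \Ical}:\deg P<r\}$, which is exactly the GRS code on the punctured support and multiplier (assuming $r\le n-|\Ical|$, otherwise the punctured code is the full ambient space and both sides of the proposition collapse to $\{0\}$). Combining everything,
\[
\sh{\Ical}{\Alt{r}{\xv}{\yv}}=\GRS{r}{\puv{\xv}{\Ical}}{\puv{\yv}{\Ical}}^\perp\cap \Fq^{n-|\Ical|}=\Alt{r}{\puv{\xv}{\Ical}}{\puv{\yv}{\Ical}}.
\]
There is no real obstacle here; the only point that needs a little care is checking the commutation of shortening with subfield subcode, and making sure that the multiplier coordinates $y_j$ for $j\notin \Ical$ remain nonzero (which is automatic, since they are a sub-tuple of a vector with no zero entries) and that the support coordinates remain pairwise distinct (again automatic from the original hypothesis on $\xv$).
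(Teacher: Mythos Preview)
Your proof is correct. The paper does not actually give a proof of this proposition; it merely states it as a ``well known fact'' and moves on. Your argument---reduce to the commutation of shortening with the subfield-subcode operation, then use $\sh{\Ical}{\CC^\perp}=\pu{\Ical}{\CC}^\perp$ and the trivial description of a punctured GRS code---is the standard and cleanest route, and every step is sound. The edge-case remarks (nonzero multipliers, distinct support entries, the degenerate range $r>n-|\Ical|$) are handled with appropriate care.
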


\subsection{Base field extension and alternant codes}
 
 It will be convenient to consider for a code defined over $\Fq$ its extension by scalars over $\Fqm$, meaning the following.
 \begin{definition}[extension of a code over a field extension]
 Let $\CC$ be a linear code over $\Fq$. We denote by $\CC_{\Fqm}$ the $\Fqm$-linear span of $\CC$ in $\Fqm^n$.
 \end{definition}
 It could seem that this is a somewhat trivial object, but this notion comes very handy in our case, since the extension to $\Fqm$ of the dual of an alternant code defined over $\Fq$ is a sum of $m$ GRS codes as shown by
  
  \begin{prop}\label{prop:dual_alt_fqm}
Let $\Alt{r}{\xv}{\yv}$ be an alternant code over $\Fq$. Then
\[
\left(\Alt{r}{\xv}{\yv}^\perp\right)_{\Fqm} = \sum_{j=0}^{m-1} \GRS{r}{\xv}{\yv}^{q^j}.
\]
where $\CC^{q^j} \eqdef \{\cv^{q^j}=(c_i^{q^j})_i: \cv \in \CC\}$ is readily seen to be an $\Fqm$-linear code of the same dimension as $\CC$ when $\CC$ is itself an $\Fqm$-linear code
\end{prop}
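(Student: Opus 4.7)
The plan is to combine Delsarte's theorem with a Moore matrix / Frobenius conjugate argument. Starting from the observation that $\Alt{r}{\xv}{\yv} = (\GRS{r}{\xv}{\yv}^\perp)_{\rvert \Fq}$, Theorem~\ref{thm: Delsarte} applied to $\CC = \GRS{r}{\xv}{\yv}^\perp$ gives
\[
\Alt{r}{\xv}{\yv}^\perp = \Trm{\GRS{r}{\xv}{\yv}}.
\]
This identity is the backbone of the proof; everything else is extracting the $\Fqm$-span on both sides.

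First I would prove the inclusion $\left(\Alt{r}{\xv}{\yv}^\perp\right)_{\Fqm} \subseteq \sum_{j=0}^{m-1} \GRS{r}{\xv}{\yv}^{q^j}$. For any $\gv \in \GRS{r}{\xv}{\yv}$ one has $\Trm{\gv} = \sum_{j=0}^{m-1} \gv^{q^j}$, and $\gv^{q^j} \in \GRS{r}{\xv}{\yv}^{q^j}$ essentially by definition (applying Frobenius to both the coefficients of the evaluation polynomial and the multiplier/support). Hence the right-hand side contains $\Trm{\GRS{r}{\xv}{\yv}} = \Alt{r}{\xv}{\yv}^\perp$, and being $\Fqm$-linear it also contains the $\Fqm$-span.

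The reverse inclusion $\sum_{j=0}^{m-1} \GRS{r}{\xv}{\yv}^{q^j} \subseteq \left(\Alt{r}{\xv}{\yv}^\perp\right)_{\Fqm}$ is the point requiring actual work. Fix an $\Fq$-basis $\alpha_1,\dots,\alpha_m$ of $\Fqm$. For any $\gv \in \GRS{r}{\xv}{\yv}$ and any $i$, the vector $\alpha_i \gv$ still belongs to $\GRS{r}{\xv}{\yv}$ (the code is $\Fqm$-linear), so
\[
\Trm{\alpha_i \gv} = \sum_{j=0}^{m-1} \alpha_i^{q^j} \gv^{q^j} \;\in\; \Alt{r}{\xv}{\yv}^\perp \subseteq \left(\Alt{r}{\xv}{\yv}^\perp\right)_{\Fqm}.
\]
These $m$ identities form an $\Fqm$-linear system in the unknowns $\gv, \gv^{q}, \dots, \gv^{q^{m-1}}$ whose coefficient matrix is the Moore matrix $(\alpha_i^{q^j})_{1\leq i,j \leq m}$. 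By a standard result on Moore matrices (or equivalently the fact that the characters $x \mapsto x^{q^j}$ are linearly independent), this matrix is invertible because $\alpha_1,\dots,\alpha_m$ are $\Fq$-linearly independent. Inverting the system expresses each $\gv^{q^j}$ as an $\Fqm$-linear combination of the vectors $\Trm{\alpha_i \gv}$, all lying in $\left(\Alt{r}{\xv}{\yv}^\perp\right)_{\Fqm}$; hence $\gv^{q^j}$ does too. Since $\gv \in \GRS{r}{\xv}{\yv}$ was arbitrary, every $\GRS{r}{\xv}{\yv}^{q^j}$ is contained in $\left(\Alt{r}{\xv}{\yv}^\perp\right)_{\Fqm}$, and the sum follows.

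The only real obstacle is the invertibility of the Moore matrix, which is a classical fact and can be cited rather than reproved. The remark at the end of the statement that $\dim_{\Fqm} \CC^{q^j} = \dim_{\Fqm} \CC$ is immediate from the fact that Frobenius is a field automorphism acting on the coordinates, sending any $\Fqm$-basis of $\CC$ to an $\Fqm$-basis of $\CC^{q^j}$.
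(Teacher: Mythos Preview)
Your proof is correct and follows essentially the same approach as the paper: the paper invokes Delsarte's theorem and then a general lemma $\Tr{\CC}_{\Fqm} = \sum_{i=0}^{m-1} \CC^{q^i}$ (proved in an appendix), and that lemma is established there by precisely the Moore matrix argument you give inline. The only cosmetic difference is that the paper states the trace-code identity once for an arbitrary $\Fqm$-linear code and then specializes, whereas you run the Moore matrix inversion directly for $\GRS{r}{\xv}{\yv}$.
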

\begin{proof}
It follows directly from Proposition~\ref{thm: Delsarte} by taking $\CC\eqdef \GRS{r}{\xv}{\yv}$ and the
fact that we have for any nonnegative integer $j$ and any $\Fqm$-linear code $\CC$  (see Proposition \ref{prop:trace_fqm} in Appendix \ref{sec:extended}).
\[
\Tr{\CC}_{\Fqm} = \CC + \CC^q + \cdots + \CC^{q^{m-1}}.
\]
\end{proof}
We will also need the following straightforward properties
\begin{lemma}\label{lem:R15} \cite[Lemma~2.23]{R15}
	Let $\CC,\DC \subseteq \Fq^n$ be two codes. Then
	\begin{enumerate}
		\item \label{it:dual} $(\CC^\perp)_{\Fqm}=(\CC_{\Fqm})^\perp \subseteq \Fqm^n$.
		\item \label{it:inclusion}$\CC\subseteq \DC \iff \CC_{\Fqm}\subseteq \DC_{\Fqm}$.
		\item \label{it:sum} $(\CC+\DC)_{\Fqm} = \CC_{\Fqm}+\DC_{\Fqm}$ and $(\CC\oplus\DC)_{\Fqm} = \CC_{\Fqm}\oplus\DC_{\Fqm}$.
		\item \label{it:cap} $(\CC\cap\DC)_{\Fqm} = \CC_{\Fqm}\cap\DC_{\Fqm}$.
		\item \label{it:stp} $(\starp{\CC}{\DC})_{\Fqm} = \starp{\CC_{\Fqm}}{\DC_{\Fqm}}$.			
	\end{enumerate}
\end{lemma}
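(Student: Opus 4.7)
The plan is to reduce all five items to a single basic fact: any $\Fq$-basis $\{\bv_1,\dots,\bv_k\}$ of a code $\CC\subseteq\Fq^n$ is automatically an $\Fqm$-basis of $\CC_{\Fqm}$. Spanning is immediate from the definition, and $\Fq$-linear independence of vectors in $\Fq^n$ forces $\Fqm$-linear independence, since linear independence is detected by the non-vanishing of a minor of a matrix with entries in $\Fq$, a property insensitive to the ambient field. Two consequences that I will use throughout are the dimension equality $\dim_{\Fqm}\CC_{\Fqm}=\dim_{\Fq}\CC$ and the descent identity $\CC_{\Fqm}\cap\Fq^n=\CC$, the latter obtained by expanding any element in the fixed basis and observing that an $\Fqm$-combination of $\Fq$-vectors lies in $\Fq^n$ only when all coefficients already lie in $\Fq$.

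With these tools in hand, item~\ref{it:dual} is immediate: $\uv\in(\CC_{\Fqm})^\perp$ is characterized by the equations $\uv\cdot\bv_i=0$, whose coefficients live in $\Fq$, so the $\Fqm$-solution space is the $\Fqm$-span of the $\Fq$-solution space, which by definition is $(\CC^\perp)_{\Fqm}$. Item~\ref{it:inclusion} is trivial in the forward direction, while the converse follows from the chain $\CC\subseteq\CC_{\Fqm}\cap\Fq^n\subseteq\DC_{\Fqm}\cap\Fq^n=\DC$ via the descent identity. For item~\ref{it:sum}, the concatenation of $\Fq$-bases of $\CC$ and $\DC$ spans both $(\CC+\DC)_{\Fqm}$ and $\CC_{\Fqm}+\DC_{\Fqm}$; in the direct-sum case the concatenation is moreover $\Fq$-linearly independent, hence $\Fqm$-linearly independent, so $\CC_{\Fqm}\cap\DC_{\Fqm}=\{0\}$.

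Item~\ref{it:cap} is then easiest to obtain by combining items~\ref{it:dual} and~\ref{it:sum} with the classical identity $\CC\cap\DC=(\CC^\perp+\DC^\perp)^\perp$; alternatively, $(\CC\cap\DC)_{\Fqm}\subseteq\CC_{\Fqm}\cap\DC_{\Fqm}$ is obvious and equality follows from a Grassmann dimension count using dimension preservation on both sides. Finally, for item~\ref{it:stp}, the family $\{\bv_i\star\bv_j'\}_{i,j}$ built from $\Fq$-bases of $\CC$ and $\DC$ $\Fq$-spans $\starp{\CC}{\DC}$; by bilinearity of $\star$ the same family $\Fqm$-spans $\starp{\CC_{\Fqm}}{\DC_{\Fqm}}$, and it $\Fqm$-spans $(\starp{\CC}{\DC})_{\Fqm}$ by definition. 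No step presents a real obstacle; the only point demanding care is the basis-preservation fact highlighted at the outset, which is precisely what guarantees that scalar extension from $\Fq$ to $\Fqm$ commutes so cleanly with all the linear and bilinear operations on codes considered in this paper.
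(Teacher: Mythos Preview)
Your proof is correct and self-contained. The paper does not actually prove this lemma: it is quoted as \cite[Lemma~2.23]{R15} and used as a black box, so there is no ``paper's own proof'' to compare against. Your approach---reducing everything to the observation that an $\Fq$-basis of $\CC$ remains an $\Fqm$-basis of $\CC_{\Fqm}$, together with the descent identity $\CC_{\Fqm}\cap\Fq^n=\CC$---is the standard and natural one, and each of the five items follows cleanly from it as you indicate.
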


\subsection{A few useful notions for algebraic system solving}
We will use here techniques inspired by Gröbner bases computations and it will be useful to recall a few useful notions about this topic.
We will consider here ideals generated by polynomials over some field $\F$ in a certain number of variables, say $X_1,\dots,X_n$.
We will use two orders on the monomials of $\F[X_1,\cdots,X_n]$. The first is just the usual lexicographic order where the single variables 
are ordered as $X_1 > X_2 > \cdots > X_n$ and two monomials $X^\alpha \eqdef X_1^{\alpha_1}\cdots X_n^{\alpha_n}$ and $X^\beta \eqdef X_1^{\beta_1}\cdots X_n^{\alpha_n}$ are ordered as follows: $X^\alpha > X^\beta$ if and only if there exists $i \in \Iintv{1}{n}$ such that 
$\alpha_i > \beta_i$ and $\alpha_j=\beta_j$ for $j <i$. We will also use the {\em graded reverse lexicographical order (grevlex)} meaning that  $X^\alpha > X^\beta$ if and only if there exists $i \in \Iintv{1}{n}$ such that 
$\alpha_i < \beta_i$ and $\alpha_j=\beta_j$ for $j > i$.
We will also need the notion of {\em leading monomial}. The leading monomial $\LM(P)$ of a polynomial (for a given order on the monomials) is the largest monomial appearing in $P$. In other words for $P= \sum_{\alpha \in \Z^n} c_\alpha X^\alpha$ with $P \in \F[X_1,\cdots,X_n]$ where we use the notation $X^\alpha=X_1^{\alpha_1}\cdots X_n^{\alpha_n}$ we have
$\LM(P) = \max\{X^\alpha \mid c_\alpha \neq 0\}$ where the maximum is taken with respect to the order on the monomials which is chosen.

Another notion will be useful for manipulating an algebraic system, namely
\begin{definition}[Macalauy matrix]\label{def:macaulay}
The Macaulay $\macaulay{f_1,\cdots,f_k}$ matrix associated to an algebraic system (and with respect to a certain order of the monomials) specified by the polynomials $(f_1, \dots, f_k)$ is the $k \times M$ matrix where $M$ is the number of different monomials 
$m_1 > m_2 \cdots > m_M$ involved in the $f_i$'s where the entry $m_{ij}$ is the coefficient
of the monomial $m_j$ in $f_i$.
\end{definition}

 \section{The filtration}
\label{sec:filtration}
Assume that $r\ge q+1$.
We are going to explain in this  section  how, starting with the code $\AC_r = \Alt{r}{\xv}{\yv}$,  we are (generally) able to compute in polynomial time a sequence of alternant codes such that 
\begin{equation}
 \label{eq:filtration2}
 \AC_r^\perp = \Alt{r}{\xv}{\yv}^\perp \sss{i_1} \AC_{r-1}^\perp \sss{i_2} \cdots \sss{i_{r-q}}{\AC_q}^\perp,
 \end{equation}
  where all the alternant codes have a support which is easily derived from the support of $\AC_r$, since it just amounts to drop some positions. 
 This is instrumental for recovering efficiently the algebraic structure of the alternant code (i.e. the support $\xv$ and the multiplier $\yv$) in what follows. The core of this attack is the following theorem.
 \begin{thm}\label{thm:main_filtration}
Let $\Alt{r}{\xv}{\yv}$ be an alternant code such that $r\ge q+1$. Let $\CC\eqdef \left(\sh{i}{\Alt{r}{\xv}{\yv}}\right)^\perp, \DC\eqdef
\sqb{\sh{i}{\Alt{r}{\xv}{\yv}^\perp}}$, for an arbitrary position $i$. Then
\[
\starp{\Alt{r-1}{\puv{\xv}{i}}{\puv{\yv}{i}(\puv{\xv}{i}-x_i)}^\perp}{\CC}\subseteq \DC,
\]
or, equivalently,
\[
\cond{\CC}{\DC}\supseteq  \Alt{r-1}{\puv{\xv}{i}}{\puv{\yv}{i}(\puv{\xv}{i}-x_i)}^\perp.
\]
\end{thm}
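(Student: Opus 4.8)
The plan is to work over the extension field $\Fqm$, where all the codes involved become sums of GRS codes and hence polynomial evaluation codes, prove the desired inclusion there, and then descend back to $\Fq$ using Lemma~\ref{lem:R15}. First I would reduce to showing
\[
\starp{\left(\Alt{r-1}{\puv{\xv}{i}}{\puv{\yv}{i}(\puv{\xv}{i}-x_i)}^\perp\right)_{\Fqm}}{\CC_{\Fqm}}\subseteq \DC_{\Fqm},
\]
which by item~\eqref{it:stp} and item~\eqref{it:inclusion} of Lemma~\ref{lem:R15} is equivalent to the statement. Using Proposition~\ref{prop:sh_alt} to rewrite $\sh{i}{\Alt{r}{\xv}{\yv}}=\Alt{r}{\puv{\xv}{i}}{\puv{\yv}{i}}$, Proposition~\ref{prop: sh_pu_dual}, and Proposition~\ref{prop:dual_alt_fqm}, I would express each of the three codes as a (sum of) GRS code(s) over $\Fqm$: namely $\CC_{\Fqm}=\sum_{j=0}^{m-1}\GRS{r}{\puv{\xv}{i}}{\puv{\yv}{i}}^{q^j}$, the left factor $\left(\Alt{r-1}{\puv{\xv}{i}}{\puv{\yv}{i}(\puv{\xv}{i}-x_i)}^\perp\right)_{\Fqm}=\sum_{j=0}^{m-1}\GRS{r-1}{\puv{\xv}{i}}{\puv{\yv}{i}(\puv{\xv}{i}-x_i)}^{q^j}$, and for $\DC$ I would need that shortening commutes with base extension and with the Schur square, so that $\DC_{\Fqm}=\sqb{\sh{i}{\left(\Alt{r}{\xv}{\yv}^\perp\right)_{\Fqm}}}=\sqb{\sh{i}{\sum_{j}\GRS{r}{\xv}{\yv}^{q^j}}}$.

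The combinatorial heart is then a pointwise polynomial computation. A generator of $\GRS{r}{\puv{\xv}{i}}{\puv{\yv}{i}}^{q^j}$ has $\ell$-th coordinate $y_\ell^{q^j}P(x_\ell)^{q^j}$ with $\deg P<r$; a generator of $\GRS{r-1}{\puv{\xv}{i}}{\puv{\yv}{i}(\puv{\xv}{i}-x_i)}^{q^{j'}}$ has $\ell$-th coordinate $y_\ell^{q^{j'}}(x_\ell-x_i)^{q^{j'}}Q(x_\ell)^{q^{j'}}$ with $\deg Q<r-1$, so $(x_\ell-x_i)Q(x_\ell)$ runs over all polynomials of degree $<r$ vanishing at $x_i$. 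Their Schur product has $\ell$-th coordinate $y_\ell^{q^j+q^{j'}}x_\ell^{\text{(lower order)}}\cdots$ — more precisely $\bigl(y_\ell P(x_\ell)\bigr)^{q^j}\cdot\bigl(y_\ell (x_\ell-x_i)Q(x_\ell)\bigr)^{q^{j'}}$. I must show this lies in $\sqb{\sh{i}{\sum_j \GRS{r}{\xv}{\yv}^{q^j}}}$, i.e. is a sum of products of two elements of $\sh{i}{\sum_j \GRS{r}{\xv}{\yv}^{q^j}}$. The key observation is that $\sh{i}{\GRS{r}{\xv}{\yv}^{q^j}}$ consists exactly of the vectors $\bigl(y_\ell^{q^j}R(x_\ell)^{q^j}\bigr)_{\ell\ne i}$ with $\deg R<r$ and $R(x_i)=0$; in particular both $y_\ell^{q^j}P(x_\ell)^{q^j}(x_\ell - x_i)^{q^j}$ (taking $R=(z-x_i)P$, which has degree $<r+1$ — here is where I need to be careful about the degree bound) and $y_\ell^{q^{j'}}(x_\ell-x_i)^{q^{j'}}Q(x_\ell)^{q^{j'}}$ already lie in shortened GRS components, provided the degrees fit. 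When $j=j'$ the product $\bigl(y_\ell^{q^j}(x_\ell-x_i)^{q^j}\cdots\bigr)$ is a square of a single shortened GRS element after absorbing the linear factor suitably; when $j\ne j'$ one uses that a product of an element of $\GRS{a}{\xv}{\yv}^{q^j}$ with one of $\GRS{b}{\xv}{\yv}^{q^{j'}}$ — via the Frobenius-twisted version of \eqref{eq:GRSsquareexpression} — decomposes appropriately. I expect the clean way to organize this is to prove the companion identity over $\Fqm$ that $\sqb{\sh{i}{\sum_j \GRS{r}{\xv}{\yv}^{q^j}}}$ contains $\sum_{j,j'}\GRS{?}{\xv}{\yv\star\yv}^{q^j,q^{j'}}$-type pieces, and check the degree-$r$ polynomial with a forced zero at $x_i$ fits.

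The main obstacle I anticipate is the bookkeeping of degrees: $\Alt{r}{\xv}{\yv}^\perp=\GRS{n-r}{\xv}{\yv^\perp}$, so its shortening at $i$ is $\GRS{n-r-1}{\puv{\xv}{i}}{\cdots}$ — a \emph{high}-dimensional GRS code — and I need to verify that a polynomial of degree $<2(n-r-1)+1$ with appropriate valuation at $x_i$, together with the trace/Frobenius structure, really does land in the square of the shortened dual; equivalently, via duality, that the conductor contains the claimed degree-$(r-1)$ alternant code, i.e. that multiplying by it keeps $\CC$ inside $\DC$. Concretely, the delicate point is tracking how the multiplier $\yv^\perp$ transforms under shortening (the derivative $\pi'_\xv$ picks up the factor $\prod_{k\ne i}(x_i-x_k)$, and $\pi'_{\puv{\xv}{i}}(x_\ell)=\pi'_\xv(x_\ell)/(x_\ell-x_i)$), which is exactly the source of the factor $(\puv{\xv}{i}-x_i)$ appearing in the multiplier of the degree-$(r-1)$ alternant code in the statement. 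I would isolate this as a preliminary lemma: $\sh{i}{\GRS{k}{\xv}{\yv}}=\GRS{k-1}{\puv{\xv}{i}}{\puv{\yv}{i}\star(\puv{\xv}{i}-x_i)^{-1}}$ when $x_i\notin\{x_\ell\}$ — wait, rather the analogous statement for the dual multiplier — and then the rest is matching polynomial degrees coordinatewise. Everything else is the routine Frobenius-twisted GRS Schur-product calculation, done degree by degree and then summed over $0\le j,j'\le m-1$.
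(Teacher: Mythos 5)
Your setup (extending scalars to $\Fqm$, invoking Lemma~\ref{lem:R15} and Proposition~\ref{prop:dual_alt_fqm} to turn everything into sums of Frobenius twists of GRS codes) is the same first move as the paper's, but the sketch stops exactly where the real difficulty begins, and the place you flag as ``where I need to be careful about the degree bound'' is not bookkeeping --- it is the whole theorem. Concretely: a generator of $\CC_{\Fqm}$ has coordinates $\bigl(y_\ell P(x_\ell)\bigr)^{q^j}$ with $\deg P<r$ but \emph{no} vanishing condition at $x_i$, whereas membership in $\sh{i}{\GRS{r}{\xv}{\yv}^{q^j}}$ would force $R=(z-x_i)P$ of degree $<r+1$, one too many. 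Splitting $P=(z-x_i)\tilde P+P(x_i)$, the part with $\tilde P$ does land in $\DC_{\Fqm}$ as a product of two shortened factors, but the constant part leaves you with exactly $\starp{\BC_{\Fqm}}{\Fqmspan{\puv{\yv}{i}}}$ (this is the content of Lemma~\ref{lem:decomposition}: $\CC_{\Fqm}$ is the extended shortened dual plus the one extra dimension $\Fqmspan{\puv{\yv}{i}}$), and your proposal has no mechanism for absorbing this term. Your fallback, that a product of elements of $\GRS{a}{\xv}{\yv}^{q^j}$ and $\GRS{b}{\xv}{\yv}^{q^{j'}}$ with $j\ne j'$ ``decomposes appropriately via a Frobenius-twisted version of \eqref{eq:GRSsquareexpression}'', is not available: for distinct twists the Schur product is not a GRS code and no analogue of \eqref{eq:GRSsquareexpression} applies.

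The tell-tale sign of the gap is that your argument never uses the hypothesis $r\ge q+1$. The paper needs two further ideas absent from your sketch. First, $\sh{i}{(\Alt{r}{\xv}{\yv}^\perp)_{\Fqm}}$ contains not only $\BC_{\Fqm}$ but also the cross-term code $\CC_0=\Fqmspan{y_i^{q^u}\puv{\yv}{i}^{q^v}-y_i^{q^v}\puv{\yv}{i}^{q^u}}$, made of elements of the sum of twisted GRS codes that vanish at $i$ without any single component vanishing; these are invisible if, as you do, one only shortens component by component. Second, a reduction procedure modulo $\starp{\BC_{\Fqm}}{\CC_0}$ and $\sq{\BC_{\Fqm}}$ brings $\puv{\xv}{i}^{aq^\ell}(\puv{\xv}{i}-x_i)^{q^\ell}\puv{\yv}{i}^{q^\ell+1}$ down to $(\puv{\xv}{i}-x_i)^{q^\ell}\puv{\yv}{i}^{2q^\ell}$ and then rewrites $(\puv{\xv}{i}-x_i)^{q^\ell}=\bigl((\puv{\xv}{i}-x_i)^{q}\bigr)^{q^{\ell-1}}$ (exponent mod $m$) so as to split the single factor $(\puv{\xv}{i}-x_i)$ across the two Frobenius slots of $\sq{\BC_{\Fqm}}$; it is precisely here that $\deg\bigl((X-x_i)^{q-1}\bigr)=q-1\le r-2$, i.e.\ $r\ge q+1$, is required. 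Without these two steps the inclusion does not follow from degree counting, and indeed it fails without the hypothesis.
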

Experimentally it turns out that we actually have equality here $\cond{\CC}{\DC} =  \Alt{r-1}{\puv{\xv}{i}}{\puv{\yv}{i}(\puv{\xv}{i}-x_i)}^\perp$ when choosing a random alternant code. It is however possible to build artificial examples where equality does not hold. Notably, we also found that the subfamily of Goppa codes does not meet this property either. However, if $\xv$ and $\yv$ are sampled at random, we never met a case in our experiments where equality does not hold. This leads us to state the following conjecture.
\begin{conjec} \label{heur:main_filtration}
Let $\Alt{r}{\xv}{\yv}$ be a random alternant code over $\Fq$, such that $r\ge q+1$ and $\sqb{\Alt{r}{\xv}{\yv}^\perp}$ is not the full code. 
Let $\CC\eqdef \left(\sh{i}{\Alt{r}{\xv}{\yv}}\right)^\perp$ and $\DC\eqdef\sqb{\sh{i}{\Alt{r}{\xv}{\yv}^\perp}}$, for an arbitrary position $i$. Then, we expect that
\[
\cond{\CC}{\DC} = \Alt{r-1}{\puv{\xv}{i}}{\puv{\yv}{i}(\puv{\xv}{i}-x_i)}^\perp.
\]
with  probability $1-o(1)$ as $n \to \infty$.
\end{conjec}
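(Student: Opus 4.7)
The strategy is to combine the inclusion already supplied by Theorem~\ref{thm:main_filtration} with a matching upper bound on $\dim \cond{\CC}{\DC}$. Using the conductor formula $\cond{\CC}{\DC}=\dual{(\starp{\CC}{\dual{\DC}})}$ from \cite{CMP17}, the claimed identity is equivalent to the dimension equality
\[
\dim\bigl(\starp{\CC}{\dual{\DC}}\bigr) \;=\; n-1-(r-1)m,
\]
where $(r-1)m$ is the generic dimension of $\dual{\Alt{r-1}{\puv{\xv}{i}}{\puv{\yv}{i}(\puv{\xv}{i}-x_i)}}$. The hypothesis that $\sqb{\Alt{r}{\xv}{\yv}^\perp}$ is not the full code is used to guarantee both that $\DC \ne \fq^{n-1}$ and that $(r-1)m < n-1$, so this target dimension is admissible.

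The first step is to lift the whole picture to $\Fqm$. Items~\ref{it:dual} and~\ref{it:stp} of Lemma~\ref{lem:R15} imply that conductors commute with the base field extension, so it suffices to prove the dimension equality for $\cond{\CC_{\Fqm}}{\DC_{\Fqm}}$. Using Propositions~\ref{prop:sh_alt}, \ref{prop: sh_pu_dual} and~\ref{prop:dual_alt_fqm}, both $\CC_{\Fqm}$ and $\DC_{\Fqm}$ unfold into explicit finite sums of Frobenius powers of GRS codes on the support $\puv{\xv}{i}$, with $\DC_{\Fqm}$ being moreover a square of such a sum. Expanding the star-product of $\CC_{\Fqm}$ against the dual of $\DC_{\Fqm}$ in these terms reduces the computation to a finite collection of cross products of the form $\starp{\GRS{r}{\puv{\xv}{i}}{\cdot}^{q^a}}{\GRS{r}{\puv{\xv}{i}}{\cdot}^{q^b}}$, each of which is itself a Frobenius-twisted GRS code.

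The algebraic heart of the argument would then be the GRS-filtration identity \eqref{eq:attackCOT} from \cite{CGGOT14}: the conductor of a GRS code into the square of its shortening at one position is the GRS code of degree one smaller, with multiplier shifted exactly by the factor $(\puv{\xv}{i}-x_i)$. Applying this identity Frobenius component by Frobenius component and descending back to $\Fq$ via Delsarte's theorem (Theorem~\ref{thm: Delsarte}) reconstructs $\dual{\Alt{r-1}{\puv{\xv}{i}}{\puv{\yv}{i}(\puv{\xv}{i}-x_i)}}$ on the nose; combined with the reverse inclusion given by Theorem~\ref{thm:main_filtration}, equality follows as long as the various Frobenius summands sit in sufficiently general position.

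The main obstacle, and the reason the statement is posed as a conjecture, is precisely to quantify this general-position requirement. Parasitic codewords could enter $\cond{\CC}{\DC}$ through unexpected linear dependencies among the Frobenius-shifted GRS factors, and such collisions are known to occur for non-random multipliers: the Goppa subfamily is exactly where $\yv$ is algebraically constrained, and there the conjecture fails. I would try to encode every such forbidden collision as the vanishing of an explicit nonzero polynomial in the entries of $(\xv,\yv)$ and then apply a union bound over a (hopefully) finite list of exceptional hypersurfaces to conclude that uniformly random $(\xv,\yv)$ satisfies all the necessary nonvanishing conditions with probability $1-o(1)$. Turning this enumeration into a quantitative bound, in the spirit of the square-code dimension analysis of \cite{MT21}, is the genuinely hard step and seems to require substantially new ideas beyond what is presented here.
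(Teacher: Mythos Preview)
The statement you are addressing is a \emph{conjecture} in the paper, not a theorem: the paper offers no proof whatsoever. The only support given is experimental (``if $\xv$ and $\yv$ are sampled at random, we never met a case in our experiments where equality does not hold''), together with the remark that Goppa codes furnish counterexamples. So there is nothing to compare your argument against; you have correctly identified that the equality direction is open, and your final paragraph is an honest assessment of where the difficulty lies.

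That said, your sketch of the would-be proof oversimplifies one point. When you write that $\DC_{\Fqm}$ ``unfold[s] into explicit finite sums of Frobenius powers of GRS codes'' and that one can then apply the GRS conductor identity \eqref{eq:attackCOT} ``Frobenius component by Frobenius component'', you are glossing over the cross terms $\starp{\GRS{r}{\cdot}{\cdot}^{q^a}}{\GRS{r}{\cdot}{\cdot}^{q^b}}$ with $a\neq b$ that appear in the square. These cross terms are not themselves GRS codes in any useful sense, and they are precisely what makes even the \emph{inclusion} direction (Theorem~\ref{thm:main_filtration}) nontrivial: look at how Lemma~\ref{lem:BC} has to route elements of $\starp{\BC_{\Fqm}}{\Fqmspan{\puv{\yv}{i}}}$ through a chain of reductions in $\DC_0$ and $\DC_1$, and note that this is exactly where the hypothesis $r\ge q+1$ enters. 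A genuine proof of the reverse inclusion would have to control these mixed-Frobenius pieces, not bypass them, and this is a deeper obstruction than the Zariski-genericity/union-bound difficulty you describe at the end.
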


It is clear that this conjecture, if true, allows to compute in polynomial time the filtration \eqref{eq:filtration2}, since computing conductors just amounts to solve a linear system. Moreover, if $\xv,\yv$ are taken at random, then we expect a random behaviour for $\puv{\xv}{i}$ and $\puv{\yv}{i}(\puv{\xv}{i}-x_i)$ as well. The fact that when taking a random alternant code, the whole filtration can be computed has indeed been verified experimentally. The first conductor $\AC_{r-1} = \Alt{r-1}{\puv{\xv}{i_1}}{\puv{\yv}{i_1}(\puv{\xv}{i_1}-x_{i_1})}^\perp$ is computed by using directly Conjecture \ref{heur:main_filtration} and we iterate the process by choosing a sequence of positions $i_1$, $i_2,\cdots,i_{r_q}$ by which we shorten. We let
$\Ical_s =\{i_1,\cdots,i_s\}$. It is readily seen that we compute iteratively from
$\AC_{r-s+1}^\perp \eqdef \Alt{r-s+1}{\puv{\xv}{\Ical_{s-1}}}{\puv{\yv}{\Ical_{s-1}}\prod_{j=1}^{s-1}(\puv{\xv}{\Ical_{s-1}}-x_{i_j})}^\perp$ the code
\[
\AC_{r-s}^\perp \eqdef \Alt{r-s}{\puv{\xv}{\Ical_{s}}}{\puv{\yv}{\Ical_{s}}\prod_{j=1}^{s}(\puv{\xv}{\Ical_{s}}-x_{i_j})}^\perp.
\]
This allows to decrease the degree of the alternant one by one. The last step ends by using the conjecture with $r=q+1$ and ends with the conductor $\AC_q^\perp$. Let us now prove Theorem \ref{thm:main_filtration}.

\subsection{Proof of Theorem \ref{thm:main_filtration}}

It will be convenient to prove a slightly stronger result which implies Theorem \ref{thm:main_filtration}. It is based on the observation that
$\Alt{r-1}{\puv{\xv}{i}}{\puv{\yv}{i}(\puv{\xv}{i}-x_i)}^\perp\subseteq \sh{i}{\Alt{r}{\xv}{\yv}^\perp}$. Theorem~\ref{thm:main_filtration} is indeed implied by the following slightly stronger result:
\begin{thm}\label{thm:equivalent}
Let $\Alt{r}{\xv}{\yv}$ be an alternant code such that $r\ge q+1$. Let $\CC\eqdef \left(\sh{i}{\Alt{r}{\xv}{\yv}}\right)^\perp$ and $\DC'\eqdef\starp{\Alt{r-1}{\puv{\xv}{i}}{\puv{\yv}{i}(\puv{\xv}{i}-x_i)}^\perp}{\sh{i}{\Alt{r}{\xv}{\yv}^\perp}}$, for an arbitrary position $i$. Then
\[
\starp{\Alt{r-1}{\puv{\xv}{i}}{\puv{\yv}{i}(\puv{\xv}{i}-x_i)}^\perp}{\CC}\subseteq \DC',
\]
or, equivalently,
\[
\cond{\CC}{\DC'}\supseteq  \Alt{r-1}{\puv{\xv}{i}}{\puv{\yv}{i}(\puv{\xv}{i}-x_i)}^\perp.
\]
\end{thm}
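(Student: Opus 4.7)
Write $\xv' \eqdef \puv{\xv}{i}$, $\yv' \eqdef \puv{\yv}{i}$, $A_1 \eqdef \GRS{r-1}{\xv'}{\yv'(\xv'-x_i)}$ and $\CC_1 \eqdef \GRS{r}{\xv'}{\yv'}$. The plan is to peel off successive structural layers until the claimed inclusion is reduced to a single residual one of the form $\yv' \star A_{\Fqm} \subseteq \DC'_{\Fqm}$, in which the hypothesis $r \geq q+1$ enters through an explicit polynomial identity. By Lemma~\ref{lem:R15}(\ref{it:stp},\ref{it:inclusion}) it suffices to prove the claim after extending scalars to $\Fqm$. Proposition~\ref{prop:dual_alt_fqm}, together with the fact that shortening commutes with extension of scalars (it is defined by a linear equation), gives $A_{\Fqm} = \sum_{j=0}^{m-1} A_1^{q^j}$, $\CC_{\Fqm} = \sum_{j=0}^{m-1} \CC_1^{q^j}$, and $\sh{i}{\Alt{r}{\xv}{\yv}^\perp}_{\Fqm} = \sh{i}{\sum_{j=0}^{m-1}\GRS{r}{\xv}{\yv}^{q^j}}$.

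A first decomposition uses polynomial division by $(z-x_i)$: every $P \in \Fqm[z]_{<r}$ splits uniquely as $P(z) = P(x_i) + (z - x_i) Q(z)$ with $\deg Q < r - 1$, which yields $\CC_1 = A_1 \oplus \Fqm\,\yv'$ and, at the level of $\GRS{r}{\xv}{\yv}$, the inclusion $A \subseteq \sh{i}{\Alt{r}{\xv}{\yv}^\perp}$ noted just before the theorem. Taking $q^j$-th powers and summing, $\CC_{\Fqm} = A_{\Fqm} + \sum_{j=0}^{m-1}\Fqm\,\yv'^{q^j}$, whence
\[
A_{\Fqm} \star \CC_{\Fqm} = \sq{A}_{\Fqm} + \sum_{j=0}^{m-1}\yv'^{q^j}\star A_{\Fqm}.
\]
The first summand already lies in $\DC'_{\Fqm}$ (since $A \subseteq \sh{i}{\Alt{r}{\xv}{\yv}^\perp}$), so the task reduces to showing $\yv'^{q^j}\star A_{\Fqm} \subseteq \DC'_{\Fqm}$ for each $j$.

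A second decomposition determines $\sh{i}{\Alt{r}{\xv}{\yv}^\perp}_{\Fqm}$ modulo $A_{\Fqm}$. Writing each $P_j \in \Fqm[z]_{<r}$ appearing in a shortened element $\sum_j \mat{g}_j$ (with $\mat{g}_j \in \GRS{r}{\xv}{\yv}^{q^j}$ and $\sum_j (\mat{g}_j)_i = 0$) as $P_j(z) = P_j(x_i) + (z - x_i)\tilde P_j(z)$, the $(z-x_i)\tilde P_j$ contributions fall into $A_{\Fqm}$, while the constant parts contribute $\sum_j \eta_j\,y_i^{-q^j}\yv'^{q^j}$ under the constraint $\sum_j \eta_j = 0$. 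This identifies
\[
\sh{i}{\Alt{r}{\xv}{\yv}^\perp}_{\Fqm} = A_{\Fqm} + \Fqmspan{y_i^{-q^j}\yv'^{q^j} - y_i^{-1}\yv' \;\big|\; 1 \leq j \leq m-1},
\]
and in particular $\yv'^{q^j} \equiv y_i^{q^j-1}\,\yv' \pmod{\sh{i}{\Alt{r}{\xv}{\yv}^\perp}_{\Fqm}}$ for every $j \geq 1$. Star-multiplying by $A_{\Fqm}$ collapses all cases $j \geq 1$ into the case $j = 0$, so that only the residual inclusion $\yv' \star A_{\Fqm} \subseteq \DC'_{\Fqm}$ needs to be established.

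The main obstacle is precisely this last inclusion, and it is here that $r \geq q+1$ is essential. Since $q$ is a power of $\mathrm{char}(\Fq)$, the polynomial $(z-x_i)^q - (z-x_i)$ has degree $q$ and belongs to $\Fqm[z]_{<r}$ exactly when $r \geq q+1$. This furnishes the codeword $\yv \star\big((\xv-x_i)^q - (\xv-x_i)\big)$ of $\GRS{r}{\xv}{\yv}$, which vanishes at position $i$, and whose Frobenius conjugates, combined with the mixed-constant codewords of the previous paragraph, produce further elements of $\sh{i}{\Alt{r}{\xv}{\yv}^\perp}_{\Fqm}$ that nontrivially involve the multiplier vector $\yv'$ and its Frobenius twists. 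The technical heart of the argument will consist in using these specific codewords, together with the multiplicative structure of GRS codes, to exhibit for every $\av \in A_1^{q^j}$ an explicit decomposition $\yv' \star \av = \sum_\ell \mat{b}_\ell \star \mat{a}_\ell$ with $\mat{b}_\ell \in \sh{i}{\Alt{r}{\xv}{\yv}^\perp}_{\Fqm}$ and $\mat{a}_\ell \in A_{\Fqm}$, thereby closing the chain of reductions; descending back to $\Fq$ via Lemma~\ref{lem:R15} then concludes.
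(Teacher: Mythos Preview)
Your overall strategy parallels the paper's closely: extending scalars to $\Fqm$, decomposing $\CC_{\Fqm}$, and reducing to a single residual inclusion $\yv'\star A_{\Fqm}\subseteq\DC'_{\Fqm}$ are all valid, and you arrive at exactly the task the paper isolates in Lemma~\ref{lem:BC}. Your second decomposition even proves the equality $\sh{i}{\Alt{r}{\xv}{\yv}^\perp}_{\Fqm}=A_{\Fqm}+\CC_0$ that the paper only mentions in passing. Your first decomposition $\CC_{\Fqm}=A_{\Fqm}+\sum_j\Fqm\,\yv'^{q^j}$ is a minor variant of the paper's Lemma~\ref{lem:decomposition}, namely $\CC_{\Fqm}=(\sh{i}{\cdots})_{\Fqm}\oplus\Fqm\,\yv'$; the latter lands directly on the $j=0$ case whereas you need an extra step via the congruences $\yv'^{q^j}\equiv y_i^{q^j-1}\yv'$, but both routes are correct and reach the same residual problem.

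The gap is that you do not actually prove this residual inclusion: the sentence ``the technical heart of the argument will consist in\ldots'' promises a construction but never gives one. Worse, the specific ingredient you highlight does not help. The codeword $\yv\star\big((\xv-x_i)^q-(\xv-x_i)\big)$, after puncturing at $i$, equals $\yv'(\xv'-x_i)\big((\xv'-x_i)^{q-1}-1\big)$, which already lies in $A_1$ since $(z-x_i)^{q-1}-1$ has degree $q-1\le r-2$; its Frobenius conjugates therefore contribute nothing to $\sh{i}{\cdots}_{\Fqm}$ beyond $A_{\Fqm}$, and you are left with no new leverage. The paper's actual argument for Lemma~\ref{lem:BC} is a concrete chain of reductions alternating between the subcodes $\DC_0\eqdef A_{\Fqm}\star\CC_0$ and $\DC_1\eqdef\sq{A}_{\Fqm}$ of $\DC'_{\Fqm}$: a $\DC_0$-step swaps one summand in the exponent of $\yv'$ (taking $\yv'^{q^\ell+1}$ to $\yv'^{2q^\ell}$), a $\DC_1$-step strips the polynomial factor via Euclidean division by $z-x_i$, and the decisive move is the Frobenius identity $(\xv'-x_i)^{q^\ell}=\big((\xv'-x_i)^q\big)^{q^{\ell-1}}$, which shifts the index down and, after two further $\DC_0$-steps, lands the expression in $\DC_1$. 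The hypothesis $r\ge q+1$ enters because the intermediate polynomial $(z-x_i)^{q-1}$ must have degree $\le r-2$ for those $\DC_0$-reductions to be legal; it is this bound, not the vanishing of $(z-x_i)^q-(z-x_i)$ at $z=x_i$, that makes the proof go through.
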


To prove this theorem, it will be convenient to consider the extensions of all these codes to $\Fqm$, in other words we are going to prove that
\begin{equation}\label{eq:equivalent}
	\starp{\BC_{\Fqm}}{\CC_{\Fqm}}\subseteq \DC'_{\Fqm},
\end{equation}
where $\BC \eqdef \Alt{r-1}{\puv{\xv}{i}}{\puv{\yv}{i}(\puv{\xv}{i}-x_i)}^\perp$. The point of doing this, is that  (i) it is equivalent to prove \eqref{eq:equivalent} because of the points \ref{it:inclusion} and \ref{it:stp} of Lemma \ref{lem:R15}, (ii) the extended codes can be expressed as a sum of GRS codes due to Proposition \ref{prop:dual_alt_fqm}.
The proof of Theorem \ref{thm:equivalent} will proceed by following the steps below.
\begin{itemize}
\item[{\bf Step 1:}]We first observe that the code $\CC_{\Fqm} = \left( \left(\sh{i}{\Alt{r}{\xv}{\yv}}\right)^\perp\right)_{\Fqm}
=\left( \Alt{r}{\puv{\xv}{i}}{\puv{\yv}{i}}^\perp \right)_{\Fqm}$ (where the last equality follows from 
Proposition \ref{prop:sh_alt}) decomposes as
\[
\left( \Alt{r}{\puv{\xv}{i}}{\puv{\yv}{i}}^\perp \right)_{\Fqm} = \left( \sh{i}{\Alt{r}{\xv}{\yv}^\perp} \right)_{\Fqm} \oplus \Fqmspan{\puv{\yv}{i}}.
\]
This is Lemma \ref{lem:decomposition} below. This implies that 
\[
\starp{\BC_{\Fqm}}{\CC_{\Fqm}} = \underbrace{\starp{\BC_{\Fqm}}{\left( \sh{i}{\Alt{r}{\xv}{\yv}^\perp} \right)_{\Fqm}}}_{\DC'_{\Fqm}} +\starp{\BC_{\Fqm}}{\Fqmspan{\puv{\yv}{i}}}.
\]
Therefore in order to prove \eqref{eq:equivalent} it will be enough to prove the inclusion 
\begin{equation}
\label{eq:step1} \starp{\BC_{\Fqm}}{\Fqmspan{\puv{\yv}{i}}} \subseteq  \DC'_{\Fqm}.
\end{equation}
\item[{\bf Step 2:}] To achieve this purpose, we then prove that the extended shortened code $\left( \sh{i}{\Alt{r}{\xv}{\yv}^\perp}\right)_{\Fqm}$ contains
as a subcode $ \BC_{\Fqm}=\left(\Alt{r-1}{\puv{\xv}{i}}{(\puv{\xv}{i}-x_i)\puv{\yv}{i}}^\perp\right)_{\Fqm}$ (Lemma \eqref{lem:subcodes}) on one hand and
$\CC_0  \eqdef  \Fqmspan{y_i^{q^u}\puv{\yv}{i}^{q^v}-y_i^{q^v}\puv{\yv}{i}^{q^u}:\;u,v \in \Iintv{0}{m-1} }$ on the other hand. Actually more is true, namely that the extended shortened code is a sum of these two subcodes but we will not need this.
\item[{\bf Step 3:}] By using this, in order to prove \eqref{eq:step1} we will start with an element in $ \starp{\BC_{\Fqm}}{\Fqmspan{\puv{\yv}{i}}}$ and by adding suitable elements of $\starp{\BC_{\Fqm}}{\CC_0}$ and $\sq{\BC_{\Fqm}}$ we will show that we end with an element in $\DC'_{\Fqm}=\starp{\BC_{\Fqm}}{\left(\sh{i}{\Alt{r}{\xv}{\yv}^\perp}\right)_{\Fqm}}$.
\end{itemize}
Let us now state and prove the lemmas we have mentioned above.
\begin{lemma}\label{lem:decomposition} 
\[
 \left( \Alt{r}{\puv{\xv}{i}}{\puv{\yv}{i}}^\perp \right)_{\Fqm} = \left( \sh{i}{\Alt{r}{\xv}{\yv}^\perp} \right)_{\Fqm} \oplus \Fqmspan{\puv{\yv}{i}}.
\]
\end{lemma}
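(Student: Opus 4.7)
The plan is to first transport the identity into a more workable form. By Proposition~\ref{prop:sh_alt} we have $\sh{i}{\Alt{r}{\xv}{\yv}} = \Alt{r}{\puv{\xv}{i}}{\puv{\yv}{i}}$, and by Proposition~\ref{prop: sh_pu_dual}, dualising gives $\Alt{r}{\puv{\xv}{i}}{\puv{\yv}{i}}^\perp = \pu{i}{\Alt{r}{\xv}{\yv}^\perp}$. The lemma is therefore equivalent to
\[
\pu{i}{\Alt{r}{\xv}{\yv}^\perp}_{\Fqm} \;=\; \sh{i}{\Alt{r}{\xv}{\yv}^\perp}_{\Fqm} \oplus \Fqmspan{\puv{\yv}{i}}.
\]
I would establish the set equality first and then verify that the sum is direct.

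For set equality, the inclusion ``$\supseteq$'' is the easy one: any shortened codeword is automatically a punctured codeword, and Proposition~\ref{prop:dual_alt_fqm} shows that $\yv \in \GRS{r}{\xv}{\yv} \subseteq \Alt{r}{\xv}{\yv}^\perp_{\Fqm}$ (via the constant polynomial $P=1$), so its puncturing $\puv{\yv}{i}$ sits in $\pu{i}{\Alt{r}{\xv}{\yv}^\perp}_{\Fqm}$. For ``$\subseteq$'', I would pick $\vv \in \pu{i}{\Alt{r}{\xv}{\yv}^\perp}_{\Fqm}$, lift it to $\tilde{\vv} \in \Alt{r}{\xv}{\yv}^\perp_{\Fqm}$ (puncturing commutes with base extension), set $\alpha \eqdef \tilde{v}_i / y_i$ (valid because $y_i \ne 0$), and observe that $\tilde{\vv} - \alpha \yv$ still lies in $\Alt{r}{\xv}{\yv}^\perp_{\Fqm}$ and vanishes at position $i$. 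Puncturing then yields $\vv - \alpha\puv{\yv}{i} \in \sh{i}{\Alt{r}{\xv}{\yv}^\perp}_{\Fqm}$, which gives the required decomposition of $\vv$.

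The hard part is verifying that the sum is direct. Assume $\alpha \puv{\yv}{i} \in \sh{i}{\Alt{r}{\xv}{\yv}^\perp}_{\Fqm}$: a lift produces $\cv \in \Alt{r}{\xv}{\yv}^\perp_{\Fqm}$ with $c_i = 0$ and $\puv{\cv}{i} = \alpha \puv{\yv}{i}$, so $\alpha\yv - \cv = \alpha y_i \ev_i$ belongs to $\Alt{r}{\xv}{\yv}^\perp_{\Fqm}$, where $\ev_i$ is the $i$-th canonical basis vector. Using Lemma~\ref{lem:R15}(\ref{it:dual}) to rewrite this as $(\Alt{r}{\xv}{\yv}_{\Fqm})^\perp$, the condition reads that every codeword of $\Alt{r}{\xv}{\yv}$ satisfies $\alpha c_i = 0$. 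Forcing $\alpha = 0$ from here requires the non-degeneracy of $\Alt{r}{\xv}{\yv}$ at position $i$, i.e.\ the existence of a codeword with non-zero $i$-th coordinate; this is the main obstacle and will follow from a rank argument, where Proposition~\ref{prop:sh_alt} together with the expected dimension formula yields $\dim \sh{i}{\Alt{r}{\xv}{\yv}} = \dim \Alt{r}{\xv}{\yv} - 1$ and hence guarantees such a codeword.
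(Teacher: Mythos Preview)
Your approach is essentially the same as the paper's: both identify the left-hand side with $\pu{i}{\Alt{r}{\xv}{\yv}^\perp}_{\Fqm}$ and split it as the shortened code plus the line through $\puv{\yv}{i}$. The paper carries out the decomposition \emph{before} puncturing, writing $\left(\Alt{r}{\xv}{\yv}^\perp\right)_{\Fqm}=\AC_0\oplus\Fqmspan{\yv}$ where $\AC_0$ is the subspace of codewords vanishing at $i$ (directness there is immediate from $y_i\ne 0$), and then punctures; it does not explicitly check that the direct sum survives puncturing. You work after puncturing and are right that this survival amounts to the non-degeneracy of $\Alt{r}{\xv}{\yv}$ at position $i$, i.e.\ $\ev_i\notin\left(\Alt{r}{\xv}{\yv}^\perp\right)_{\Fqm}$. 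That is a genuine point the paper glosses over; your justification via the expected dimension drop $\dim\sh{i}{\Alt{r}{\xv}{\yv}}=\dim\Alt{r}{\xv}{\yv}-1$ is exactly the implicit genericity assumption in force throughout the paper, so your argument is at least as rigorous as the original. Calling this ``the hard part'' is perhaps an overstatement---in the random-alternant regime it is automatic---but your instinct to isolate it is sound.
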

\begin{proof}
Note that 
$\left(\Alt{r}{\xv}{\yv}^\perp\right)_{\Fqm}$ decomposes as the set of codewords $\AC_0$ that are zero in $i$ (this
is $\left( \sh{i}{\Alt{r}{\xv}{\yv}^\perp} \right)_{\Fqm}$ where add an extra-position at $i$ which is always 0) plus a 
space of dimension $1$ generated by an element of $\left(\Alt{r}{\xv}{\yv}^\perp\right)_{\Fqm}$ which is not equal to $0$ at position 
$i$. $\yv$ is clearly such an element and we can write
\[
\left(\Alt{r}{\xv}{\yv}^\perp\right)_{\Fqm} = \AC_0 \oplus \Fqmspan{\yv}.
\]
By puncturing these codes at $i$ we get our lemma.
\end{proof}
\begin{lemma}\label{lem:subcodes}
We have for any position $i$
\begin{eqnarray}
\left( \sh{i}{\Alt{r}{\xv}{\yv}^\perp}\right)_{\Fqm}  & \supseteq & \left(\Alt{r-1}{\puv{\xv}{i}}{(\puv{\xv}{i}-x_i)\puv{\yv}{i}}^\perp\right)_{\Fqm}  \label{eq:altsubcode}\\
 \left( \sh{i}{\Alt{r}{\xv}{\yv}^\perp}\right)_{\Fqm}  & \supseteq &\CC_0\;\;\text{where} \label{eq:Czerosubcode}\\
\CC_0 & \eqdef & \Fqmspan{y_i^{q^u}\puv{\yv}{i}^{q^v}-y_i^{q^v}\puv{\yv}{i}^{q^u}\mid\;u,v \in \Iintv{0}{m-1} }.\label{eq:defCzero}
\end{eqnarray}
\end{lemma}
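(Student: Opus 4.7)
The plan is to prove both inclusions by pulling back everything to a sum of GRS codes over $\Fqm$ via Proposition \ref{prop:dual_alt_fqm}. As a preliminary step, I note that shortening at position $i$ commutes with base field extension: being zero at coordinate $i$ is a linear condition defined over $\Fq$, and the puncturing that follows is itself a linear operation, so the claim follows routinely from parts~\ref{it:cap} and the remarks surrounding Lemma~\ref{lem:R15}. Combining this with Proposition~\ref{prop:dual_alt_fqm} yields
\[
\left(\sh{i}{\Alt{r}{\xv}{\yv}^\perp}\right)_{\Fqm} = \sh{i}{\sum_{j=0}^{m-1} \GRS{r}{\xv}{\yv}^{q^j}} \supseteq \sum_{j=0}^{m-1} \sh{i}{\GRS{r}{\xv}{\yv}^{q^j}},
\]
and both target subspaces in \eqref{eq:altsubcode} and \eqref{eq:Czerosubcode} will be shown to sit inside the right-hand side.

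For \eqref{eq:altsubcode}, the idea is that dividing out by $(x-x_i)$ is exactly the ``shortening at $i$'' operation on the polynomial side. Concretely, I take an arbitrary codeword $\cv \in \GRS{r-1}{\puv{\xv}{i}}{(\puv{\xv}{i}-x_i)\puv{\yv}{i}}$, written as $c_k = (x_k-x_i)y_k R(x_k)$ with $\deg R < r-1$. Setting $Q(x)\eqdef(x-x_i)R(x)$ gives a polynomial of degree $< r$ vanishing at $x_i$, so the length-$n$ vector $(y_k Q(x_k))_k$ lies in $\GRS{r}{\xv}{\yv}$, has $0$ in position $i$, and punctures at $i$ to $\cv$. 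This shows $\cv \in \sh{i}{\GRS{r}{\xv}{\yv}}$. Componentwise $q^j$-th powers preserve both the GRS membership and the vanishing at position $i$, so the same argument gives the analogous inclusion for each Frobenius conjugate; summing over $j$ and applying Proposition~\ref{prop:dual_alt_fqm} to the left-hand side yields \eqref{eq:altsubcode}.

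For \eqref{eq:Czerosubcode}, I exploit that the multiplier vector itself is a low-degree codeword: taking the constant polynomial $P=1$ shows $\yv \in \GRS{r}{\xv}{\yv}$ (using $r \geq q+1 \geq 2$), hence $\yv^{q^u} \in \GRS{r}{\xv}{\yv}^{q^u} \subseteq \left(\Alt{r}{\xv}{\yv}^\perp\right)_{\Fqm}$ for every $u \in \Iintv{0}{m-1}$, and similarly for $\yv^{q^v}$. The $\Fqm$-linear combination $y_i^{q^u}\yv^{q^v} - y_i^{q^v}\yv^{q^u}$ then still belongs to $\left(\Alt{r}{\xv}{\yv}^\perp\right)_{\Fqm}$, and its $i$-th coordinate is $y_i^{q^u}y_i^{q^v} - y_i^{q^v}y_i^{q^u} = 0$ by inspection. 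It therefore survives the shortening at $i$ and restricts to the generator $y_i^{q^u}\puv{\yv}{i}^{q^v}-y_i^{q^v}\puv{\yv}{i}^{q^u}$ of $\CC_0$. Taking the $\Fqm$-span over all pairs $(u,v)$ gives \eqref{eq:Czerosubcode}. The entire argument is essentially bookkeeping once the GRS decomposition is in place; the only subtlety worth flagging is the compatibility of shortening with base field extension, which is the reason the passage from $\Fq$ to $\Fqm$ does not lose information.
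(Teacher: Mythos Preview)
Your proof is correct and follows essentially the same approach as the paper: both arguments reduce to the observation that shortening commutes with base field extension, then exhibit the generators of the two target subspaces as elements of $\left(\Alt{r}{\xv}{\yv}^\perp\right)_{\Fqm}$ that vanish at position $i$ (the paper does this via the explicit span $\Fqmspan{\xv^{aq^\ell}\yv^{q^\ell}}$, you via the equivalent GRS description). One cosmetic remark: your parenthetical ``using $r\ge q+1\ge 2$'' is unnecessary here, since $\yv\in\GRS{r}{\xv}{\yv}$ only requires $r\ge 1$, and the lemma as stated carries no hypothesis on $r$.
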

\begin{proof}
By using Proposition \ref{prop:dual_alt_fqm} we know that
\begin{eqnarray*}
\left( \Alt{r}{\xv}{\yv}^\perp\right)_{\Fqm}  &= & \Fqmspan{\xv^{aq^\ell}\yv^{q^\ell}; \;a \in \Iintv{0}{r-1},\;\ell \in \Iintv{0}{m-1}},\\
\left(\Alt{r-1}{\puv{\xv}{i}}{(\puv{\xv}{i}-x_i)\puv{\yv}{i}}^\perp\right)_{\Fqm}   &= & \Fqmspan{{\puv{\xv}{i}}^{aq^\ell}(\puv{\xv}{i} - x_i)^{q^\ell}\puv{\yv}{i}^{q^\ell}; \;a \in \Iintv{0}{r-2},\;\ell \in \Iintv{0}{m-1}}.
\end{eqnarray*}
Observe now that 
\[\left( \sh{i}{\Alt{r}{\xv}{\yv}^\perp}\right)_{\Fqm}= \sh{i}{\left(\Alt{r}{\xv}{\yv}^\perp\right)_{\Fqm}}.\]
Clearly $\xv^{aq^\ell}(\xv - x_i)^{q^\ell}\yv^{q^\ell}=(\xv^a(\xv-x_i))^{q^\ell}\yv^{q^\ell}$ vanishes at $i$ and belongs to 
$\left( \Alt{r}{\xv}{\yv}^\perp\right)_{\Fqm}$ for $a$ in $\Iintv{0}{r-2}$. Therefore 
${\puv{\xv}{i}}^{aq^\ell}(\puv{\xv}{i} - x_i)^{q^\ell}\puv{\yv}{i}^{q^\ell}$ belongs to $\sh{i}{\left(\Alt{r}{\xv}{\yv}^\perp\right)_{\Fqm}}$. This proves 
\eqref{eq:altsubcode}. 
Similarly $y_i^{q^u}\yv^{q^v}-y_i^{q^v}\yv^{q^u}$ belongs clearly to $\left( \Alt{r}{\xv}{\yv}^\perp\right)_{\Fqm}$ and vanishes at $i$. Hence 
$y_i^{q^u}\puv{\yv}{i}^{q^v}-y_i^{q^v}\puv{\yv}{i}^{q^u}$ belongs to $\sh{i}{\left(\Alt{r}{\xv}{\yv}^\perp\right)_{\Fqm}}$. This proves 
\eqref{eq:defCzero}.
\end{proof}
\begin{lemma}\label{lem:BC}
Let $\BC  \eqdef \Alt{r-1}{\puv{\xv}{i}}{\puv{\yv}{i}(\puv{\xv}{i}-x_i)}^\perp$ and $\DC' \eqdef \starp{\BC}{\sh{i}{\Alt{r}{\xv}{\yv}^\perp}}$, then 
if $r \geq q+1$:
\[
\starp{\BC_{\Fqm}}{\Fqmspan{\puv{\yv}{i}}} \subseteq \DC'.
\]
\end{lemma}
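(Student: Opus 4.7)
The plan is first to replace the stated inclusion $\starp{\BC_{\Fqm}}{\Fqmspan{\puv{\yv}{i}}}\subseteq \DC'$ by its extended version $\starp{\BC_{\Fqm}}{\Fqmspan{\puv{\yv}{i}}}\subseteq \DC'_{\Fqm}$ (which is what the outline around \eqref{eq:equivalent} really needs, and is equivalent to it by Lemma \ref{lem:R15}), and then to set up a \emph{Frobenius bootstrap}: climb to the generator $\bv_0^{(1)}\eqdef (\puv{\xv}{i}-x_i)^q\puv{\yv}{i}^q\in \BC_{\Fqm}$, where an extra factor $(\puv{\xv}{i}-x_i)^q$ makes the target visibly land in $\sq{\BC_{\Fqm}}$, then invert Frobenius to descend back to $\bv_0^{(0)}$. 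The crucial preparation is that $\BC_{\Fqm}$ and $(\sh{i}{\Alt{r}{\xv}{\yv}^\perp})_{\Fqm}$ are closed under the coordinate-wise $q$-th power (their $\Fqm$-spanning generators only cycle in $\ell$ modulo $m$), hence so is $\DC'_{\Fqm}$; because $\DC'_{\Fqm}$ is a finite-dimensional $\Fqm$-subspace of $\Fqm^{n-1}$ and Frobenius is bijective on $\Fqm^{n-1}$, Frobenius restricts to a bijection of $\DC'_{\Fqm}$ onto itself, giving the biconditional $f^q\in\DC'_{\Fqm}\iff f\in\DC'_{\Fqm}$ that the final descent will need.

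The climb uses the hypothesis $r\geq q+1$ exactly twice, to place two degree-$(q-1)$ objects inside the $\ell=0$ layer of $\BC_{\Fqm}$: $(\puv{\xv}{i}-x_i)^{q-1}\puv{\yv}{i}=(\puv{\xv}{i}-x_i)\cdot (\puv{\xv}{i}-x_i)^{q-2}\puv{\yv}{i}$, and $\dv_q\eqdef (\puv{\xv}{i}^q-x_i^q)\puv{\yv}{i}=(\puv{\xv}{i}-x_i)^q\puv{\yv}{i}$. The first gives
\[
(\puv{\xv}{i}-x_i)^q\puv{\yv}{i}^2=\starp{\bv_0^{(0)}}{(\puv{\xv}{i}-x_i)^{q-1}\puv{\yv}{i}}\in \sq{\BC_{\Fqm}}\subseteq \DC'_{\Fqm}.
\]
The second, multiplied by the $\CC_0$-element $\cv_0\eqdef y_i\puv{\yv}{i}^q-y_i^q\puv{\yv}{i}$ (take $u=0,v=1$ in \eqref{eq:defCzero}), produces
\[
\starp{\dv_q}{\cv_0}=y_i(\puv{\xv}{i}-x_i)^q\puv{\yv}{i}^{q+1}-y_i^q(\puv{\xv}{i}-x_i)^q\puv{\yv}{i}^2\in \starp{\BC_{\Fqm}}{\CC_0}\subseteq \DC'_{\Fqm},
\]
and using the previous display to absorb the second term and $y_i\neq 0$ to divide, this shows that $(\puv{\xv}{i}-x_i)^q\puv{\yv}{i}^{q+1}=\starp{\bv_0^{(1)}}{\puv{\yv}{i}}$ lies in $\DC'_{\Fqm}$.

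The descent applies $\cv_0$ one more time, now multiplied by $\bv_0^{(1)}$: the relation $\starp{\bv_0^{(1)}}{\cv_0}=y_i\starp{\bv_0^{(1)}}{\puv{\yv}{i}^q}-y_i^q\starp{\bv_0^{(1)}}{\puv{\yv}{i}}\in\DC'_{\Fqm}$ together with the Frobenius identity $\starp{\bv_0^{(1)}}{\puv{\yv}{i}^q}=(\starp{\bv_0^{(0)}}{\puv{\yv}{i}})^q$ and the membership of $\starp{\bv_0^{(1)}}{\puv{\yv}{i}}$ just established yields $(\starp{\bv_0^{(0)}}{\puv{\yv}{i}})^q\in \DC'_{\Fqm}$, hence $\starp{\bv_0^{(0)}}{\puv{\yv}{i}}\in \DC'_{\Fqm}$ by the biconditional. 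To propagate to every generator $\bv_a^{(\ell)}$, I will use $\dv_a\eqdef (\puv{\xv}{i}^a-x_i^a)\puv{\yv}{i}\in\BC_{\Fqm}$ (same degree count, valid for $a\in\Iintv{1}{r-1}$) together with $\starp{\bv_0^{(0)}}{\dv_a}=\starp{\bv_a^{(0)}}{\puv{\yv}{i}}-x_i^a\starp{\bv_0^{(0)}}{\puv{\yv}{i}}$ to handle $\ell=0$, and then Frobenius closure plus a $\CC_0$-element of the form $y_i\puv{\yv}{i}^{q^\ell}-y_i^{q^\ell}\puv{\yv}{i}$ to extend to $\ell\geq 1$ (after dividing by $y_i^{q^\ell}\neq 0$). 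The main obstacle I anticipate is the apparent circularity that one encounters when working directly with $\bv_0^{(0)}\star\puv{\yv}{i}$: every straight application of $\CC_0$ merely relates $(\puv{\xv}{i}-x_i)\puv{\yv}{i}^2$ to $(\puv{\xv}{i}-x_i)\puv{\yv}{i}^{q^\ell+1}$ by a nonzero scalar, so no such manipulation alone can establish membership in $\DC'_{\Fqm}$. Breaking the circle is precisely what forces the jump to $\bv_0^{(1)}\star\puv{\yv}{i}$, where the factor $(\puv{\xv}{i}-x_i)^q$ materializes and $\sq{\BC_{\Fqm}}$ is finally able to finish the job.
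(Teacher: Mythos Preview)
Your proof is correct, and you rightly note that the conclusion should read $\subseteq\DC'_{\Fqm}$ rather than $\subseteq\DC'$ (the paper's own proof shows the extended version, and this is what the outline around \eqref{eq:equivalent} uses). Your route, however, is organized differently from the paper's. The paper treats each generator $\puv{\xv}{i}^{aq^\ell}(\puv{\xv}{i}-x_i)^{q^\ell}\puv{\yv}{i}^{q^\ell+1}$ uniformly in $\ell$: it sets up two reduction rules---one via $\DC_0\eqdef\starp{\BC_{\Fqm}}{\CC_0}$ to shift the $\puv{\yv}{i}$-exponent from $q^\ell+q^u$ to $q^\ell+q^v$, one via $\DC_1\eqdef\sq{\BC_{\Fqm}}$ to strip the $\puv{\xv}{i}^a$ part by Euclidean division---reduces to $(\puv{\xv}{i}-x_i)^{q^\ell}\puv{\yv}{i}^{2q^\ell}$, then rewrites $(\puv{\xv}{i}-x_i)^{q^\ell}=\bigl((\puv{\xv}{i}-x_i)^q\bigr)^{q^{\ell^-}}$ to drop one Frobenius level, whereupon $r\geq q+1$ lands the result in $\DC_1$. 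You instead front-load a global Frobenius principle that the paper never makes explicit: having checked that $\BC_{\Fqm}$ and $(\sh{i}{\Alt{r}{\xv}{\yv}^\perp})_{\Fqm}$ are Frobenius-stable, you obtain the biconditional $f^q\in\DC'_{\Fqm}\iff f\in\DC'_{\Fqm}$, handle only the single case $\bv_0^{(0)}\star\puv{\yv}{i}$ by a climb--descend through $\bv_0^{(1)}\star\puv{\yv}{i}$, and then propagate to all $a$ via $\sq{\BC_{\Fqm}}$ and to all $\ell$ via Frobenius-stability plus one more $\CC_0$-element. Both arguments invoke $r\geq q+1$ for the same reason---to place $(\puv{\xv}{i}-x_i)^{q-1}\puv{\yv}{i}$ and $(\puv{\xv}{i}-x_i)^q\puv{\yv}{i}$ inside the $\ell=0$ layer of $\BC_{\Fqm}$---but your version packages the Frobenius insight as a reusable tool, which is conceptually cleaner, while the paper's uniform reduction rules give a direct, $\ell$-by-$\ell$ computation that never needs to invert Frobenius.
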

\begin{proof}
We use the same notation as in Lemma \ref{lem:subcodes} and observe that $\DC_0$ and $\DC_1$ defined below are both subcodes of 
$\DC'_{\Fqm}$:
\begin{eqnarray*}
\DC_0 \eqdef \starp{\BC_{\Fqm}}{\CC_0} &= &  \Fqmspan{{\puv{\xv}{i}}^{aq^\ell}(\puv{\xv}{i} - x_i)^{q^\ell}
\left(y_i^{q^u}\puv{\yv}{i}^{q^v+q^\ell}-y_i^{q^v}\puv{\yv}{i}^{q^u+q^\ell}\right)\mid \;a \in \Iintv{0}{r-2},\ell,u,v \in \Iintv{0}{m-1}}\\
\DC_1 \eqdef \sq{\BC_{\Fqm}} &= & \Fqmspan{{\puv{\xv}{i}}^{aq^j+bq^\ell}(\puv{\xv}{i} - x_i)^{q^j+ q^\ell}\puv{\yv}{i}^{q^j+q^\ell}\mid \;a,b \in \Iintv{0}{r-2},\;j,\ell \in \Iintv{0}{m-1}}
\end{eqnarray*}
This is a direct consequence of $\DC' _{\Fqm} = \starp{\BC_{\Fqm}}{\left( \sh{i}{\Alt{r}{\xv}{\yv}^\perp}\right)_{\Fqm}}$ 
(definition of $\DC'$ and Point \ref{it:stp} in Lemma \ref{lem:R15})
and Lemma \ref{lem:subcodes}.
We also observe that
\[
	\starp{\BC_{\Fqm}}{\Fqmspan{\puv{\yv}{i}}}=\Fqmspan{\left(\puv{\xv}{i}\right)^{a q^\ell}(\puv{\xv}{i}-x_i)^{q^\ell}\left(\puv{\yv}{i}\right)^{q^\ell+1} \mid a \in \Iintv{0}{r-2}, l \in \Iintv{0}{m-1}}.
\]
Our proof strategy is to start with an element appearing in the vector span above and by suitable additions of elements of $\DC_i$ (where $i \in \{0,1\}$), and possibly also by multiplying by some elements in $\Fqm$, results at the end in an element in $\DC_i$. This will prove our lemma. We will 
use here the notation 
\[
\uv \stackrel{\DC_i}{\rightarrow} \vv
\]
to write that $\vv$ can be obtained from $\uv$ by adding a suitable element of $\DC_i$ and multiplying by 
some element in $\Fqm$, i.e. this is equivalent to $\uv-\lambda \vv \in \DC_i$ for a suitable element $\lambda$ in $\Fqm$.
It is readily seen that for any $a$ in $\Iintv{0}{r-2}$, $u$ and $v$ in  $\Iintv{0}{m-1}$ and any polynomial $P$ in $\Fqm[X]$ of degree $\leq r-2$ we have
\begin{eqnarray}
P(\puv{\xv}{i})^{q^\ell}(\puv{\xv}{i}-x_i)^{q^\ell} \puv{\yv}{i}^{q^\ell+q^u} & \stackrel{\DC_0}{\rightarrow} &
P(\puv{\xv}{i})^{q^\ell}(\puv{\xv}{i}-x_i)^{q^\ell} \puv{\yv}{i}^{q^\ell+q^v} \label{eq:reductionD0}\\
P(\puv{\xv}{i})^{q^\ell}(\puv{\xv}{i}-x_i)^{q^\ell} \puv{\yv}{i}^{2q^\ell}  & \stackrel{\DC_1}{\rightarrow} &
(\puv{\xv}{i}-x_i)^{q^\ell} \puv{\yv}{i}^{2q^\ell}. \label{eq:reductionD1}
\end{eqnarray}
The first reduction follows by noticing that
\begin{eqnarray*}
P(\puv{\xv}{i})^{q^\ell}(\puv{\xv}{i}-x_i)^{q^\ell} \puv{\yv}{i}^{q^\ell+q^u} &= & 
P(\puv{\xv}{i})^{q^\ell}(\puv{\xv}{i}-x_i)^{q^\ell} \puv{\yv}{i}^{q^\ell} y_i^{-q^v} \left( y_i^{q^v}\puv{\yv}{i}^{q^u}-
y_i^{q^u}\puv{\yv}{i}^{q^v} + y_i^{q^u}\puv{\yv}{i}^{q^v}\right)\\
& = & \dv + y_i^{q^u-q^v} P(\puv{\xv}{i})^{q^\ell}(\puv{\xv}{i}-x_i)^{q^\ell} \puv{\yv}{i}^{q^\ell+q^v}
\end{eqnarray*}
where $\dv= P(\puv{\xv}{i})^{q^\ell}(\puv{\xv}{i}-x_i)^{q^\ell} \puv{\yv}{i}^{q^\ell} y_i^{-q^v} \left( y_i^{q^v}\puv{\yv}{i}^{q^u}-
y_i^{q^u}\puv{\yv}{i}^{q^v} \right)$ clearly belongs to $\DC_0$.
The second reduction follows by performing the Euclidean division of $P(X)$ by $(X-x_i)$. We can namely write
$P(X)=(X-x_i)Q(X) + P(x_i)$ for a polynomial $Q$ of degree $\deg P-1$. Therefore
\begin{eqnarray}
P(\puv{\xv}{i})^{q^\ell}(\puv{\xv}{i}-x_i)^{q^\ell} \puv{\yv}{i}^{2q^\ell}  &= & \left( (\puv{\xv}{i}-x_i)Q(\puv{\xv}{i}) + P(x_i) \right)^{q^\ell} (\puv{\xv}{i}-x_i)^{q^\ell} \puv{\yv}{i}^{2q^\ell} \nonumber \\
& = & \left( (\puv{\xv}{i}-x_i)^{q^\ell} Q(\puv{\xv}{i})^{q^\ell} + P(x_i)^{q^\ell} \right)(\puv{\xv}{i}-x_i)^{q^\ell} \puv{\yv}{i}^{2q^\ell}
\label{eq:fqmlinearity} \\
& = & \dv + P(x_i)^{q^\ell}(\puv{\xv}{i}-x_i)^{q^\ell} \puv{\yv}{i}^{2q^\ell} \nonumber
\end{eqnarray}
where \eqref{eq:fqmlinearity} follows from the $\Fq$-linearity of the Frobenius action $x \mapsto x^{q^\ell}$ and $\dv=\left( (\puv{\xv}{i}-x_i)^{q^\ell} Q(\puv{\xv}{i})^{q^\ell}  \right)(\puv{\xv}{i}-x_i)^{q^\ell} \puv{\yv}{i}^{2q^\ell}
$ belongs obviously 
to $\DC_1$.

Let us show the inclusion by performing for a generator $\puv{\xv}{i}^{a q^\ell}(\puv{\xv}{i}-x_i)^{q^\ell}\puv{\yv}{i}^{q^\ell+1}$ of $\starp{\DC_{\Fqm}}{\Fqmspan{\puv{\yv}{i}}}$ a sequence of reductions
\begin{eqnarray*}
\puv{\xv}{i}^{a q^\ell}(\puv{\xv}{i}-x_i)^{q^\ell}\puv{\yv}{i}^{q^\ell+1}& \stackrel{\DC_0}{\rightarrow} &
\puv{\xv}{i}^{a q^\ell}(\puv{\xv}{i}-x_i)^{q^\ell}\puv{\yv}{i}^{2q^\ell} \\
& \stackrel{\DC_1}{\rightarrow} &
(\puv{\xv}{i}-x_i)^{q^\ell}\puv{\yv}{i}^{2q^\ell}
\end{eqnarray*}
The crucial argument is now the simple observation that 
\begin{eqnarray*}
(\puv{\xv}{i}-x_i)^{q^\ell} &= & \left( (\puv{\xv}{i}-x_i)^{q}\right)^{q^{\ell^-}}
\end{eqnarray*}
where $\ell^- \eqdef \ell-1$ if $\ell >0$ and $\ell^- \eqdef m-1$ if $\ell=0$. This is a consequence of the fact that the entries
of $\xv$ are in $\Fqm$. 
This suggests the following sequence of reductions
\begin{eqnarray*}
(\puv{\xv}{i}-x_i)^{q^\ell}\puv{\yv}{i}^{2q^\ell} & \stackrel{\DC_0}{\rightarrow} &
\left( (\puv{\xv}{i}-x_i)^{q}\right)^{q^{\ell^-}}\puv{\yv}{i}^{q^\ell + q^{\ell^-} }= \left( (\puv{\xv}{i}-x_i)^{q-1}\right)^{q^{\ell^-}}
(\puv{\xv}{i}-x_i)^{q^{\ell^-}} \puv{\yv}{i}^{ q^{\ell^-}+q^\ell }\\
& \stackrel{\DC_0}{\rightarrow} & \left( (\puv{\xv}{i}-x_i)^{q-1}\right)^{q^{\ell^-}} (\puv{\xv}{i}-x_i)^{q^{\ell^-}} \puv{\yv}{i}^{ 2q^{\ell^-} }=
\left( (\puv{\xv}{i}-x_i)^{q-2}\right)^{q^{\ell^-}} (\puv{\xv}{i}-x_i)^{2q^{\ell^-}} \puv{\yv}{i}^{ 2q^{\ell^-} }.
\end{eqnarray*}
Note that the last reduction could be performed because the degree of the polynomial $(\puv{\xv}{i}-x_i)^{q-1}$, which is $q-1$, is less than or equal to $r-2$ by assumption on $r$. For the very same reason ($r \geq q+1$) we observe that the right-hand term 
$\left( (\puv{\xv}{i}-x_i)^{q-2}\right)^{q^{\ell^-}} (\puv{\xv}{i}-x_i)^{2q^{\ell^-}} \puv{\yv}{i}^{ 2q^{\ell^-} }$ belongs to $\DC_1$ which 
finishes the proof.
\end{proof}

We are ready now to prove Theorem \ref{thm:equivalent}.

\begin{proof}[Proof of Theorem \ref{thm:equivalent}.]
From Lemma \ref{lem:decomposition}, we know that $\CC_{\Fqm}= \left( \Alt{r}{\puv{\xv}{i}}{\puv{\yv}{i}}^\perp \right)_{\Fqm} $ can be decomposed as
\[
\CC_{\Fqm} = \left( \sh{i}{\Alt{r}{\xv}{\yv}^\perp} \right)_{\Fqm} \oplus \Fqmspan{\puv{\yv}{i}}.
\]
This implies that 
\begin{eqnarray*}
\starp{\BC_{\Fqm}}{\CC_{\Fqm}} &= &\underbrace{\starp{\BC_{\Fqm}}{\left( \sh{i}{\Alt{r}{\xv}{\yv}^\perp} \right)_{\Fqm}}}_{\DC'_{\Fqm}} +\starp{\BC_{\Fqm}}{\Fqmspan{\puv{\yv}{i}}} \\
& = & \starp{\BC_{\Fqm}}{\left( \sh{i}{\Alt{r}{\xv}{\yv}^\perp} \right)_{\Fqm}} \;\;\text{(by Lemma \ref{lem:BC})}\\
& = & \DC'_{\Fqm}.
\end{eqnarray*}
The equality of the extended codes over $\Fqm$ implies the equalities of the codes over $\Fq$ which ends the proof.
\end{proof}

\subsection{What is wrong with Goppa codes?}\label{ss:Goppa}
Before moving to the second part of the attack, we make a short digression on how the arguments explained so far (do not) apply to the Goppa case. The discussion below does not represent a proof that computing a filtration is impossible for Goppa codes, but rather an intuition about what hampers it. Goppa codes behave differently from random alternant codes and provide counterexamples to Heuristic~\ref{heur:main_filtration}. The latter should be replaced by 

\begin{heur} \label{heur:Goppa_cond}
Let $\Goppa{\xv}{\Gamma} \eqdef\Alt{r}{\xv}{\yv}$ 
be a random Goppa code of degree $r$, with $r\ge q-1$ and $\sqb{\Goppa{\xv}{\Gamma}^\perp}$ being different from the full code. 
Choose an arbitrary code position $i$ and let $\CC\eqdef 
\left(\sh{i}{\Goppa{\xv}{\Gamma}}\right)^\perp$ and $\DC\eqdef\sq{\sh{i}{\Alt{r}{\xv}{\yv}^\perp}}$. Then, with high probability, 
\[
\cond{\CC}{\DC} = \Alt{r}{\puv{\xv}{i}}{\puv{\yv}{i}(\puv{\xv}{i}-x_i)}^\perp.
\]
\end{heur}
This is unfortunate, since our approach heavily builds upon the fact that the degree of the conductor decreases. Moreover, it will turn out that the code we obtain as a conductor could have been obtained directly by shortening a suitable code. Actually, it will turn out that for Goppa codes, there are several codes which are very close to each other and which are obtained by various shortenings. This is summarized by the following proposition, whose proof will be given in the appendix. We will explain in what follows why this 
phenomenon is the main obstacle for applying our conductor approach.

\begin{restatable}{prop}{propCodes}
Let $\Goppa{\xv}{\Gamma} \eqdef\Alt{r}{\xv}{\yv}$ be a Goppa code of degree $r$. We have for any code positions $i$ and $j$
with $i \neq j$:
\begin{eqnarray}
\Alt{r+1}{\xv}{\yv}^\perp & = & \Goppa{\xv}{\Gamma} + \Fqspan{\mathbf{1}} \;\;\text{(from \cite[prop. 1]{B00})} \label{eq:altrpu}\\
\Alt{r}{\puv{\xv}{i}}{\puv{\yv}{i}(\puv{\xv}{i}-x_i)}^\perp& = & \sh{i}{\Alt{r+1}{\xv}{\yv}^\perp} \label{eq:shi}\\
\sh{j}{\Alt{r}{\puv{\xv}{i}}{\puv{\yv}{i}(\puv{\xv}{i}-x_i)}^\perp} & = & \sh{i}{\Alt{r}{\puv{\xv}{j}}{\puv{\yv}{j}(\puv{\xv}{j}-x_j)}^\perp}.\label{eq:shicomshj}
\end{eqnarray}
\end{restatable}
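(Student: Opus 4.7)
The first identity \eqref{eq:altrpu} is attributed to Berger and I take it as given. The third identity \eqref{eq:shicomshj} will follow immediately from the second: substituting \eqref{eq:shi} into each side converts both into $\sh{\{i,j\}}{\Alt{r+1}{\xv}{\yv}^\perp}$, since successive shortenings at distinct positions commute and combine. So all the work concentrates on \eqref{eq:shi}.

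For \eqref{eq:shi}, the plan is to shorten \eqref{eq:altrpu} at position $i$ and match the outcome with the right-hand side. Since $\mathbf{1}_i = 1 \ne 0$, any codeword of $\Alt{r+1}{\xv}{\yv}^\perp$ with vanishing $i$-th coordinate is of the form $g - g_i \mathbf{1}$ for a unique $g \in \Goppa{\xv}{\Gamma}$, whence
\[
\sh{i}{\Alt{r+1}{\xv}{\yv}^\perp} = \bigl\{(g_j - g_i)_{j \ne i} : g \in \Goppa{\xv}{\Gamma}\bigr\}.
\]
Theorem \ref{thm: Delsarte} describes the right-hand side of \eqref{eq:shi} as the trace code $\Tr(\GRS{r}{\puv{\xv}{i}}{\puv{\yv}{i}(\puv{\xv}{i}-x_i)})$, which is spanned over $\Fq$ by the vectors $(\Tr(\alpha y_j Q(x_j)))_{j \ne i}$ for $\alpha \in \Fqm$ and $Q \in \Fqm[X]$ of degree at most $r$ with $Q(x_i) = 0$.

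The matching between the two descriptions will hinge on the polynomial identity
\[
Q_a(X) \eqdef X^a - \frac{x_i^a}{\Gamma(x_i)}\,\Gamma(X), \qquad a \in \Iintv{0}{r-1}.
\]
Each $Q_a$ has degree at most $r$ and satisfies $Q_a(x_i) = 0$; using the Goppa relation $y_j \Gamma(x_j) = 1$, a direct computation gives
\[
y_j Q_a(x_j) = y_j x_j^a - y_i x_i^a \qquad (j \ne i).
\]
Since $\{Q_a\}_{a=0}^{r-1}$ forms a basis of the $r$-dimensional space of polynomials of degree at most $r$ vanishing at $x_i$, every generator on the right-hand side of \eqref{eq:shi} decomposes via this basis into $\Fqm$-combinations of the vectors $(\Tr(\alpha (y_j x_j^a - y_i x_i^a)))_{j \ne i} = (c_j - c_i)_{j \ne i}$, where $c = \Tr(\alpha \, \yv \star \xv^a)$ is a codeword naturally sitting in the trace-of-GRS description of $\Goppa{\xv}{\Gamma}$ via Delsarte and \eqref{eq:altrpu}. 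Running this correspondence both ways produces the equality claimed in \eqref{eq:shi}.

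The main obstacle is to reconcile the two presentations of $\Goppa{\xv}{\Gamma}$ that enter on the two sides---as the $\Fq$-projection of $\Goppa + \Fqspan{\mathbf{1}}$ coming from \eqref{eq:altrpu} on the left, and as the trace-of-GRS presentation on the right---so that the $Q_a$-decomposition produces a clean bijective matching of generators. The Goppa relation $y_i \Gamma(x_i) = 1$ is indispensable: it is precisely what forces $Q_a(x_i) = 0$ while simultaneously collapsing $y_j Q_a(x_j)$ to the required difference $y_j x_j^a - y_i x_i^a$. Without this structure the polynomial trick fails, which is exactly why the statement is peculiar to Goppa codes and does not transfer to generic alternant codes---the same phenomenon underlying Heuristic~\ref{heur:Goppa_cond}.
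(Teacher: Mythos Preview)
Your argument for \eqref{eq:shicomshj} via commutativity of shortening is exactly what the paper does. Your argument for \eqref{eq:shi} is correct but genuinely different from the paper's, and in some ways cleaner.

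The paper first establishes an auxiliary identity on the \emph{punctured} codes,
\[
\Goppa{\puv{\xv}{i}}{\Gamma}^\perp + \Fqspan{\mathbf{1}} \;=\; \Alt{r}{\puv{\xv}{i}}{\puv{\yv}{i}(\puv{\xv}{i}-x_i)}^\perp + \Fqspan{\mathbf{1}},
\]
by showing each side is $\Alt{r+1}{\puv{\xv}{i}}{\puv{\yv}{i}}^\perp$; it then squeezes the shortened code between $\Alt{r}{\puv{\xv}{i}}{\puv{\yv}{i}(\puv{\xv}{i}-x_i)}^\perp$ and the latter plus $\Fqspan{\mathbf{1}}$, and rules out the extra $\mathbf{1}$ by noting it cannot lie in the shortened code. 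Your route bypasses this auxiliary identity entirely: the explicit basis $Q_a(X)=X^a-\tfrac{x_i^a}{\Gamma(x_i)}\Gamma(X)$ of $\{Q:\deg Q\le r,\ Q(x_i)=0\}$ lets you match generators on both sides directly, with the Goppa relation $y_j\Gamma(x_j)=1$ doing all the work. This is more constructive and makes the dependence on the Goppa structure completely transparent.

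One notational slip to fix: you write $g\in\Goppa{\xv}{\Gamma}$ and speak of ``the trace-of-GRS description of $\Goppa{\xv}{\Gamma}$'', but the objects you are actually manipulating---$\Tr(\alpha\,\yv\star\xv^a)$ and the decomposition coming from \eqref{eq:altrpu}---live in the \emph{dual} $\Goppa{\xv}{\Gamma}^\perp=\Alt{r}{\xv}{\yv}^\perp$. (The statement of \eqref{eq:altrpu} as printed has the same slip; it should read $\Goppa{\xv}{\Gamma}^\perp+\Fqspan{\mathbf{1}}$ for the dimensions to match, and the paper's own proof uses $\Goppa{\puv{\xv}{i}}{\Gamma}^\perp$ throughout.) Your computations are with the right objects, so this is cosmetic, but you should replace $\Goppa{\xv}{\Gamma}$ by $\Goppa{\xv}{\Gamma}^\perp$ in those two places. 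Also, ``unique $g$'' requires $\mathbf{1}\notin\Goppa{\xv}{\Gamma}^\perp$; this holds generically but is not needed for your set description, so you could simply drop the word ``unique''.
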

	We can summarize these relationships with the diagram below, where arrows mean an inclusion of the lower code into the upper code and two arrows pointing at same code represent the intersection. The typical code dimensions are shown too.
	\[
		\hspace{-0.5cm}
		\begin{tikzcd}
			\textbf{Dimension} & & \textbf{Code} & \\
			rm+1 & & \Alt{r+1}{\xv}{\yv}^\perp \arrow[dl] \arrow[d, "\text{ \bf Sh}_i"] \arrow[dr, "\text{ \bf Sh}_j"]  & \\
			rm & \Alt{r}{\xv}{\yv}^\perp \arrow[d, "\text{ \bf Sh}_i"] & \Alt{r}{\puv{\xv}{i}}{\puv{\yv}{i}(\puv{\xv}{i}-x_{i})}^\perp \arrow[dl] \arrow[dr, "\text{ \bf Sh}_j"] &  \Alt{r}{\puv{\xv}{j}}{\puv{\yv}{j}(\puv{\xv}{j}-x_{j})}^\perp \arrow[d, "\text{ \bf Sh}_i"]\\
			rm-1 & \sh{i}{\Alt{r}{\xv}{\yv}^\perp} & & \sh{j}{\Alt{r}{\puv{\xv}{i}}{\puv{\yv}{i}(\puv{\xv}{i}-x_{i})}^\perp} 
		\end{tikzcd}
	\]

 \section{Algebraic cryptanalysis}
\label{sec:algebraiccryptanalysis}
The previous section explains how to obtain, under some conditions, the alternant code $\Alt{3}{\xv'}{\yv'}$ with support $\xv'=\puv{\xv}{_\Ical}$ and multiplier $\yv'=\puv{\yv}{\Ical} \left(\prod_{i \in \Ical} (\puv{\xv}{\Ical}-x_i)\right)$ for some $\Ical\subseteq \Iintv{1}{n}$ such that $\card{\Ical}=r-3$, and with length $n'=n-r+3$ and degree $3$, starting from the knowledge of the length-$n$ public code $\Alt{r}{\xv}{\yv}$. For the sake of clarity, in this section we perform algebraic cryptanalysis on the alternant code $\Alt{3}{\xv}{\yv}$ of length $n$. Essentially, we can ignore the structure of $\yv'$ and the decreased length because the filtration preserves the support and multiplier randomness and the code distinguishability. In Section~\ref{sec:interlacing}, we will see how to get back a support and a multiplier defining $\Alt{r}{\xv}{\yv}$ (not necessarily $\xv$ and $\yv$) from a support and a multiplier defining $\Alt{3}{\xv'}{\yv'}$ (not necessarily $\xv'$ and $\yv'$). Moreover, we will focus on the case $r=3$ for the system resolution, but the algebraic modeling we describe in Sections~\ref{sec:algebraicmodeling} and \ref{sec:reducingsolutions} is more general and makes sense for any $r\ge 3$.
The full attack needs the filtration to reach degree 3, and therefore works specifically for $q=2$ or $q=3$.  However, the specific modeling for $r=3$ described in  Section~\ref{sec:r=3} is valid for any field size. We describe in Sections~\ref{sec:qodd} and \ref{sec:qeven}  a polynomial time attack on alternant codes of degree 3 for any $q>2$.
This result is original, and to the best of our knowledge, no polynomial time attack was known on non-structured alternant or Goppa codes even for $r=3$.
Section~\ref{sec:alt:justif} contains the theoretical proofs and experimental validation for our algorithm.

For $q=2$, we also have a polynomial time attack on random alternant
codes of degree 3. However, the algorithm and the theoretical
justifications are quite different from the $q>2$ case, they involve
algebraic properties of binary alternant codes and are beyond the
scope of this article. We just sketch the solving algorithm in
Section~\ref{sec:q=2}.

\subsection{The algebraic modeling from \cite{FGOPT13}, $r\ge 3$}
\label{sec:algebraicmodeling}
We give an alternative definition of alternant codes which is more suitable for this section. To this end we first need to introduce a rectangular Vandermonde-like matrix:
\[\Vm_r(\xv,\yv)\eqdef
\begin{bmatrix}
y_1 & \dots & y_n \\
y_1 x_1 & \dots & y_n x_n \\
\vdots & \ddots & \vdots \\
y_1 x_1^{r-1} & \dots & y_n x_n^{r-1}
\end{bmatrix},
\]
where 
$\xv=(x_1,\dots,x_n)\in \Fqm^n$ and $\yv=(y_1\dots,y_n)\in \Fqm^n$.
It is readily seen that an alternant code $\Alt{r}{\xv}{\yv}$ can be expressed as the code with parity-check matrix $\Vm_r(\xv, \yv)$:
$$
\Alt{r}{\xv}{\yv}\eqdef\left\{\cv \in \Fq^n \mid \Vm_r(\xv, \yv)\cv^T=\zerom\right\}
$$

We will adopt the notation and follow the description of the algebraic model presented in \cite{FGOPT13}. We denote with $\Gm=(g_{i,j})\in\Fq^{k\times n}$ the $k\times n$ generator matrix of $\Alt{r}{\xv}{\yv}$.
The definition of alternant codes given above implies that 
\[
\Vm_r(\xv, \yv)\Gm^T=\zerom_{k\times n}.
\]
This matrix equation translates into the following system of polynomial equations:
\[
\left\{\sum_{j=1}^n g_{i,j}Y_j X_j^e =0 \mid i \in \Iintv{1}{k}, e \in \Iintv{0}{r-1}\right\},
\]
where $\Xm \eqdef \{X_1,\dots,X_n\}$ and $\Ym\eqdef\{Y_1,\dots,Y_n\}$ are two blocks of $n$ unknowns, each corresponding to the support and multiplier coordinates respectively. Observe that the sought vectors $\xv$ and $\yv$ satisfy indeed the polynomial system.

We can assume, up to a permutation of columns, that $\Gm$ is in systematic form, i.e. $\Gm=(\Im_k \mid \Pm)$, where $\Im_k$ is the identity matrix of size $k$ and $\Pm=(p_{i,j})$ for $i \in \Iintv{1}{k}, j \in \Iintv{k+1}{n}$. The polynomial system can be therefore rewritten as
 \begin{equation} \label{eq: systematic_form}
\left\{Y_i X_i^e =- \sum_{j=k+1}^n p_{i,j}Y_j X_j^e \mid i \in \Iintv{1}{k}, e \in \Iintv{0}{r-1}\right\}.
\end{equation}

As explained in \cite{FGOPT13}, thanks to the systematic form assumption, we can get rid of several variables and consider an algebraic system in only $2(n-k)$ unknowns. Indeed it is possible to exploit the trivial identity for $i \in \Iintv{1}{k}$
$$
Y_i  (Y_i X_i^2) = (Y_i X_i)^2
$$
and replace the term $Y_iX_i^a$ for $a \in \Iintv{0}{2}$ by 
$-\sum_{j=k+1}^n p_{i,j}Y_jX_j^a$
to obtain the equation $\left(\sum_{j=k+1}^n p_{i,j}Y_j\right) \left(\sum_{j=k+1}^n p_{i,j}Y_jX_j^2\right)=\left( \sum_{j=k+1}^n p_{i,j}Y_jX_j\right)^2$.
After putting all terms on one side and expanding the products, we obtain that the polynomial
\[
f_i \eqdef \sum_{j, j'\in\Iintv{k+1}{ n} }p_{i,j}p_{i,j'} Y_j Y_{j'} (X_{j'}^2-X_j X_{j'})
\]
must be zero on the support $\xv$ and multiplier $\yv$. 
 Since 
\begin{eqnarray*}
f_i
& =&  \sum_{k+1\le j< j'\le n}
p_{i,j}p_{i,j'} Y_j Y_{j'} (X_j^2 + X_{j'}^2-2X_j X_{j'})\\
& =&  \sum_{k+1\le j< j'\le n}
p_{i,j}p_{i,j'} Y_j Y_{j'} (X_j-X_{j'})^2
\end{eqnarray*}
we finally obtain the following algebraic system 
\begin{align}
  \label{eq:S}
  \cS \eqdef \left\{ \sum_{k+1 \leq j < j' \leq n} p_{i,j}p_{i,j'} Y_j Y_{j'} (X_{j}-X_{j'})^2\mid i \in \Iintv{1}{k} \right\}.
\end{align}

\subsection{Reducing the number of solutions}
\label{sec:reducingsolutions}
The ideal generated by $\cS$ is not zero-dimensional. It will be
convenient here to reduce to this case by specializing appropriately
some variables.
The positive dimension of this ideal is due in the first instance to the degrees of freedom for the support and multiplier coordinates. 
In essence this is due to the fact that a homography $z \mapsto \frac{a z + b}{cz + d}$ maps the support $\xv$ of an alternant code to another support describing the same alternant code (but possibly with a different multiplier) at the condition that $c x_i + d$ never vanishes. When there exists a value $x_i$ of the support of the alternant code for which $c z_i + d=0$, the resulting code is not an alternant code, but belongs to a slightly larger family of codes : it will be a subfield subcode of a Cauchy code. 
Let us recall its definition taken from \cite{D87}.
Given a field $\F$ we can identify the projective line $\mathcal{P}^1(\F)$ with $\bar{\F}\eqdef\F \cup \{\infty\}$, where the symbol $\infty$ is called \textit{point at infinity}, through the map $\phi: \bar{\F}\rightarrow \F^2 \setminus\{0\}, \phi(e)=(e , 1)$ if $e \in \F$ and $\phi(\infty)=(1,0)$. Moreover, let $\F[W,Z]_{l}^H$ be the set of homogeneous polynomials of degree $l$ in two variables $W, Z$. Given $P \in \F[W,Z]_{l}^H$ and $e \in \bar{\F}$, we define $P(e)\eqdef P(\phi(e))$. Then
\begin{definition}[Cauchy code]
	Let $\xv\eqdef(x_1,\dots,x_n) \in \bar{\F}^n$ be a vector of distinct elements and $\yv \eqdef (y_1,\dots,y_n) \in \F^n$  be a vector of nonzero elements. Let $r \in \Iintv{0}{n}$. The \textbf{Cauchy code} $\CC_r(\xv, \yv)$ is defined as
  \[
  \CC_r(\xv,\yv)\eqdef\{(y_1 P(x_1),\dots,y_n P(x_n)) \mid P \in \F[W,Z]_{r-1}^H\}.
  \]
  As in the case of generalized Reed-Solomon codes, $\xv$ is called a {\em support}  and $\yv$ a {\em multiplier} of the Cauchy code.
\end{definition}
If we assume $x_n=\infty$, a generator matrix of $\CC_r(\xv, \yv)$ is given by
\begin{equation} \label{eq: gen_cauchy}
\begin{bmatrix}
y_1 & \dots & y_{n-1} & 0 \\
y_1 x_1 & \dots & y_{n-1} x_{n-1} & 0 \\
\vdots & \ddots & \vdots \\
y_1 x_1^{r-1} & \dots & y_{n-1} x_{n-1}^{r-1} & y_n
\end{bmatrix}.
\end{equation}
On the other hand, when $\xv \in \F^n$, \ie when all the $x_i$'s are different from $\infty$, the Cauchy code $\CC_r(\xv,\yv)$ can be easily seen as $\GRS{r}{\xv}{\yv}$. They are also MDS codes. In some sense, Cauchy code are the projective linearization of GRS codes. Analogously subfield subcodes of Cauchy codes generalize subfield subcodes of GRS codes, \ie alternant codes. 

One of the main result of \cite{D87} was to characterize the possible supports and multipliers of Cauchy codes. In particular, it is proven there that
\begin{thm}\cite{D87} \label{thm: cauchy}
Let $r  \in \Iintv{2}{n-2}$. Then $\CC_r(\xv, \yv) = \CC_r(\xv', \yv')$ if and only if there exists a homography $f(z)= \frac{az +b}{cz+d}$ ($a,b,c,d \in \F$, $ad-bc \neq 0$) such that 
$\xv' = f(\xv)$ and $\yv' = \lambda  \theta(\xv)^{r-1} \yv$ where  $\lambda \in \F\setminus\{0\}$ and 
\begin{align*}
\theta(z)= cz+d &\quad \text{if } z \in \F \text{ and } cz+d\neq 0,\\
\theta(z)= (ad-bc)/(-c) &\quad \text{if } z \in \F \text{ and } cz+d = 0,\\
\theta(\infty)= c &\quad \text{if } c\neq 0,\\
\theta(\infty)= a &\quad \text{if } c= 0.
\end{align*}
\end{thm}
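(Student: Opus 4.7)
The plan is to prove the two implications separately; the forward direction reduces to a direct projective computation, while the converse needs a support-recovery argument based on cross-ratios.

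For the \emph{if} direction, the key observation is that given a homography $f(z) = (az+b)/(cz+d)$ with $ad-bc \neq 0$, the substitution $Q(W,Z) \eqdef P(aW+bZ, cW+dZ)$ maps $\F[W,Z]_{r-1}^H$ into itself. For $x \in \F$ with $cx+d \neq 0$, the projective identity $(ax+b,\, cx+d) = (cx+d)\cdot \phi(f(x))$ combined with the homogeneity of $P$ yields
\[
Q(x) = P(ax+b, cx+d) = (cx+d)^{r-1}\, P(f(x)) = \theta(x)^{r-1}\, P(f(x)).
\]
The three remaining branches in the definition of $\theta$ correspond to the three other affine charts: $cx+d = 0$ (so $f(x) = \infty$), $x = \infty$ with $c \neq 0$, and $x = \infty$ with $c = 0$. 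In each case $\theta(x)$ is forced to be exactly the scalar relating the image pair $(a\phi(x)_1 + b\phi(x)_2,\, c\phi(x)_1 + d\phi(x)_2)$ to a normalized representative of $\phi(f(x))$, and the same identity yields $Q(x) = \theta(x)^{r-1}\,P(f(x))$. Consequently $y'_i P(x'_i) = \lambda\, y_i\, Q(x_i)$, so $\CC_r(\xv',\yv') \subseteq \CC_r(\xv,\yv)$, and the reverse inclusion follows by applying the same argument to $f^{-1}$.

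For the \emph{only if} direction, the strategy is first to recover the support up to a homography, and then to pin down the multiplier. Because $2 \leq r \leq n-2$, the Cauchy code is a nontrivial MDS code, and for every $J \subseteq \Iintv{1}{n}$ with $|J| = r-1$ the subspace of codewords vanishing on $J$ is one-dimensional. In the generic affine case its unique nonzero representative $\cv^{(J)}$ has entries proportional to $y_i \prod_{j \in J}(x_i - x_j)$ for $i \notin J$. Forming the ratio $\cv^{(J)}_i / \cv^{(J')}_i$ for two such sets $J, J'$ that differ in exactly one element, and then dividing this by the corresponding ratio at a second index $k$, cancels the multipliers entirely and leaves an explicit cross-ratio of four support entries. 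Since the code is the same for $(\xv, \yv)$ and $(\xv', \yv')$, these cross-ratios must match between $\xv$ and $\xv'$ for every choice of four indices. By the fundamental theorem of projective geometry on $\mathcal{P}^1$, this forces the existence of a unique homography $f$ with $\xv' = f(\xv)$. Substituting back and reusing the ``if'' computation then forces $\yv'$ to be a scalar multiple of $\theta(\xv)^{r-1}\,\yv$.

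The main technical obstacle will be the uniform treatment of the positions where $x_i = \infty$ or $cx_i + d = 0$, where the affine formulas break down. The cleanest way to circumvent all case distinctions is to work throughout with homogeneous coordinates $\phi(x_i) \in \F^2 \setminus \{0\}$: a homography becomes a linear map on $\F^2$ modulo scalars, $\theta(x_i)$ is intrinsically the unique scalar making the image of $\phi(x_i)$ agree with a chosen normalized representative of $\phi(f(x_i))$, and the four displayed formulas are then just this intrinsic scalar read off in each affine chart. The range $r \in \Iintv{2}{n-2}$ is essential on both sides: the upper bound ensures that there is enough room (at least $r-2$ ``fixed'' indices plus four distinct ``moving'' indices) to read all cross-ratios from the minimum-weight codewords, while the lower bound guarantees that the code is not just a scalar line, so that the support is actually encoded in it.
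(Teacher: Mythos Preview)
The paper does not prove this theorem at all: it is quoted verbatim from D\"ur's article \cite{D87} and used as a black box, so there is no ``paper's own proof'' to compare against. Your sketch is therefore an independent argument, and as such it is broadly correct: the ``if'' direction via the linear change of variables $Q(W,Z)=P(aW+bZ,cW+dZ)$ on $\F[W,Z]_{r-1}^H$ is exactly the standard computation, and your homogeneous-coordinates remark is the right way to make the four-case definition of $\theta$ uniform. For the converse, the cross-ratio recovery from minimum-weight codewords is also the classical route (and is essentially how D\"ur proceeds, though he phrases it in terms of automorphism groups of MDS codes).

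One small point worth tightening: you need $n\ge r+3$ (equivalently $r\le n-3$) to have four indices outside a set $J$ of size $r-1$, yet the statement allows $r=n-2$. In that boundary case you only have three free indices and cannot read a cross-ratio directly from a single minimum-weight codeword; you have to vary $J$ as well (take $J,J'$ differing in one element and compare at the two remaining positions, then slide $J$). Your text hints at this (``$J,J'$ that differ in exactly one element''), but the counting in the last paragraph (``at least $r-2$ fixed indices plus four moving indices'') does not quite cover $r=n-2$. This is a bookkeeping issue rather than a genuine gap.
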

Since the elements $a,b,c,d$ in $f(z)=\frac{az+b}{cz+d}$ are defined up to a multiplication by a nonzero scalar, Theorem~\ref{thm: cauchy} pragmatically implies that we are allowed to fix three variables in block $\Xm$ and one in block $\Ym$. The price to pay for this additional specialization is that now we have to eventually handle the point at infinity. We have seen that the column corresponding to the point at infinity in the generator matrix of a Cauchy code has a special form and this changes for this coordinate the form of the system $\cS$ given in \eqref{eq:S}. The problem is that we do not know a priori which $x_i$ will be infinite. To circumvent this problem, we choose the value $x_i$ that will  be set to infinity, say $x_n$.

It is also convenient to make the other choices to belong to the subfield $\Fq$ over which the alternant code is defined: all the Gröbner bases computations will stay in the subfield and this results in slightly improved computation times. Finally we make the following choice (which also simplifies slightly the analysis of the Gröbner basis computations):

\begin{align}\label{eq:specialization}
X_{n-2}=0,\quad X_{n-1}=1,\quad X_n=\infty,\quad Y_n=1,
\end{align}
From now on, we denote with a prime the systems where the previous values have been specialized. For instance $\cS'$ corresponds to $\cS$ with $X_{n-2}=0, X_{n-1}=1, X_n=\infty, Y_n=1$.
For the particular specialization $X_n=\infty$, the last column of the Vandermonde matrix is $\trsp{
  \begin{pmatrix}
    0 & \dots & 0 & 1
  \end{pmatrix}
}$ (see~\eqref{eq: gen_cauchy}), and the shape of $\cS'$ is a little bit different of the one of $\cS$ (see next section).

However, the set of solutions of the system $\cS'$ still contains a component of positive dimension $n-k-1=rm-1$, that corresponds to the solutions of $\{Y_j=0 \mid k+1\le j \le n-1\}$. The classical way to deal with the parasite solutions $Y_j=0$ is to introduce to the system a new variable $T_j$ together with the polynomial $T_jY_j-1$. This ensures that $Y_j=0$ is not a solution to the system. However, this also adds variables to the system, and increases the degree of the polynomials during a Gröbner basis computation. A similar phenomenon occurs with the constraints $X_j-X_{j'}\ne 0$. 
We solve these problems in an easier way in Steps~\eqref{step:2} and \eqref{step:3} of the algorithm presented in Section~\ref{sec:qodd}.

If we substitute \eqref{eq:specialization} into $\cS$ for $r\ge 4$, we
just get the same system as if we had specialized $Y_{n}=0$. In the $r=3$ case it is slightly different.
\subsection{Specialized system $\cS'$ in the $r=3$ case}
\label{sec:r=3}
\begin{proposition}
For $r=3$, we can choose part of the support and multiplier as $X_{n-2}=0$, $X_{n-1}=1$, $X_n=\infty$, $Y_n=1$ and obtain the following algebraic system
\begin{align}
\label{eq:Sprime}
  \cS' &\eqdef &\left\{
                 \sum_{k+1\le j< j'\le n-1}
                 p_{i,j}p_{i,j'} Y_j Y_{j'} (X_{j}-X_{j'})^2   +  \sum_{j=k+1}^{n-1} p_{i,j}p_{i,n}Y_j
                 \mid i \in \Iintv{1}{k} \right\}.
\end{align} 
\end{proposition}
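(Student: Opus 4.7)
The plan is to redo the derivation of $\cS$ from Section~\ref{sec:algebraicmodeling}, but now with the modified parity check matrix that arises when $X_n=\infty$ and with the other three values specialized. The starting point is that setting $x_n=\infty$ replaces the $r=3$ parity check matrix $\Vm_3(\xv,\yv)$ by the matrix displayed in \eqref{eq: gen_cauchy}, which coincides with $\Vm_3(\xv,\yv)$ everywhere except in its last column, whose first two entries become $0$ while the last entry stays $y_n$. The fact that the specialization $X_{n-2}=0$, $X_{n-1}=1$, $X_n=\infty$, $Y_n=1$ is admissible without loss of generality is Theorem~\ref{thm: cauchy}: the three-parameter homography action on the support of a Cauchy code, together with the one-parameter scalar action on the multiplier, allows one to freely prescribe four scalar values (three of which are chosen in $\Fq$ here for computational convenience).

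Next, writing out $\Vm_3 \Gm^T = 0$ with $\Gm = (\Im_k \mid \Pm)$ in systematic form yields, for every $i \in \Iintv{1}{k}$, the three relations $Y_i = -\sum_{j=k+1}^{n-1} p_{i,j} Y_j$, $Y_i X_i = -\sum_{j=k+1}^{n-1} p_{i,j} Y_j X_j$ and $Y_i X_i^2 = -\sum_{j=k+1}^{n-1} p_{i,j} Y_j X_j^2 - p_{i,n} Y_n$. The first two are identical to the ones used in deriving $\cS$ because the last column of the modified matrix vanishes in rows $1$ and $2$; only the third picks up the extra term $-p_{i,n} Y_n$ coming from the $(3,n)$ entry, which is now $y_n$ rather than $y_n x_n^2$. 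Note also that the summation ranges now stop at $n-1$ instead of $n$, precisely because the coefficients of $Y_n$ and $Y_n X_n$ are zero in the first two rows.

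The final step is to plug these three expressions into the trivial identity $Y_i \cdot (Y_i X_i^2) = (Y_i X_i)^2$ and rearrange. The right-hand side gives a double sum $\sum_{j,j'} p_{i,j}p_{i,j'} Y_j Y_{j'} X_j X_{j'}$ indexed by $\Iintv{k+1}{n-1}^2$, while the left-hand side produces the same double sum with $X_j X_{j'}$ replaced by $X_{j'}^2$, plus an extra term $Y_n \sum_{j=k+1}^{n-1} p_{i,j} p_{i,n} Y_j$ contributed by the $-p_{i,n} Y_n$ piece in $Y_i X_i^2$. Subtracting and performing the same symmetrization as in the derivation of $\cS$ (the diagonal $j=j'$ cancels, and pairing $(j,j')$ with $(j',j)$ converts $X_{j'}^2 - X_j X_{j'}$ into $\tfrac{1}{2}(X_j - X_{j'})^2$ per unordered pair) turns the first double sum into the announced $\sum_{k+1 \le j < j' \le n-1} p_{i,j} p_{i,j'} Y_j Y_{j'} (X_j - X_{j'})^2$. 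Finally substituting $Y_n = 1$ gives $\cS'$ exactly as stated. There is no real obstacle here beyond bookkeeping; the only point worth stressing is that shrinking the range from $\Iintv{k+1}{n}$ to $\Iintv{k+1}{n-1}$ in the quadratic part is precisely what makes room for the new linear part $\sum_{j=k+1}^{n-1} p_{i,j} p_{i,n} Y_j$.
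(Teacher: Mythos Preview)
Your proof is correct and follows essentially the same approach as the paper: justify the specialization via Theorem~\ref{thm: cauchy}, write down the three modified linear relations coming from the last column of the Cauchy generator matrix \eqref{eq: gen_cauchy}, and substitute into $Y_i(Y_iX_i^2)=(Y_iX_i)^2$. You in fact give more detail on the symmetrization than the paper does. One small wording quibble: pairing $(j,j')$ with $(j',j)$ gives a total contribution of $(X_j-X_{j'})^2$ per unordered pair, not $\tfrac{1}{2}(X_j-X_{j'})^2$; your stated final formula is correct, so this is only an expository slip.
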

\begin{proof}
The fact that we can choose the support and the multiplier in this way follows from the fact that homographies act 3-transitively on the projective plane and from Theorem \ref{thm: cauchy}. To obtain the algebraic system we proceed similarly to what was done to obtain the algebraic system $\cS$: this time the Vandermonde matrix is
\[
  \Vm_3(\Xm, \Ym)\eqdef 
  \left[
    \begin{array}{rrr|rrr}
      Y_1 & \dots & Y_{n-3} & Y_{n-2} & Y_{n-1} & 0 \\
      Y_1 X_1 & \dots & Y_{n-3} X_{n-3} & 0 & Y_{n-1} & 0 \\
      Y_1 X_1^2 & \dots & Y_{n-3} X_{n-3}^2 & 0 & Y_{n-1} & 1 \\
    \end{array}
  \right],
\]
so that we get the relations
\begin{eqnarray*}
Y_i & = &                 \sum_{k+1\le j\le n-1} p_{i,j} Y_j\\
Y_i X_i & = & \sum_{k+1\le j\le n-1} p_{i,j} Y_j X_j \\
Y_i X_i^2 &= & \sum_{k+1\le j\le n-1} p_{i,j} Y_j X_j^2  + p_{i,n}.
\end{eqnarray*}
Substituting them into the relations
$
Y_i  (Y_i X_i^2) = (Y_i X_i)^2
$ for $i \in \Iintv{1}{k}$ gives $\cS'$.
\end{proof}

Our purpose is to solve the algebraic system $\cS'$ that is given by
$k$ polynomials of bidegree $(2,2)$ in the two blocks of variables $\Xm$
and $\Ym$ and involves $2(n-k)$ variables.  The polynomials can be
expressed in terms of the $\binom{n-k}{2}$ variables
\begin{align}
Z_{j,j'} = 
                 \begin{cases}
                   Y_j Y_{j'} (X_{j}-X_{j'})^2 & \text{ for } k+1\le j < j' \le n-1,\\
                   Y_j & \text{ for } j'=n, \text{ as } X_{n}=\infty \text{ and } Y_n=1.
                 \end{cases}
\end{align}
The classical way to solve such system is to first try to linearized
the system using the variables $Z_{j,j'}$, for that purpose it is
important to know the rank of the associated linearized system.

Its rank is trivially upper bounded by the number of expressions $Z_{j,j'} $, i.e. by $\binom{n-k}{2}$. However, in the high rate regime, \cite{FGOPT13} proposed an algebraic heuristic explaining why there is a tighter upper bound. This heuristic was confirmed by intensive computations which showed that the upper-bound was indeed tight. In \cite{MT21} a proof of the upper bound was given. These results lead us to make the following assumptions, given for the case $r=3$ of interest, that will be satisfied for a random alternant code.

\begin{assumption}[Random alternant code]\label{ass:dimA}
  We assume that $\code{A}_3(\xv,\yv)$ is in standard form, and that
  its dimension satisfies $k = n-rm = n-3m$.
\end{assumption}
\begin{assumption}[High rate regime] \label{assumption: rank}
If  $q\ge 3$, we assume that
\begin{align*}
\rank(\cS) =\rank(\cS') = \binom{3m}{2}-m \le k.
\end{align*}
\end{assumption}
Note that for $q=2$, the rank is smaller:
\begin{assumption}[High rate regime] \label{assumption: rank q=2}
If  $q=2$, we assume  that
\begin{align*}
  \rank(\cS) = \rank(\cS') = \binom{3m}{2}-3m \le k.
\end{align*}
\end{assumption}
This implies that, even after the change of variables $Z_{j,j'}$, the number of unknowns is larger than the number of independent polynomials, and linearization techniques are not enough to solve the system. Therefore, in the following we are going to explain how to tackle this problem with more advanced techniques, namely Gr\"obner basis. 

Note that for any solution $(\xv,\yv)$ to the system, the vector $(\xv^q,\yv^q)$ defined by taking the $q$-st power coordinate for each coordinate is still a solution to the system. This mean that we can expect at least $m$ solutions.

\subsection{The algorithm for $q$ odd}
\label{sec:qodd}
Generic Gr\"obner basis algorithms are not expected to solve efficiently systems with the same degree and same number of unknowns and equations as the one described before. Here however, some expedients can be taken into account to exploit the very strong algebraic structure and specific shape of the equations involved.

We present here an ad hoc algorithm based on Gr\"obner basis techniques, that
recovers the secret key in polynomial time.
We start from the system $\cS'$ given in \eqref{eq:Sprime}, under
Assumptions~\ref{ass:dimA} and \ref{assumption: rank}, and we
agglomerate in our strategy the constraints $Y_j\ne 0$ and
$X_j-X_{j'}\ne 0$. This section deals with the odd case (\ie when $q$
is the power of an odd prime), and next one with the even case $q=2^s$
with $s>1$.
This covers for instance the case $q=3$, for which a full key recovery
can be achieved, thanks to the filtration of alternant codes. The
proofs and detailed explanations are postponed in
Section~\ref{sec:alt:justif}.
Our algorithm consists of the following steps:
\begin{enumerate}
\item \label{step:1} {\bf Echelonizing system $\cS'$} We compute a basis of the $\Fq$-vector space $\code{S'}$ generated
  by $\cS'$. It has the shape
  \begin{align}\label{eq:1stbasis}
    \begin{cases}
      Y_jY_{j'}(X_j-X_{j'})^2 + L_{j,j'}(Y_{\ell}: \ell\in I)& \forall k+1\le j < j' \le n-1\\
      Y_{n-2}Y_{n-1} + L_{n-2,n-1}(Y_{\ell}:\ell\in I) \\
      Y_j - L_{j,n}(Y_{\ell}: \ell \in I)  & \forall k+1\le j \le n-1, j \notin I.
    \end{cases}
   \end{align}
   where $I\subset\Iintv{k+1}{n-1}$ has size $m$, and the $L_{j,j'}$'s are
   linear functions of the $Y_\ell$'s, $\ell\in I$.  In particular, it
   contains $2m-1$ homogeneous linear polynomials in $Y$, that come
   from the choice $X_n=\infty$ and $Y_n=1$ (see
   Proposition~\ref{prop:rankS} for the proof). This can be done in
   $\bigO{m^{2\omega}}$ operations in $\Fq$, where $\omega$ is the
   exponent of linear algebra.
\item \label{step:2} {\bf Removing the $Y_j = 0$ component} For each
   $j\in\Iintv{k+1}{n-1}$, we prove that there exists a set of $2m-1$
   linearly independent polynomials in $\code{S'}$ that are multiple
   of $Y_j$. As we know that our solution satisfies $Y_j\ne 0$, we add
   to the system the set $\cV_j$ of these polynomials divided by $Y_j$
   (see Proposition~\ref{prop:Vj} for details and proof). This has the
   effect to add $(2m-1)(3m-1)$ linearly independent
   polynomials of degree $3$ to the system, and to remove the
   non-zero-dimensional component from the solution set.  Note that
   Step~\eqref{step:1} corresponds to the computation of $\cV_n$, as
   $Y_n=1$. The cost for all $j$ is $\bigO{m^{\omega+1}}$.
\item \label{step:3} {\bf Adding bilinear polynomials} For
  each $j\in \Iintv{k+1}{n-1}\setminus\lbrace n-2\rbrace$, we consider
  the vector spaces $\code{U}_{j,n-2}$ formed by the polynomials
  $p$ such that $X_jp\in \code{V}_j+\code{V}_{n-2}$, where
  $\code{V}_j$ is the $\Fq$-vector space generated by $\cV_j$. We
  prove in Proposition~\ref{prop:Uj} that
  $\dim_{\Fq}(\code{U}_{j,n-2})\ge m$. Experimentally, this set has
  dimension exactly $m$. As we know that our solution satisfies
  $x_j\ne 0=x_{n-2}$, we add to the system a basis $\cU_{j,n-2}$ of
  $\code{U}_{j,n-2}$. This has the effect to add  (at least) $m(3m-2)$ linearly
  independent polynomials of degree $2$ to the system. The cost for all $j$ is
  $\bigO{m^{\omega+1}}$.
\item \label{step:4} {\bf Eliminating $2m-1$ variables $Y_j$ using the
  linear polynomials from Step~\eqref{step:1}} We now eliminate $2m-1$
  variables $Y$ from the polynomials in $\cU_{j,n-2}$ using
  the $2m-1$ homogeneous linear polynomials in $Y$ from
  Step~\eqref{step:1}. This step is heuristic, and verified experimentally:
  the resulting system admits a basis with the shape:
  \begin{align*}
    \begin{cases}
    Y_j X_{j'} + f_{j,j'}(Y_\ell : \ell \in I, 1), &\text{ for all } j\in I, j'\in\Iintv{k+1}{ n-3},\\
    \sum_{j\in I} a_jY_j - 1, & \text{ with } a_j\in\fq.
  \end{cases}
  \end{align*}
  where the $f_{j,j'}$ are linear functions of the $Y_\ell$'s. The
  basis contains an affine linear polynomial in $Y$, so
  that we get in total all $2m$ linearly independent linear
  polynomials in $Y_j$ that we can expect (see
  Proposition~\ref{prop:dimalternant}). The cost of this step is $\bigO{m^{2\omega}}$.
\item \label{step:5} {\bf Computing linear polynomials in the $X$
  variables} Provided that Step~\eqref{step:4} occurred
  as described, we deduce for each $j$ one affine linear polynomial
  expressing $X_j$ in terms of the $Y$'s
  (see Proposition~\ref{prop:linX}):
  \begin{align*}
\left\{    X_{j'} + \sum_{j\in I} a_jf_{j,j'}(Y_\ell : \ell \in I, 1) \mid j'\in\Iintv{k+1}{n-3}\right\}.
  \end{align*}
  The cost is $\bigO{m^2}$.
\item {\bf Computing the Gröbner basis} \label{step:6} By eliminating the
  $X_i$'s from the polynomials $Y_jX_{j'}+f_{j,j'}(Y_\ell : \ell\in I, 1)$, we get the final
  grevlex Gröbner basis of the system, that is experimentally
  \begin{align*}
    \begin{cases}
    Y_jY_{j'} - L'_{j,j'}(Y_i: i \in I_1,1) & j, j'\in I_1\\
    X_j - L'_{X_j}(Y_i: i \in I_1,1) & j \in \Iintv{k+1}{n-3}\\
    Y_j - L'_{Y_j}(Y_i: i \in I_1,1) & j \notin I_1
  \end{cases}
  \end{align*}
  for a set $I_1\subset \Iintv{k+1}{n-1} $ of size $m-1$, where
  the functions $L'$ are affine linear function.  This describes a
  variety of dimension 0 with $m$ solutions, that are exactly the $m$
  solutions obtained by applying the Frobenius morphism. The cost is
  $\bigO{m^{2\omega}}$.
\item {\bf Computing the solutions} \label{step:7} The lex basis can be
  obtained using the FGLM Algorithm from \cite{FGLM93} with $\bigO{m^{4}}$
  operations in $\Fq$, and allows to retrieve the $m$ solutions by
  factorization of a polynomial over $\Fq$ of degree $\le m$.
\item \label{step:8} The final step consists in retrieving separately the values
  for  $Y_1,\dots,Y_k$ and $X_1,\dots,X_k$  from \eqref{eq: systematic_form} for $e=0$ and $e=1$. This
  costs $\bigO{nm}$ operations in $\Fq$.  This needs to be done only once, since with the chosen specialization any of the $m$ solutions is a valid pair of support and multiplier coordinates, thus we can choose arbitrarily any of them.
\end{enumerate}
Note that all steps are just linear algebra, and the total complexity
is polynomial in $m$ and $n$, the global cost being $\bigO{m^{2\omega}+nm}$ as $m,n\to\infty$.

\subsection{The algorithm for $q$ even, $q>2$}
\label{sec:qeven}
We present here the differences for the case of characteristic 2. We
still assume Assumptions~\ref{ass:dimA} and \ref{assumption:
  rank}. Note that the case $q=2$ is very different, we will consider
this case later in Section~\ref{sec:q=2}. We assume here that $q=2^s$
with $s>1$.

First of all, in this case the system $\cS'$ can be rewritten as
\[
  \cS'=\left\{\sum_{k+1\le j<j'\le n-1} p_{i,j}p_{i,j'}Y_j Y_{j'}
    (X_j^2+X_{j'}^2) + \sum_{j=k+1}^{n-1} p_{i,j}p_{i,n}Y_j \mid
    i\in\Iintv{1}{k}\right\}.
\]
Since the $X_j$'s variables appear in the system with power 2 only, we can perform a change of variables by defining $W_j \eqdef X_j^2$, so that the system becomes
\begin{align}\label{eq:Sprime2}
\cS'_{2}=\left\{\sum_{k+1\le j<j'\le n-1} p_{i,j}p_{i,j'}Y_j Y_{j'} (W_j+W_{j'}) + \sum_{j=k+1}^{n-1} p_{i,j}p_{i,n}Y_j \mid i\in\Iintv{1}{k}\right\}.
\end{align}
Therefore, polynomials in $\cS'_2$ have bidegree $(1,2)$ in
$\Wm\eqdef\{W_1,\dots,W_n\}$ and $\Ym$ respectively. This simple trick
provides an effective speed up to the resolution.

Steps \eqref{step:1}--\eqref{step:2} are identical, we compute for
each $j\in\Iintv{k+1}{n-1}$ the vector space $\code{V}_j$ that has
dimension $2m-1$. The difference is that the polynomials in $\cV_j$
for $j\le n-1$ are linear combination of the monomials
$Y_{j'}(W_j+W_{j'})$, $j'\in\Iintv{k+1}{n-1}$, $j'\ne j$, and 1, hence
already bilinear. Step~\eqref{step:3} is unnecessary.

Step~\eqref{step:4}: using the linear
polynomials in $\cV_n$, we can directly eliminate $2m-1$ variables
$Y_j$ from the $(3m-1)(2m-1)$ bilinear polynomials in
$\cup_{j=k+1}^{n-1}\cV_j$. We obtain experimentally a basis of the
shape
  \begin{align}
  \begin{cases}
    Y_jW_{j'} + g_{j,j'}(Y_\ell: \ell \in I, 1),
    &\text{ for all } j\in I, j'\in\Iintv{k+1}{ n-3},\\
    \sum_{j\in I} a_jY_j - 1, & \text{ with } a_j\in\fq.
  \end{cases}
  \end{align}

  Steps \eqref{step:5}--\eqref{step:8} are identical to the odd case.
Remark that retrieving the
  value $x_j$ from the values ${w}_j$ is just a square root
  computation in characteristic 2.

  The final asymptotic cost is the same, $\bigO{m^{2\omega}+nm}$ as $n,m\to\infty$.

\subsection{Theoretical  and experimental validation of the algebraic algorithm}
\label{sec:alt:justif}
We start with a property that will be useful to determine the number of
linearly independent linear polynomials in $\code{S}'$, for any $q$ and any $r\ge 3$. Recall that we assumed without loss of generality that $y_{n}=1$.
\begin{proposition}
  \label{prop:dimalternant}
  Let  $\CC$ be the $\Fqm$ linear code  generated by $(y_{k+1},\dots,y_{n-1})$ in $\Fqm^{mr-1}$. Then, under Assumption~\ref{ass:dimA}, we have
  \begin{align*}
    \dim_{\Fq}(\trace{\CC}) &= m,\\
    \dim_{\Fq}(\dual{\trace{\CC}})&=(r-1)m-1.
  \end{align*}
  As any $\Fq$-linear combination of the $y_j$'s that is equal to zero
  provides a codeword in $\dual{\trace{\CC}}$, therefore there cannot
  be more than $(r-1)m-1$ linearly
  independent homogeneous $\Fq$-linear polynomials in
  $Y_{k+1},\dots,Y_{n-1}$ which cancel on $y_{k+1},\dots,y_{n-1}$, and no more than $(r-1)m$ linearly independent affine $\Fq$-linear
  polynomials in $Y_{k+1},\dots,Y_{n-1}$ that cancel on
  $(y_{k+1},\dots,y_{n-1},1)$.

  Equivalently, for all $j\in\Iintv{ k+1}{n-1}$, the code $\CC_j'\subset\Fqm^{mr-1}$ generated by
  $
  (y_{k+1}(x_{k+1}-x_{j})^{r-1},\dots,y_{n-1}(x_{n-1}-x_{j})^{r-1},1)\in\Fqm^{mr}$ and punctured in position $j-k$
  satisfies 
  \begin{align*}
    \dim_{\Fq}(\trace{\CC_j'}) &= m,\\
    \dim_{\Fq}(\dual{\trace{\CC_j'}})&=(r-1)m-1.
  \end{align*}
\end{proposition}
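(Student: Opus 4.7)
My plan is to reduce the identity $\dim_{\Fq}(\trace{\CC}) = m$ to the $\Fq$-spanning statement $\Fqspan{y_{k+1}, \dots, y_{n-1}} = \Fqm$. Since $\CC$ is one-dimensional over $\Fqm$ with generator $(y_{k+1}, \dots, y_{n-1})$, the standard duality between trace codes and $\Fq$-subfield spans gives
\[
\dim_{\Fq}(\trace{\CC}) = \dim_{\Fq}\left(\Fqspan{y_{k+1}, \dots, y_{n-1}}\right) \le m,
\]
and the second dimension formula $\dim_{\Fq}(\dual{\trace{\CC}}) = (r-1)m - 1$ follows by subtraction. The final sentence of the proposition is then the coordinate description of the same dual code: a tuple $(c_{k+1}, \dots, c_{n-1}) \in \Fq^{rm-1}$ lies in $\dual{\trace{\CC}}$ iff $\sum_j c_j y_j = 0$ in $\Fqm$, so the number of homogeneous $\Fq$-linear polynomials in $Y_{k+1}, \dots, Y_{n-1}$ vanishing on $(y_{k+1}, \dots, y_{n-1})$ is at most $(r-1)m - 1$, and the affine count is at most $(r-1)m$ after appending $y_n = 1$.

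The $\Fq$-spanning property is obtained from Delsarte combined with the information-set structure of the standard form. First, Delsarte's theorem (Theorem~\ref{thm: Delsarte}) and Assumption~\ref{ass:dimA} yield
\[
\dim_{\Fq}(\dual{\AC}) = \dim_{\Fq}(\trace{\GRS{r}{\xv}{\yv}}) = rm,
\]
which is equivalent to $\Fq$-injectivity of the evaluation map $(P_0, \dots, P_{r-1}) \in \Fqm^r \mapsto (\Tr{\sum_e P_e y_j x_j^e})_{j=1}^n$; specialising to constant polynomials, $\lambda \mapsto (\Tr{\lambda y_j})_{j=1}^n$ is already $\Fq$-injective, hence $\Fqspan{y_1, \dots, y_n} = \Fqm$. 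Second, the standard form $\Gm = (\Im_k \mid \Pm)$ produces the parity-check matrix $(-\Pm^T \mid \Im_{rm})$ which makes $\Iintv{k+1}{n}$ an information set for $\dual{\AC}$, so any nonzero codeword of $\dual{\AC}$ vanishing on $\Iintv{k+1}{n-1}$ must be a scalar multiple of the last parity-check row. Plugging $E_\lambda = (\Tr{\lambda y_j})_{j=1}^n$ for $\lambda$ in the alleged kernel then forces the rigid equalities $\Tr{\lambda y_j} = -\Tr{\lambda}\, p_{j,n}$ on $\Iintv{1}{k}$, which for a random alternant code are incompatible with $\lambda \ne 0$ unless the $m$-dimensional $\Fq$-subspace $\{E_\lambda : \lambda \in \Fqm\}$ happens to contain this last row --- an exceptional coincidence excluded by the random-alternant reading of Assumption~\ref{ass:dimA}.

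The companion claim for $\CC_j'$ follows verbatim once one observes that its generator $(y_{k+1}(x_{k+1}-x_j)^{r-1}, \dots, y_{n-1}(x_{n-1}-x_j)^{r-1}, 1) \in \Fqm^{rm}$, punctured at the position $j-k$ where the factor vanishes, is the evaluation of the degree-$(r-1)$ polynomial $(z - x_j)^{r-1}$ on the support $\puv{\xv}{j}$ with multiplier $\puv{\yv}{j}$, together with an appended coordinate playing the role of $y_n = 1$; since the $x_\ell$ are pairwise distinct and $y_\ell \ne 0$ for $\ell \ne j$, all its entries are nonzero and it is of the same type as $(y_{k+1}, \dots, y_{n-1})$, so the Delsarte plus information-set argument applied inside $\Alt{r}{\puv{\xv}{j}}{\puv{\yv}{j}}$ (equivalently, inside the appropriate shortening of $\Alt{r+1}{\xv}{\yv}$) yields the same dimension count. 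The main obstacle I anticipate is the last step of the second paragraph: under Assumption~\ref{ass:dimA} alone one only obtains that the kernel of $\lambda \mapsto (\Tr{\lambda y_j})_{j=k+1}^{n-1}$ has $\Fq$-dimension at most $1$, and ruling out the exceptional one-dimensional parasitic kernel requires the random-alternant genericity implicit in the ``standard form'' clause of Assumption~\ref{ass:dimA}.
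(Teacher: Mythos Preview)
Your reduction to the statement $\dim_{\Fq}\Fqspan{y_{k+1},\dots,y_{n-1}}=m$ is exactly right, and the paper argues the same way. The gap is in how you establish that equality: you never use the specialization $x_n=\infty$, and this is precisely the ingredient that makes the argument close under Assumption~\ref{ass:dimA} alone, without any appeal to genericity.

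With $x_n=\infty$ the parity-check matrix is the Cauchy matrix~\eqref{eq: gen_cauchy}, whose first row is $(y_1,\dots,y_{n-1},0)$, not $(y_1,\dots,y_{n-1},y_n)$. Restricted to the information set $\Iintv{k+1}{n}$, that first row is $(y_{k+1},\dots,y_{n-1},0)$. The paper's argument is then purely additive: the $r$ rows of the restricted Cauchy matrix span an $\Fqm$-code whose trace code has dimension $mr$ (this is exactly the information-set condition), and since $\Tr{\Fqmspan{\vv_1,\dots,\vv_r}}=\sum_i\Tr{\Fqmspan{\vv_i}}$ with each summand of $\Fq$-dimension at most $m$, every summand must be exactly $m$. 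Applied to the first row this gives $\dim_{\Fq}\Fqspan{y_{k+1},\dots,y_{n-1},0}=m$, which is the claim. Your codeword $E_\lambda$ with last entry $\Tr{\lambda y_n}=\Tr{\lambda}$ is not the trace of a row of the Cauchy matrix; the correct codeword $\Tr{\lambda\cdot(\text{first row})}$ has a $0$ in position $n$, and with that correction your information-set argument closes immediately: vanishing on $\Iintv{k+1}{n-1}$ forces vanishing on all of $\Iintv{k+1}{n}$, hence everywhere, with no parasitic one-dimensional kernel to exclude.

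For $\CC_j'$ the paper stays inside the same matrix rather than passing to a shortened alternant code: it observes that the $\Fqm$-combination $\sum_{e=0}^{r-1}\binom{r-1}{e}(-x_j)^{r-1-e}(\text{row }e{+}1)$ of the Cauchy rows has entries $y_\ell(x_\ell-x_j)^{r-1}$ on $\Iintv{k+1}{n-1}$, entry $0$ at position $j$, and entry $1$ at position $n$ (the only row contributing at $n$ is the last, with coefficient $1$). This nonzero vector lies in the same $\Fqm$-row span, so by the same rank-weight argument it too has rank weight $m$, which is exactly $\dim_{\Fq}(\trace{\CC_j'})=m$. Your sketch via $\Alt{r}{\puv{\xv}{j}}{\puv{\yv}{j}}$ would require checking that the standard-form and full-dimension hypotheses survive shortening, which is not automatic.
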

\begin{proof}
  If the code $\Alt{r}{\xv}{\yv}$ has dimension $k=n-mr$ and is in
  standard form, then the last $n-k=mr$ columns of its parity-check
  matrix must be an information set, i.e. the last $mr$ columns of
  $\Vm_r(\xv,\yv)$ must generate a trace code with dimension $mr$.
  This means in particular that the first row 
  $(y_{k+1},\dots,y_{n-1},0)$  must have rank weight $m$, and this is the
  same for all $r$ rows.  Then, by elementary combination of rows, the
  trace code of the following code must still have dimension $m$:
  \begin{align*}
    \begin{pmatrix}
            0 & y_{k+2}(x_{k+2}-x_{k+1})^{r-1} & \dots  & y_{n-1}(x_{n-1}-x_{k+1})^{r-1} & 1
    \end{pmatrix}
  \end{align*}
  and this can be done for any $x_j$ instead of $x_{k+1}$, hence the
  proposition.
\end{proof}
We now give details for our algorithm for $r=3, q\ge 3$.
\subsubsection*{Step~\eqref{step:1}: Echelonizing system $\cS'$}
 We linearize the set of polynomials \eqref{eq:Sprime}, by replacing
 ``polynomials'' by variables, instead of classically replacing any
 monomial by a new variable. The variables we consider are:
 \begin{align*}
   Z_{j,j'} & =
              \begin{cases}
 Y_jY_{j'}(X_j-X_{j'})^2 & \text{ for } k+1\le j < j' \le n-1\\
 Y_j  & \text{ for } j\in\Iintv{ k+1}{ n-1}, j'=n.
\end{cases}
 \end{align*}

 \begin{proposition}\label{prop:rankS}
   Assume that $q\ge 3$.
   Under Assumptions~\ref{ass:dimA} and \ref{assumption: rank}, for
   any set $I\subset\Iintv{k+1}{n-1}$ of size $m$ such that
   $\dim_{\fq}(\langle y_\ell : \ell \in I\rangle_{\fq})=m$, a basis
   of $\code{S'}$ is given by
   \begin{align*}
    \begin{cases}
      Y_jY_{j'}(X_j-X_{j'})^2 + L_{j,j'}(Y_{\ell}: \ell\in I)& j,j' \in \Iintv{k+1}{n-1},\;j < j'\\
      Y_{n-2}Y_{n-1} + L_{n-2,n-1}(Y_{\ell}:\ell\in I) \\
      Y_j - L_{j,n}(Y_{\ell}: \ell \in I)  & \forall j \in \Iintv{k+1}{n-1} \setminus I.
    \end{cases}
   \end{align*}
   where the $L_{j,j'}$ are linear functions of the $Y_\ell$'s,
   $\ell\in I$ (note that $L_{j,j'}$ implicitly depends on $I$). This basis
   can be computed in time $\bigO{m^{2\omega}}$ where $\omega$ is the
   constant of linear algebra.

   For $q$ even, we get the same basis for $\cS'_2$ with the terms
   $(X_j-X_{j'})^2$ replaced by $W_j+W_{j'}$.
\end{proposition}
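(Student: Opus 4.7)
The plan is to view the polynomials of $\cS'$ as vectors in the $\binom{3m}{2}$-dimensional $\fq$-space indexed by the linearization monomials $\{Z_{j,j'}\}$ and analyze $\code{S'}$ through the natural splitting into its quadratic and linear parts. Concretely, let $\pi\colon \code{S'}\to\fq^{\binom{3m-1}{2}}$ be the projection onto the coordinates $Z_{j,j'}$ with $k+1\le j<j'\le n-1$, and let $\code{L}\eqdef \ker\pi$ be the subspace of polynomials in $\code{S'}$ involving only the $Y_j$'s.

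First I would force the dimensions. Any polynomial in $\code{L}$ is a homogeneous $\fq$-linear polynomial in $Y_{k+1},\dots,Y_{n-1}$ that must vanish on the secret multiplier $(y_{k+1},\dots,y_{n-1})$, so Proposition~\ref{prop:dimalternant} (with $r=3$) gives $\dim_\fq\code{L}\le 2m-1$. Combining with the trivial bound $\dim_\fq\mathrm{Im}\,\pi\le\binom{3m-1}{2}$ and the rank assumption $\dim_\fq\code{S'}=\binom{3m}{2}-m=\binom{3m-1}{2}+(2m-1)$, both inequalities must be equalities: $\pi$ is surjective and $\dim_\fq\code{L}=2m-1$.

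Next, given $I\subset\Iintv{k+1}{n-1}$ of size $m$ such that $(y_\ell)_{\ell\in I}$ is $\fq$-linearly independent (hence an $\fq$-basis of $\mathrm{span}_\fq(y_{k+1},\dots,y_{n-1})$, whose dimension is $m$ by Proposition~\ref{prop:dimalternant}), the $(2m-1)$-dimensional space $\code{L}$ admits a unique reduced row echelon basis with pivots $\{Y_j:j\in\Iintv{k+1}{n-1}\setminus I\}$ and tails in $\mathrm{span}_\fq(Y_\ell:\ell\in I)$, producing the polynomials $Y_j-L_{j,n}(Y_\ell:\ell\in I)$. For the quadratic part, surjectivity of $\pi$ lets me lift each standard basis vector $Z_{j,j'}$ ($k+1\le j<j'\le n-1$) to a preimage in $\code{S'}$; using the linear polynomials just computed, I reduce the $Y$-tails of these lifts to lie in $\mathrm{span}_\fq(Y_\ell:\ell\in I)$. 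Translating back via $Z_{j,j'}=Y_jY_{j'}(X_j-X_{j'})^2$ and using $X_{n-2}=0$, $X_{n-1}=1$ to rewrite $Z_{n-2,n-1}=Y_{n-2}Y_{n-1}$ yields exactly the claimed basis, and in the $q$ even case the same argument applies verbatim with $Z_{j,j'}=Y_jY_{j'}(W_j+W_{j'})$.

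The main subtlety is that the pivot structure requires \emph{both} the a priori upper bound from Proposition~\ref{prop:dimalternant} and the tight rank from Assumption~\ref{assumption: rank}: without the identity $\binom{3m}{2}-m=\binom{3m-1}{2}+(2m-1)$, neither the surjectivity of $\pi$ nor the size of $\code{L}$ could be pinned down. Once these are aligned, the whole basis is produced by a single Gaussian elimination on the $k\times\binom{3m}{2}$ Macaulay matrix of $\cS'$, costing $\bigO{m^{2\omega}}$ operations in $\fq$.
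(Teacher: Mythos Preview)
Your proposal is correct and follows essentially the same route as the paper. The paper phrases the argument as ``eliminate the $\binom{3m-1}{2}$ higher-degree linearization variables from the $\binom{3m}{2}-m$ independent equations, leaving at least $2m-1$ linear polynomials in the $Y_j$'s, which is also the maximum by Proposition~\ref{prop:dimalternant},'' whereas you package the same count via the projection $\pi$ and the rank--nullity identity $\binom{3m}{2}-m=\binom{3m-1}{2}+(2m-1)$; the content and the complexity estimate are identical.
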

 \begin{proof}
   We have $\binom{3m}{2}-m$ polynomials in $\binom{3m}{2}$
  variables. Among the variables, $3m-1$ are of degree 1 (the $Y_j$'s
  for $k+1\le j \le n-1$), one is of degree 2
  ($Z_{n-2,n-1} = -Y_{n-2}Y_{n-1}$, as $X_{n-2}=0$ and $X_{n-1}=1$)
  and the last $\binom{3m}{2}-3m$ are of degree 4. We can eliminate
  from the system all terms of degree $4$ and $2$. As the
  polynomials are linearly independent, we get at least $2m-1$ linear
  polynomials in the $Y_j$'s.

  As by Proposition~\ref{prop:dimalternant}, we have at most
  $2m-1$ linear relations between the $Y_{k+1},\dots,Y_{n-1}$,
  hence we have exactly $2m-1$ linear polynomials in the $Y_j$'s
  expressing any $Y_j$ in terms of the $\{Y_{\ell} : \ell \in I\}$ for
  some $I\subset\Iintv{k+1}{n-1}$ of size $m$, and all other
  polynomials express the terms of degree $\ge 2$ in terms of the
  $\{Y_{\ell} : \ell \in I\}$.

  To compute the basis it is enough to compute an echelon form of a
  matrix of size $(\binom{3m}{2}-m)\times\binom{3m}{2}$, the cost is
  $\bigO{m^{2\omega}}$.
 \end{proof}

 \subsubsection*{Step \eqref{step:2}: removing the $Y_j=0$ component}
The linear polynomials we get come from the fact that we have
specialized the $n$-th component to $x_n=\infty$ and $y_n=1$. Here we
show that it is equivalently possible to introduce the constraint
$Y_j\ne 0$ for all $j\in\Iintv{ k+1}{n-1}$. Let
$\Xm=(X_{k+1},\dots,X_{n-3})$ and $\Ym=(Y_{k+1},\dots,Y_{n-1})$. We
define the vector spaces
\begin{align}
  \code{V}_j &= \frac1{Y_j}\left(\code{S'}\cap Y_j\cdot \fq{}[\Xm,\Ym]_{\le 3}\right), & j\in\Iintv{k+1}{n-1}
\end{align}
that is
$\code{V}_j \eqdef \langle \frac{h_1}{Y_j},\dots,\frac{h_\ell}{Y_j}\rangle_{\fq}$
where $\lbrace h_1,\dots,h_\ell\rbrace$ is a basis of $\code{S'}\cap (Y_j\cdot
\fq{}[\Xm,\Ym]_{\le 3})$. We also define
\begin{align}
    \code{V}_n &= \langle Y_j - L_{j,n}(Y_\ell, \ell \in I)\rangle_{j\in\lbrace k+1..n-1\rbrace, j\notin I}.
\end{align}
\begin{proposition}\label{prop:Vj}
Under Assumptions~\eqref{ass:dimA} and \eqref{assumption: rank}, for any $q\ge 3$,
  \begin{align}
    \dim_{\fq}(\code{V}_j) &= 2m-1 & \forall j \in \Iintv{k+1}{n}\end{align}
  and any polynomial in $\code{V}_j$ is a linear combination of the $3m-1$
terms
  \begin{align*}
    \begin{cases}
      Y_{j'}(X_j-X_{j'})^2, &  j'\in\Iintv{k+1}{n-1}, j'\ne j\\
      1
    \end{cases}
  \end{align*}
  We also have, for each $j_1,j_2\in \Iintv{k+1}{ n-1}$ with $j_1\ne j_2$:
  \begin{align*}
    \dim_{\fq}(\code{V}_{j_1}+\code{V}_{j_2}) &= 4m-2.
  \end{align*}
  A basis $\cV_j$ of $\code{V}_j$ can be computed in time $\bigO{m^\omega}$ from the basis~\eqref{eq:1stbasis} of $\code{S'}$ and the set of all
  $\cV_j$'s can be computed in time $\bigO{m^{\omega+1}}$.
\end{proposition}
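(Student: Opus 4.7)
The plan is a sandwich argument combining a structural upper bound from Proposition~\ref{prop:dimalternant} with a dimension-count lower bound from Assumption~\ref{assumption: rank}. I treat $j\in\Iintv{k+1}{n-1}$ as the main case; for $j=n$, $\code{V}_n$ is defined as the $\fq$-span of $2m-1$ manifestly linearly independent polynomials, so both the dimension claim and the form of its elements are immediate by construction.

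First I set up the ambient linearization space $\mathcal{L}$ of $\fq$-dimension $\binom{3m}{2}$, spanned by the monomials $Y_{j_1}Y_{j_2}(X_{j_1}-X_{j_2})^2$ (for $k+1\le j_1<j_2\le n-1$) together with $Y_{j_1}$ (for $k+1\le j_1\le n-1$); by construction $\code{S'}\subseteq\mathcal{L}$. A polynomial in $\code{S'}$ is divisible by $Y_j$ exactly when every linearization monomial not containing $Y_j$ has zero coefficient, so any such polynomial reads $Y_j\bigl(\sum_{j'\ne j} c_{j'}\,Y_{j'}(X_j-X_{j'})^2 + d\bigr)$; this already yields the claimed $(3m-1)$-term span for $\code{V}_j$.

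For the upper bound $\dim\code{V}_j \le 2m-1$, I translate $Y_j q\in\code{S'}$ into $q(\xv,\yv)=0$ via $y_j\ne 0$. The coordinate map $q\mapsto(c_{j'},d)$ then injects $\code{V}_j$ into the $\fq$-vectors cancelling $(y_{k+1}(x_{k+1}-x_j)^2,\ldots,y_{n-1}(x_{n-1}-x_j)^2,1)\in\Fqm^{3m-1}$, i.e., into $\dual{\trace(\CC_j')}$, whose $\fq$-dimension is $(r-1)m-1=2m-1$ by Proposition~\ref{prop:dimalternant} (taking $r=3$). For the matching lower bound, let $W_j\subseteq\mathcal{L}$ be the $(3m-1)$-dimensional subspace generated by the linearization monomials containing $Y_j$, namely $Y_jY_{j'}(X_j-X_{j'})^2$ for $j'\ne j$ together with $Y_j$ itself. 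Multiplication by $Y_j$ realises an isomorphism $\code{V}_j\xrightarrow{\sim}\code{S'}\cap W_j$, so the Grassmann identity combined with Assumption~\ref{assumption: rank} gives
\[
\dim\code{V}_j = \dim(\code{S'}\cap W_j) \ge \dim\code{S'}+\dim W_j-\dim\mathcal{L} = \Bigl(\binom{3m}{2}-m\Bigr)+(3m-1)-\binom{3m}{2} = 2m-1,
\]
so the two bounds meet and $\dim\code{V}_j=2m-1$.

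For the direct sum identity $\dim(\code{V}_{j_1}+\code{V}_{j_2})=4m-2$ it is enough to show $\code{V}_{j_1}\cap\code{V}_{j_2}=\{0\}$. Writing a common $q$ in both structural forms and equating the $\fq[X]$-coefficient of each $Y_{j'}$ yields $c_{j'}(X_{j_1}-X_{j'})^2 = c'_{j'}(X_{j_2}-X_{j'})^2$; for $j_1\ne j_2$ the two sides have disjoint monomial supports in the $X$-variables (with a small separate check for the specialized positions $j_1,j_2\in\{n-2,n-1\}$), forcing $c_{j'}=c'_{j'}=0$ throughout. A leftover constant $q=d$ would then require $dY_{j_1}\in\code{S'}$, but the linear polynomials of $\code{S'}$ form the span $\langle Y_j-L_{j,n}\rangle_{j\notin I}$ from~\eqref{eq:1stbasis}, inside which $dY_{j_1}$ lies only when $d=0$. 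Finally, once the echelon basis~\eqref{eq:1stbasis} of $\code{S'}$ is available, a basis of $\code{V}_j$ is obtained by solving the $\fq$-linear system selecting combinations projecting to $0$ on $\mathcal{L}/W_j$, an $\bigO{m^\omega}$ task per $j$ and $\bigO{m^{\omega+1}}$ over all $j\in\Iintv{k+1}{n-1}$. I expect the coefficient bookkeeping in the specialized boundary positions to be the only mildly tedious piece; the rest is a clean dimension count.
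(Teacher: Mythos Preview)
Your argument is correct and follows the same overall sandwich strategy as the paper: the upper bound via Proposition~\ref{prop:dimalternant} is identical, and the intersection argument for $\code{V}_{j_1}\cap\code{V}_{j_2}=\{0\}$ is a more explicit version of the paper's one-line remark that the only possible common element is a constant (which cannot lie in $\code{S'}$ since $y_{j_1}\ne 0$).

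The one genuine difference is in the lower bound. The paper argues case-by-case from the echelon basis~\eqref{eq:1stbasis}: it isolates the $3m-1$ (or $3m-2$) basis elements that involve $Y_j$, observes they live in $W_j+\langle Y_\ell:\ell\in I\rangle$, and eliminates the $m$ (or $m-1$) coordinates $Y_\ell$ to produce $\ge 2m-1$ elements of $\code{S'}\cap W_j$. Your Grassmann argument on the full linearization space $\mathcal{L}$ is a cleaner packaging of the same idea and avoids the case split on whether $j\in I$. Both rely on Assumption~\ref{assumption: rank} in an essential way (you use it directly for $\dim\code{S'}$; the paper uses it via Proposition~\ref{prop:rankS} to get the basis~\eqref{eq:1stbasis}).

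One point to tighten: your complexity claim of $\bigO{m^\omega}$ per $j$ is right, but your stated justification (``solving the $\fq$-linear system selecting combinations projecting to $0$ on $\mathcal{L}/W_j$'') does not establish it as written. Naively that system has $\bigO{m^2}$ unknowns and $\bigO{m^2}$ constraints, giving $\bigO{m^{2\omega}}$. The paper's case analysis is precisely what buys the cheaper bound: from the echelon form~\eqref{eq:1stbasis} only $\bigO{m}$ basis elements can contribute to $\code{S'}\cap W_j$ (the others involve no monomial of $W_j$), and the elimination then involves only an $\bigO{m}\times\bigO{m}$ matrix. You should invoke this structure explicitly to support the $\bigO{m^\omega}$ claim.
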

\begin{proof}
  Choose any set $I\subset\Iintv{k+1}{n-1}$ of size $m$ such
  that  $\dim_{\fq}\left(\Fqmspan{( y_{\ell})_{\ell \in I}}\right)=m$.
  
  a) Consider first the case where $j\notin I$, and $j\le n-3$.  To
  compute $\code{V}_j$, we just take the polynomials in~\eqref{eq:1stbasis}
  that contain $Y_j$, they are the $3m-1$ polynomials:
  \begin{align*}
    \begin{cases}
      Y_jY_{j'}(X_j-X_{j'})^2 + L_{j,j'}(Y_{\ell}: \ell\in I)& \text{ for } j'\in\Iintv{ k+1}{ n-1}, j'\ne j\\
      Y_j - L_{j,n}(Y_{\ell}: \ell \in I).
    \end{cases}
  \end{align*}
We have $3m-1$ linearly independent polynomials in $m$ variables
  $Y_{\ell}:\ell\in I$ and $3m-1$ variables $ Y_jY_{j'}(X_j-X_{j'})^2$
  and $Y_j$. By eliminating the $Y_{\ell}:\ell\in I$ we get at least
  $2m-1$ polynomials that are multiple of $Y_j$.

  b) If $j\in I$, $j\le n-3$, we have
  $\dim_{\fq}\left( \Fqmspan{ (y_i)_{i\in\Iintv{k+1}{n-1}\setminus \{j\}}}\right)
  \in\lbrace m-1,m\rbrace$. If the dimension over $\Fq$ is $m$, then
we can take a different set $I$ that generates a $\Fq$-vector space
$\Fqmspan{(y_i)_{i\in I}}$ of dimension $m$ such that $j\notin I$. If
the dimension over $\Fq$ is $m-1$, then for any
$j'\in\Iintv{k+1}{n-1}\setminus I$, the linear polynomials
$Y_{j'}-L_{j',n}(Y_\ell : \ell \in I)$ does not involve $Y_j$ (or we
would have a linear polynomial expressing $Y_j$ as a $\Fq$-linear
combination of the terms $Y_{j'}$ and
$Y_{\ell} : \ell \in I\setminus\lbrace j\rbrace$, which is impossible
considering that
$\textstyle{\dim_{\fq} \left(\Fqmspan{ (y_i)_{i\in\Iintv{k+1}{n-1}}}\right)=m}$). In
this case, we take the $3m-2$ polynomials involving
$Y_jY_{j'}(X_j-X_{j'})^2$ for all $j'\ne j$, they contains those
$3m-2$ terms, the variable $Y_j$ and $m-1$ variables
$Y_\ell : \ell \in I, \ell \ne j$. By eliminating the
$Y_\ell, \ell \in I, \ell \ne j$ we get at least $2m-1$ linear
polynomials multiple of $Y_j$.

  c) For $j\in\Iintv{ n-2}{n-1}$, it is exactly the same as a) and b), but with
  one more polynomial involving one more variable $Y_{n-2}Y_{n-1}$.

  In any case, after dividing by $Y_j$, we get at least $2m-1$
  polynomials involving the monomials $Y_{j'}(X_{j'}-X_{j})^2$ for
  $j'\in\Iintv{k+1}{ n-1}\setminus\{ j\}$ and $1$, and all those
  polynomials evaluate to zero on the support and multiplier of the
  code.  Now, according to Proposition~\ref{prop:dimalternant} with
  $r=3$, $\dim_{\Fq}(\trace{\code{C}_j'})=m$, where $\code{C}_j'$ is the code of length $3m-1$
  generated by
  $ ( y_{k+1}(x_{k+1}-x_{j})^{2},\dots,y_{n-1}(x_{n-1}-x_{j})^{2},1)$
  and punctured in position $j-k$ (where its value is 0). Hence there
  can be at most $2m-1$ linear polynomials between the $3m-1$ terms
  $1$ and $Y_{j'}(X_{j'}-X_j)^2$, $j'\in\Iintv{k+1}{n-1}, j'\ne j$.

  Finally, the polynomials in two different $\code{V}_j$'s are
  linearly independent, as the only common polynomial that could
  belong simultaneously to two different $\code{V}_j$'s is $1$, and the ideal is not
  generated by $1$.
\end{proof}
 \subsubsection*{Step \eqref{step:3}: adding bilinear polynomials}
 The system given by the union of $\cS'$ and the $(3m-1)(2m-1)$ cubic polynomials $\cV_j, j \in\Iintv{k+1}{n-1}$ determined at Step \eqref{step:2} generates a zero-dimensional ideal, whose variety contains exactly $m$ solutions.
 It would be enough to run a Gr\"obner basis for this new system, in order to retrieve the support and multiplier. However, specifically for the $q$ odd case, we are able to deepen the analysis and use efficiently the constraints about support coordinates, \ie $X_{j_1}- X_{j_2}\ne 0$, by computing efficiently a set of bilinear polynomials. The latter do not refine the variety, this one being already finite, but their prediction allows to speed up the computation. 

\begin{proposition}\label{prop:Uj}
   The vector space $\code{U}_{n-2, n-1} \eqdef (\code{V}_{n-2}+\code{V}_{n-1})\cap\fq{}[\Xm,\Ym]_{\le 2}$ contains more than $ m$ linearly independent polynomials of degree $2$, that are linear combination of the terms $Y_j(2X_j-1)$ for $j\in\Iintv{k+1}{ n-1}$. We denote by
   $\cU_{n-2,n-1}$ this set of $m$ polynomials.

   Moreover, $\code{V}_{n-2}+\code{V}_{n-1}$ contains an additional
   polynomial of degree $2$ expressing the monomial 1 in terms of the
   $Y_j(2X_j-1)$ for $j\in\Iintv{k+1}{ n-1}$, denoted  by $u_{n-2,n-1}$.

   If the characteristic of the field is $2$, the $m+1$ polynomials
   are linear in $Y_j$ and $1$. The first $m$ polynomials in
   ${U}_{n-2,n-1}$ already belong to $\code{V}_{n}$, but the additional
   polynomial $u_{n-2,n-1}$ generates together with $\cV_n$ a vector
   space of dimension $2m$ (hence one more linear polynomial than the
   ones in $\cV_n$).
\end{proposition}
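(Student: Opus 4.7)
The plan is to analyze the sum $\code{V}_{n-2}+\code{V}_{n-1}$ monomial by monomial, using the explicit descriptions from Proposition~\ref{prop:Vj} and the specializations $X_{n-2}=0$, $X_{n-1}=1$. By Proposition~\ref{prop:Vj}, every polynomial of $\code{V}_{n-2}$ is an $\Fq$-linear combination of the monomials $Y_{j'}X_{j'}^2$ for $j'\in\Iintv{k+1}{n-3}$, of $Y_{n-1}$ (arising from $Y_{n-1}(X_{n-2}-X_{n-1})^2$), and of the constant $1$; similarly, each element of $\code{V}_{n-1}$ is a combination of $Y_{j'}(1-X_{j'})^2$ for $j'\in\Iintv{k+1}{n-3}$, of $Y_{n-2}$, and of $1$. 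The only degree-$3$ monomials in either summand are the $Y_{j'}X_{j'}^2$ for $j'\in\Iintv{k+1}{n-3}$. Requiring a sum $p=p_{n-2}+p_{n-1}$ to have total degree at most $2$ forces these coefficients to cancel pairwise, and writing out the cancellation yields the contribution $Y_{j'}\bigl(X_{j'}^2-(1-X_{j'})^2\bigr)=Y_{j'}(2X_{j'}-1)$; combined with the trivial identifications $Y_{n-1}=Y_{n-1}(2X_{n-1}-1)$ and $Y_{n-2}=-Y_{n-2}(2X_{n-2}-1)$, this shows that any element of $\code{U}_{n-2,n-1}$ is an $\Fq$-linear combination of the $Y_j(2X_j-1)$ for $j\in\Iintv{k+1}{n-1}$ and of $1$.

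For the dimension bound I would invoke Proposition~\ref{prop:Vj}, which asserts $\dim_{\Fq}(\code{V}_{n-2}+\code{V}_{n-1})=4m-2$, and then observe that the linear map extracting the coefficients of the $3m-3$ degree-$3$ monomials $Y_{j'}X_{j'}^2$ has image of dimension at most $3m-3$ and kernel equal to $\code{U}_{n-2,n-1}$. Rank--nullity gives $\dim_{\Fq}\code{U}_{n-2,n-1}\geq(4m-2)-(3m-3)=m+1$, which is the ``more than $m$'' in the statement.

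To isolate the subset $\cU_{n-2,n-1}$ of $m$ polynomials in the $Y_j(2X_j-1)$'s alone from the extra polynomial $u_{n-2,n-1}$ carrying a nonzero constant, I would restrict the constant-term functional $p\mapsto p(0)$ to $\code{U}_{n-2,n-1}$ and show that it is surjective onto $\Fq$. This is done via a Proposition~\ref{prop:dimalternant}-type trace-code computation applied to the ``twisted'' vector $\bigl(y_j(2x_j-1)\bigr)_{j\in\Iintv{k+1}{n-1}}$ in $\Fqm^{3m-1}$, pulled back to the Vandermonde structure through the identity $2x_j-1=(x_j-x_{n-2})^2-(x_j-x_{n-1})^2$; this pins down the space of homogeneous $\Fq$-linear combinations of the $Y_j(2X_j-1)$ vanishing on $(\xv,\yv)$ inside $\code{U}_{n-2,n-1}$ to dimension exactly $m$, forcing the $(m+1)$-th polynomial to have a nonzero constant term. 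The characteristic-$2$ claim then follows immediately: since $Y_j(2X_j-1)=-Y_j$, the $m$ homogeneous elements of $\cU_{n-2,n-1}$ are homogeneous $\Fq$-linear combinations of $Y_j$ that vanish on $\yv$ and hence sit inside $\cV_n$; the polynomial $u_{n-2,n-1}$, being affine linear in $Y$, is independent from $\cV_n$ (which is purely homogeneous), so $\cV_n+\langle u_{n-2,n-1}\rangle$ has dimension $2m$, matching the sharp bound of Proposition~\ref{prop:dimalternant}.

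The main obstacle will be the splitting argument in the third paragraph: verifying that the constant-term functional is not identically zero on $\code{U}_{n-2,n-1}$ (equivalently, that exactly one of the $m+1$ polynomials picks up a nonzero constant) requires carefully tracking the trace-code dimension of the twisted data $y_j(2x_j-1)$ and leveraging the genericity guaranteed by Assumption~\ref{ass:dimA}. The rest is essentially bookkeeping around Proposition~\ref{prop:Vj} and elementary linear algebra.
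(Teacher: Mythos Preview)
Your first two paragraphs reproduce the paper's proof faithfully: identify the monomial supports of $\cV_{n-2}$ and $\cV_{n-1}$ via Proposition~\ref{prop:Vj} together with $X_{n-2}=0$, $X_{n-1}=1$; observe that the only cubic terms are the $3m-3$ monomials $Y_jX_j^2$; and use $\dim_{\Fq}(\code{V}_{n-2}+\code{V}_{n-1})=4m-2$ with rank--nullity to get $\dim\code{U}_{n-2,n-1}\ge m+1$. The paper's counting is phrased slightly differently (eliminate the $3m-3$ cubic terms and the constant $1$ together to obtain $\ge m$ homogeneous polynomials, then note one more survives when only the cubics are eliminated), but the content is identical.

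Your third paragraph goes beyond what the paper actually proves. The paper does not establish that the $(m+1)$-th polynomial must carry a nonzero constant term; it merely asserts that ``we also get an additional polynomial of degree $2$ involving these $3m-1$ terms and the monomial $1$'' and later defers the sharp dimension $\dim\code{U}_{j_1,j_2}=m$ to the experimental Fact~\ref{fact:dimU}. Your proposed trace-code fix does not work as written: the single vector $(y_j(2x_j-1))_{j\in\Iintv{k+1}{n-1}}\in\Fqm^{3m-1}$ spans a trace code of $\Fq$-dimension at most $m$, so the space of \emph{all} homogeneous $\Fq$-linear relations among the $Y_j(2X_j-1)$ has dimension at least $2m-1$, not $m$. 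Knowing there are many such relations says nothing about how many of them lie inside $\code{U}_{n-2,n-1}$; to cap the homogeneous part of $\code{U}_{n-2,n-1}$ at $m$ you would need a further argument explaining why only $m$ out of these $\ge 2m-1$ relations arise from $\code{V}_{n-2}+\code{V}_{n-1}$, and Proposition~\ref{prop:dimalternant} does not supply that. You are right to flag this as the main obstacle, but the paper simply leaves it open rather than resolving it.

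Your characteristic-$2$ paragraph matches the paper: once $Y_j(2X_j-1)=-Y_j$, the homogeneous polynomials are linear in $\Ym$ and therefore lie in $\cV_n$ by Proposition~\ref{prop:dimalternant}, while an affine $u_{n-2,n-1}$ is independent of the homogeneous $\cV_n$. The same caveat applies, though: neither the paper nor your argument rigorously forces $u_{n-2,n-1}$ to be genuinely affine.
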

\begin{proof}
  The terms appearing in the polynomials in $\cV_{n-2}$ are
  $1, Y_{n-1}$ and $Y_jX_j^2$ for $j\in\Iintv{k+1}{ n-3}$. The ones in
  $\cV_{n-1}$ are $1, Y_{n-2}$ and $Y_{j}X_j^2 + Y_j(1-2X_j)$,
  $j\in\Iintv{k+1}{ n-3}$.  This means that the polynomials in
  $\code{U}_{n-2,n-1}$ can all be expressed as linear combination of
  the $3m-3$ monomials $Y_jX_j^2$ of degree 3, the monomial $1$ of
  degree $0$, plus $3m-1$ terms of degree at most $ 2$: the monomials $Y_{n-1}$,
  $Y_{n-2}$ and the $Y_j(2X_j-1)$'s, $j\in\Iintv{k+1}{ n-3}$. The dimension
  of the vector space $\code{U}_{n-2,n-1}$ is $4m-2$, so that we get
  at least $m$ linearly independent polynomials of degree $2$ in
  $\code{U}_{n-2, n-1}$ that are combination of $3m-1$ terms
  $(Y_{k+1}(2X_{k+1}-1),\dots,Y_{n-3}(2X_{n-3}-1),Y_{n-2},Y_{n-1})$.
  If the characteristic of the field is 2, then we get $m$ linear
  polynomials in $Y$, that are already in $\code{V}_n$ according to
  Proposition~\ref{prop:dimalternant}.

  In all cases, we also get an additional polynomial of degree 2
  involving these $3m-1$ terms and the monomial 1, and this gives in
  characteristic 2 an additional affine linear polynomial in $Y$.
\end{proof}

This can be generalized to the following vector spaces. For any
$k+1\le j_1 < j_2 \le n-1$, define the vector space
\begin{align}
  \code{U}_{j_1,j_2} &= \frac1{X_{j_1}-X_{j_2}}\left((\code{V}_{j_1}+\code{V}_{j_2})\cap (X_{j_1}-X_{j_2})\cdot \fq{}[\Xm,\Ym]_{\le 2}\right)
\end{align}
that consists of the polynomials $p$ such that
$p(X_{j_1}-X_{j_2})\in\code{V}_{j_1}+\code{V}_{j_2}$.
\begin{proposition} For any $k+1\le j_1 < j_2 \le n-1$, $(j_1,j_2)\ne(n-2,n-1)$, we have
  $\dim(\code{U}_{j_1,j_2})\ge m$, and the polynomials in $\code{U}_{j_1,j_2}$ are linear combination of the following terms:
  \begin{align*}
      Y_j(2X_{j}-X_{j_1}-X_{j_2}), && j\in\Iintv{k+1}{ n-1}.
  \end{align*}
\end{proposition}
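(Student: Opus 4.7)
The plan is to adapt the argument of Proposition~\ref{prop:Uj} to the generic case where $X_{j_1}-X_{j_2}$ is a non-constant polynomial. By Proposition~\ref{prop:Vj}, $\code{V}_{j_1}+\code{V}_{j_2}$ has dimension $4m-2$ and lies in the span of the $6m-1$ monomials $Y_{j'}(X_{j_1}-X_{j'})^2$ ($j'\ne j_1$), $Y_{j'}(X_{j_2}-X_{j'})^2$ ($j'\ne j_2$), and $1$. I would study the $\Fq$-linear evaluation map $\phi$ induced by the substitution $X_{j_1}\mapsto X_{j_2}$, noting that by construction $\ker\phi\cap(\code{V}_{j_1}+\code{V}_{j_2})=(X_{j_1}-X_{j_2})\cdot\code{U}_{j_1,j_2}$.

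For the dimension bound, I would observe that $\phi$ identifies $Y_{j'}(X_{j_1}-X_{j'})^2$ with $Y_{j'}(X_{j_2}-X_{j'})^2$ for $j'\ne j_1,j_2$, kills both $Y_{j_2}(X_{j_1}-X_{j_2})^2$ and $Y_{j_1}(X_{j_2}-X_{j_1})^2$, and fixes $1$. Hence the image sits in a space of dimension at most $3m-2$, and rank--nullity yields $\dim\code{U}_{j_1,j_2}\ge(4m-2)-(3m-2)=m$.

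For the shape of the polynomials, I would parametrize a generic kernel element by its coefficients on the $6m-1$ generators and solve the vanishing condition under $\phi$. This forces the coefficient of $1$ to be zero, pairs the coefficients of $Y_{j'}(X_{j_1}-X_{j'})^2$ and $Y_{j'}(X_{j_2}-X_{j'})^2$ with opposite signs for $j'\ne j_1,j_2$, and leaves $Y_{j_2}(X_{j_1}-X_{j_2})^2$ and $Y_{j_1}(X_{j_2}-X_{j_1})^2$ unconstrained. Applying the identity
\[
(X_{j_1}-X_{j'})^2-(X_{j_2}-X_{j'})^2=(X_{j_1}-X_{j_2})(X_{j_1}+X_{j_2}-2X_{j'})
\]
together with the trivial factorizations of the two remaining monomials, division by $X_{j_1}-X_{j_2}$ converts each generator of the kernel into a scalar multiple of $Y_j(2X_j-X_{j_1}-X_{j_2})$ for some $j\in\Iintv{k+1}{n-1}$, yielding the announced description by linearity.

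The main obstacle is only in sharpening the inequality: as in Proposition~\ref{prop:Uj}, experiments indicate $\dim\code{U}_{j_1,j_2}=m$ exactly, but the rank--nullity bound alone does not exclude a larger kernel. Proving equality would require an analogue of Proposition~\ref{prop:dimalternant} controlling the $\Fq$-linear relations among the evaluations $\bigl(y_j(2x_j-x_{j_1}-x_{j_2})\bigr)_{j\in\Iintv{k+1}{n-1}}$, essentially a dimension statement about a related trace code; the lower bound needed for the proposition itself, however, follows from the argument above.
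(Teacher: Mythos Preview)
Your argument is correct and is essentially the paper's proof rephrased in the language of the substitution homomorphism $\phi:X_{j_1}\mapsto X_{j_2}$: the paper instead rewrites the $\cV_{j_2}$-terms via the same identity $(X_j-X_{j_2})^2=(X_j-X_{j_1})^2+(2X_j-X_{j_1}-X_{j_2})(X_{j_1}-X_{j_2})$ and then eliminates the $3m-2$ terms not divisible by $X_{j_1}-X_{j_2}$, which is precisely your rank--nullity count on $\phi$. One small slip: the span of the $\cV_{j_1}$- and $\cV_{j_2}$-terms has $6m-3$ generators (two families of $3m-2$ terms sharing the constant $1$), not $6m-1$; this does not affect your argument since the image bound $3m-2$ and the source dimension $4m-2$ are what you actually use.
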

\begin{proof}
  The $2m-1$ polynomials in $\cV_{j_1}$ contains the following $3m-1$ terms:
  \begin{align*}
    \begin{cases}
      Y_j(X_j-X_{j_1})^2 & j\in\Iintv{k+1}{ n-1}, j\ne j_1\\
      1 & 
    \end{cases}
  \end{align*}
  It is the same for $\cV_{j_2}$, but we can rewrite, for $j\in\Iintv{k+1}{ n-1}, j\ne j_1,j_2$:
  \begin{align*}
  Y_j(X_j-X_{j_2})^2 = Y_j(X_j-X_{j_1})^2 + Y_j(2X_{j}-X_{j_1}-X_{j_2})(X_{j_1}-X_{j_2}),
  \end{align*}
  so that the $4m-2$ polynomials generating $\code{V}_{j_1}+\code{V}_{j_2}$ can be written in terms of the following terms:
  \begin{align*}
    \begin{cases}
      Y_j(X_j-X_{j_1})^2 & j\in\Iintv{k+1}{ n-1}, j\ne j_1,j_2,\\
      1 & \\
      Y_{j_1}(X_{j_1}-X_{j_2})^2 &\\
      Y_{j_2}(X_{j_1}-X_{j_2})^2& \\
      Y_j(2X_{j}-X_{j_1}-X_{j_2})(X_{j_1}-X_{j_2}), & j\in\Iintv{k+1}{ n-1}, j\ne j_1,j_2.
    \end{cases}
  \end{align*}
  If we eliminate the $3m-2$ first terms that are not multiple of
  $X_{j_1}-X_{j_2}$, we get at least $m$ linearly independent
  polynomials that are multiple of $X_{j_1}-X_{j_2}$. After division
  by $X_{j_1}-X_{j_2}$, the polynomials are linear combination of the
  $3m-1$ terms $Y_j(2X_j-X_{j_1}-X_{j_2})$ for $j\in\Iintv{k+1}{n-1}$.
\end{proof}
\begin{remark}
  If the characteristic of the field is 2, then we can divide directly
  by $X_{j_1}^2+X_{j_2}^2$ to get $m$ linear polynomials in the
  $Y_j$'s. As we cannot have more than $2m-1$ homogeneous linear
  polynomials in the $Y_j$'s if the multipliers $y_j$ generate $\fqm$
  over $\fq$, then all the new polynomials are linearly dependent from
  the previous ones. 
\end{remark}
Despite the existence of a quadratic number of vector spaces $\cU_{j_1,j_2}$, in practice  it is sufficient to exploit the polynomials derived from the $3m-2$ subspaces $\cU_{\ell,n-2},\; \ell \in \Iintv{k+1}{n-1}\setminus \{n-2\}$, thus reducing the complexity of this step.

\begin{fact}\label{fact:dimU}
  Experimentally, $\dim(\code{U}_{j_1,j_2})=m$ and for odd $q$,
  \begin{align*}
\dim_{\Fq}(\cup_{\ell\in\Iintv{k+1}{n-1}, \ell\ne n-2}\code{U}_{\ell,n-2}) &= m(3m-2).
  \end{align*}
\end{fact}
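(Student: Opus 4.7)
I would prove Fact~\ref{fact:dimU} in two stages. For the first claim $\dim(\code{U}_{j_1,j_2}) = m$, the lower bound is Proposition~\ref{prop:Uj}, so the task is the matching upper bound. My plan is to study the $\Fq$-linear projection $\pi: \code{V}_{j_1}+\code{V}_{j_2} \to W$, where $W$ is the $\Fq$-span of the $3m-2$ terms $\{Y_j(X_j-X_{j_1})^2 : j\in\Iintv{k+1}{n-1}\setminus\{j_1,j_2\}\}\cup\{1\}$, i.e.\ exactly those terms not divisible by $X_{j_1}-X_{j_2}$. The kernel of $\pi$ is $(X_{j_1}-X_{j_2})\cdot\code{U}_{j_1,j_2}$, so by rank--nullity and Proposition~\ref{prop:Vj} one has $\dim(\code{U}_{j_1,j_2}) = (4m-2)-\dim(\mathrm{im}\,\pi)$; hence the required upper bound is equivalent to surjectivity $\dim(\mathrm{im}\,\pi) = 3m-2$.

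To show surjectivity, I would analyze $\pi(\code{V}_{j_1})$ and $\pi(\code{V}_{j_2})$ separately. For $\code{V}_{j_1}$, which lies in the span of $\{Y_j(X_j-X_{j_1})^2 : j\neq j_1\}\cup\{1\}$, the projection merely kills the coefficient of $Y_{j_2}(X_{j_2}-X_{j_1})^2$, so $\pi(\code{V}_{j_1})$ has dimension $2m-1$ or $2m-2$. For $\code{V}_{j_2}$, using the identity $Y_j(X_j-X_{j_2})^2 = Y_j(X_j-X_{j_1})^2 + Y_j(2X_j-X_{j_1}-X_{j_2})(X_{j_1}-X_{j_2})$, I would rewrite its generators in the $W$-coordinates (plus multiples of $X_{j_1}-X_{j_2}$) and then discard the multiples. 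Generically one expects both projections to have full dimension $2m-1$, and $\pi(\code{V}_{j_1})\cap\pi(\code{V}_{j_2})$ to have the expected dimension $2(2m-1)-(3m-2)=m$, yielding $\dim(\mathrm{im}\,\pi)=3m-2$. To make this rigorous, I would identify each $\pi(\code{V}_{j_k})$ with the dual of a suitable trace code in the spirit of Proposition~\ref{prop:dimalternant}---specifically, the dual of the trace of the code generated over $\Fqm$ by the evaluation vector $(y_j(x_j-x_{j_1})^2)_j$ (resp.\ $(y_j(x_j-x_{j_2})^2)_j$), punctured in position $j_k-k$---and then argue that these two trace codes of the same length $3m-2$ are in generic position over $\Fqm$ by means of the same rank-weight argument as in the proof of Proposition~\ref{prop:dimalternant}.

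For the second claim, each $\code{U}_{\ell,n-2}$ ($\ell\neq n-2$) has dimension $m$ by the first part, and there are exactly $3m-2$ such indices, so the stated bound $m(3m-2)$ amounts to the spaces being in direct sum inside $\fq[\Xm,\Ym]_{\leq 2}$. My plan is to exploit the fact that polynomials in $\code{U}_{\ell,n-2}$ are $\Fq$-linear combinations of $\{Y_j(2X_j - X_\ell) : j\in\Iintv{k+1}{n-1}\}$ (using $X_{n-2}=0$). Thus a polynomial in $\code{U}_{\ell,n-2}$ contains the monomials $Y_jX_\ell$ with coefficients determined by the $Y_j$-coefficients of that polynomial; by projecting onto the coefficients of $\{Y_jX_\ell : j\neq\ell, n-2, n-1\}$ for a fixed $\ell$, I would identify each $\code{U}_{\ell,n-2}$ with its image in a distinct coordinate subspace tagged by $\ell$. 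Combining this with the established dimension $m$, a standard direct-sum argument gives the claimed total dimension.

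The main obstacle, as I see it, is the upper bound in the first claim: ruling out the pathological case where $\pi(\code{V}_{j_1})$ and $\pi(\code{V}_{j_2})$ overlap more than expected inside $W$. For a random alternant code this is generic, and the trace-code formulation connects the question to a rank-weight statement about the evaluation vectors $(y_j(x_j-x_{j_k})^2)_j$; but turning ``generic'' into a precise statement, valid for all $(\xv,\yv)$ satisfying Assumption~\ref{ass:dimA} (as opposed to holding with probability $1-o(1)$), likely requires a finer combinatorial analysis and may in fact only be accessible as an almost-sure statement---which would match the experimental status of the Fact.
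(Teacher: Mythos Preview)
The paper does not prove this statement. It is explicitly labeled a \emph{Fact} and prefaced with the word ``Experimentally''; the only rigorous content in the paper is the lower bound $\dim(\code{U}_{j_1,j_2})\ge m$ established in the preceding propositions, while the matching upper bound and the direct-sum dimension $m(3m-2)$ are recorded as empirical observations, verified by computation and then \emph{assumed} in Step~(4) of the algorithm. Your closing paragraph already anticipates this: you note that the surjectivity of $\pi$ is a genericity statement that ``may in fact only be accessible as an almost-sure statement---which would match the experimental status of the Fact.'' That is precisely the paper's position, so there is no proof in the paper to compare against.

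That said, your outline for the upper bound is a sensible attack and goes further than the paper. The rank--nullity reduction to surjectivity of $\pi$ is correct, and identifying $\pi(\code{V}_{j_k})$ with the dual of a trace code in the style of Proposition~\ref{prop:dimalternant} is the natural move. The genuine difficulty you flag---controlling $\pi(\code{V}_{j_1})\cap\pi(\code{V}_{j_2})$---is real: Proposition~\ref{prop:dimalternant} only gives the dimension of each trace code separately, not that two such trace codes (attached to different base points $x_{j_1},x_{j_2}$) are in general position inside $W$. For the second claim, your ``tag by $X_\ell$'' idea is on the right track (for $\ell\le n-3$ the monomials $Y_jX_\ell$ with $j\ne\ell$ occur only in $\code{U}_{\ell,n-2}$), but be aware that it does not immediately dispose of the diagonal monomials $Y_jX_j$, which are shared across all $\ell$, nor of the exceptional space $\code{U}_{n-1,n-2}$ whose elements involve only $Y_jX_j$ and $Y_j$; closing the argument would still require knowing that no nonzero element of $\code{U}_{\ell,n-2}$ is supported purely on $Y_\ell X_\ell$, which again is a genericity statement the paper does not supply.
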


\subsubsection*{Step \eqref{step:4}: eliminating $2m-1$ variables
  $Y_j$ using the linear polynomials from Step~\eqref{step:1}} We
consider the case $q$ odd. Assuming Fact~\ref{fact:dimU}, the system
$\oplus_{j_1\in\Iintv{k+1}{n-3}} {\cU}_{j_1,n-2}$ contains $m(3m-2)$
linearly independent polynomials, and they can all be expressed as
linear combination of the monomials $Y_j, j\in \Iintv{k+1}{n-1}$ and
$Y_{j}X_{j'}$ for $j\in\Iintv{k+1}{n-1}$ and
$j'\in\Iintv{k+1}{n-3}$.

The system $\cV_n$ contains $2m-1$ homogeneous linear polynomials in
the $Y_j$'s, expressing the $Y_i$'s for $i\notin I$ in term of the
$Y_{\ell}$'s for $\ell\in I$. If we use them to eliminate the $2m-1$
variables $Y_i$'s ($i\notin I$) from the polynomials in
$\oplus_{j_1\in\Iintv{k+1}{n-3}} {\cU}_{j_1,n-2}$, we are left with
polynomials that are linear combination of $m$ linear monomials
$\lbrace Y_\ell : \ell \in I\rbrace$, and $m(3m-3)$ quadratic
monomials $Y_{j}X_{j'}$ for $j\in I$ and
$j'\in\Iintv{k+1}{n-3}$. This means that we have as many polynomials as
monomials. However, the polynomials now have no reason to remain
linearly independent, and in fact they are not.

Experimentally, after linearization, we get one polynomial expressing
each quadratic term $Y_j X_{j'}$ in terms of the $m$ independent
$\lbrace Y_\ell : \ell \in I\rbrace$ and $m$ reductions to zero, as
we cannot get more than $2m-1$ linear polynomials relating the
$Y_j$'s: a basis $\cU$ of $\oplus_{j_1}\code{U}_{j_1,n-2}$ modulo $\cV_n$ has the
shape
\begin{align*}
\cU \eqdef \left\{  Y_{j}X_{j'} + f_{j,j'}(Y_{\ell}: \ell \in I)\mid j\in I, j' \in \Iintv{k+1}{n-3}\right\}.
\end{align*}

We can now use the polynomial $u_{n-2,n-1}$ from
Proposition~\ref{prop:Uj}, that is a linear combination of the
monomials $1, Y_{n-2}, Y_{n-1}$ and the $Y_j(2X_j-1)$ for
$j\in\Iintv{k+1}{n-3}$. We eliminate the $Y_j$'s, $j\notin I$ using
equations in $\cV_n$ and the $Y_jX_{j'}$ for
$j\in I, j'\in\Iintv{k+1}{n-3}$ using $\cU$ and obtain a linear
polynomials in the $Y_j$'s and $1$. Note that, as we already have
$2m-1$ homogeneous polynomials between the $Y_j$'s, we cannot have
another homogeneous polynomials, hence the polynomials contains the
constant 1.

To perform the elimination and the linearization, we can perform
linear algebra on a matrix where the columns are the $Y_{j}X_{j'}$,
hence $\bigO{m^2}$ columns, and the rows are the basis for
$\code{U}_{j,n-2}$ and the $X_iL_j$ with $L_j$ a linear polynomial in
$\code{V}_n$. This makes $\bigO{m^2}$ rows, and a complexity in
$\bigO{m^{2\omega}}$.
\subsubsection*{Step \eqref{step:5}: computing   linear polynomials for the $X$ variables}
\begin{proposition}\label{prop:linX}
Assume that a basis of $\cup_{j\ne n-2} \code{U}_{j,n-2}$ where the linear polynomials from $\cV_n$ have been eliminated is given by
\begin{align}\label{eq:cupUj}
  \begin{cases}
    Y_j X_{j'} + f_{j,j'}(Y_\ell : \ell \in I, 1), &\text{ for all } j\in I, j'\in\Iintv{k+1}{ n-3},\\
    \sum_{j\in I} a_jY_j - 1, & \text{ with } a_j\in\fq.
  \end{cases}
\end{align}
Then  the vector space generated by the polynomials \eqref{eq:cupUj} contains the polynomials
\begin{align}
  \label{eq:linearX}
  X_{j'} + \sum_{j\in I} a_jf_{j,j'}(Y_\ell : \ell \in I, 1),&& j'\in\Iintv{k+1}{n-3}.
\end{align}
\end{proposition}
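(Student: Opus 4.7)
The plan is to exhibit, for each $j'\in\Iintv{k+1}{n-3}$, a simple algebraic identity expressing $X_{j'}+\sum_{j\in I} a_j f_{j,j'}(Y_\ell:\ell\in I,1)$ as a combination of the basis polynomials in \eqref{eq:cupUj}, where the affine linear polynomial $\sum_{j\in I}a_j Y_j - 1$ is allowed to be multiplied by the variable $X_{j'}$. The reason this does not leave the ``vector space'' in question is that multiplying a degree-$1$ polynomial by $X_{j'}$ produces a polynomial of the same bidegree as the other elements $Y_j X_{j'} + f_{j,j'}$ already present in \eqref{eq:cupUj}, so all computations remain within the ambient space of bilinear polynomials that is the natural linear-algebra working space of Step~\eqref{step:5}.

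The key step is to form the $\Fq$-linear combination
\[
P_{j'} \eqdef \sum_{j\in I} a_j\bigl(Y_j X_{j'} + f_{j,j'}(Y_\ell : \ell\in I, 1)\bigr)
= \Bigl(\sum_{j\in I} a_j Y_j\Bigr) X_{j'} + \sum_{j\in I} a_j f_{j,j'}(Y_\ell : \ell\in I, 1),
\]
and, on the other hand, the shifted polynomial
\[
Q_{j'} \eqdef X_{j'}\cdot\Bigl(\sum_{j\in I} a_j Y_j - 1\Bigr)
= \Bigl(\sum_{j\in I} a_j Y_j\Bigr) X_{j'} - X_{j'}.
\]
Subtracting, the leading quadratic term $\bigl(\sum_{j\in I} a_j Y_j\bigr)X_{j'}$ cancels and one obtains precisely
\[
P_{j'} - Q_{j'} = X_{j'} + \sum_{j\in I} a_j f_{j,j'}(Y_\ell : \ell\in I, 1),
\]
which is the claimed polynomial.

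To conclude, I would observe that $P_{j'}$ lies in the $\Fq$-span of the first family of polynomials in \eqref{eq:cupUj} by definition, and that $Q_{j'}$ is the product of $X_{j'}$ with the affine linear polynomial $\sum_{j\in I}a_j Y_j - 1$, which is itself one of the generators in \eqref{eq:cupUj}. Hence the difference $P_{j'}-Q_{j'}$ belongs to the (extended) $\Fq$-vector space generated by \eqref{eq:cupUj}, as required. The only mildly subtle point, and arguably the main thing to pin down carefully, is the scope of the word ``vector space'' here: one has to agree that the closure under multiplication by a single $X_{j'}$ of the affine linear polynomial is permitted, which is natural because the elimination in Step~\eqref{step:5} is being performed on Macaulay-type matrices of bilinear polynomials where such multiplications are standard rows. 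Once this convention is fixed, the proof reduces to the one-line verification of the identity above, and the asserted cost $\bigO{m^2}$ of Step~\eqref{step:5} is immediate since the combination involves only $m$ scalar multiplications per value of $j'$ and there are $3m-3$ values of $j'$.
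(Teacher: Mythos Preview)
Your proof is correct and follows essentially the same approach as the paper: both form the scalar combination $\sum_{j\in I} a_j(Y_jX_{j'}+f_{j,j'})$ and subtract $X_{j'}\bigl(\sum_{j\in I}a_jY_j-1\bigr)$ to isolate $X_{j'}+\sum_{j\in I}a_jf_{j,j'}$. You even flag explicitly the vector-space-versus-ideal subtlety that the paper glosses over (its proposition says ``vector space'' while its proof concludes ``in the ideal generated by~\eqref{eq:cupUj}''), so your write-up is arguably a bit more careful on that point.
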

\begin{proof} We have
\begin{align*}
  \sum_{j\in I} a_j(Y_jX_{j'}+f_{j,j'}(Y_\ell : \ell \in I, 1)) &= X_{j'} +X_{j'}( \sum_{j\in I} a_jY_j - 1)  + \sum_{j\in I} a_jf_{j,j'}(Y_\ell : \ell \in I, 1)
\end{align*}
so that we get in the ideal generated by~\eqref{eq:cupUj} one affine linear polynomial expressing each $X_{j'}$ in terms of the $\lbrace Y_\ell : \ell \in I, 1\rbrace$.  
\end{proof}
\subsubsection*{Step \eqref{step:6}: the final Gröbner basis}
Now, if we use the polynomials in~\eqref{eq:linearX} to eliminate the
$X_i$'s from the polynomials in~\eqref{eq:cupUj}, we get one linear
polynomial for each term of degree 2 in $\Ym$. Let $I_1$ be the set $I$
minus one element $i\in I$ such that $a_i\ne 0$.  The final basis has
the shape
\begin{align*}
  \begin{cases}
  Y_jY_{j'} + L'_{j,j'}(Y_\ell: \ell \in I_1, 1),&  j, j'\in I_1, j<j',\\
  X_{j'} + f_{j'}(Y_\ell: \ell \in I_1, 1),&  j' \in \Iintv{ k+1}{n-3},\\
  Y_j + L_{j,n}(Y_\ell: \ell \in I_1),&  j \notin I,\\
  Y_i + L_{i,n}(Y_\ell : \ell \in I_1, 1), & i \in I \setminus I_1.
\end{cases}
\end{align*}
This describes a variety of dimension 0 with $m$ solutions (as
$\#I_1=m-1$, only $m$ monomials 1 and $Y_i : i \in I_1$ are not
leading term of a polynomial in the ideal), that are exactly the $m$
solutions obtained by applying the Frobenius morphism. This proves
that the basis is a Gröbner basis. It coincides with the basis that
would have been computed from $\code{S'}$ plus the cubic polynomials
from the $\cV_j$'s. However, our approach is more efficient, because it
avoids unnecessary calculations.

\subsection{The particular  $q=2$ case.}
\label{sec:q=2}
Assumption~\ref{assumption: rank q=2} asserts that the rank of $\cS'$ is
smaller than for all the other cases, namely
$\rank(\cS') = \binom{3m}{2}-3m$ instead of $\binom{3m}{2}-m$. This
invalidates all the combinatorial arguments for the dimensions of
$\cV_j$'s and for the number of degree falls. In this case we have
\[
\dim(\cV_j)=m-1,
\]
In particular for $j=n$,  the number of independent linear polynomials in $\Ym$ in a basis of $\code{S'}$ is $m-1$. This result is rather technical, its essence not being merely combinatorial, and is beyond the scope of this article.

More interestingly, for the specific $q=2$ case, we can use the following particular identity from~\cite{FGOPT13}:
\begin{align}
  \label{eq:q=2}
(Y_i X_i^2)(Y_i)^2=(Y_i)(Y_iX_i)^2.
\end{align}
Replacing as before the terms $Y_iX_i^a$ for $a\in\Iintv{0}{2}$ by their values from \eqref{eq: systematic_form} and, as $q=2$, the terms $(Y_iX_i^a)^2$ for $a\in\Iintv{0}{1}$ by $
(Y_i X_i^a)^2 = -\sum_{j=k+1}^n p_{i,j}Y_j^2X_j^{2a}
$
leads, after expansion, to the system
\begin{align}
  \label{eq:Sq=2}
  \cB \eqdef \left\{\sum_{k+1\le j<j'\le n} p_{i,j}p_{i,j'} Y_j Y_{j'}(Y_{j}+Y_{j'}) (X_{j}+X_{j'})^2=0 \mid i \in \Iintv{1}{k} \right\}.
\end{align}
\begin{proposition}
  For $r=3$, $q=2$ with the specialization \eqref{eq:specialization}, the system $\cB$ becomes
  \begin{align}\label{eq:Sprimeq=2}
    \cB' = \left\{\sum_{\substack{j,j'=k+1\\j<j'}}^{n-1} p_{i,j}p_{i,j'} Y_j Y_{j'}(Y_{j}+Y_{j'}) (X_{j}+X_{j'})^2 + \sum_{j=k+1}^{n-1} p_{i,j}p_{i,n}Y_j^2 \mid i \in \Iintv{1}{k} \right\}.
\end{align}
\end{proposition}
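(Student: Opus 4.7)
The plan is to mimic the derivation of $\cS'$ in Section~\ref{sec:r=3} but starting from the characteristic-$2$ identity~\eqref{eq:q=2}. The justification for the specialization is unchanged: it uses $3$-transitivity of homographies on the projective line together with Theorem~\ref{thm: cauchy}, so we may fix $X_{n-2}=0$, $X_{n-1}=1$, $X_n=\infty$, $Y_n=1$ without loss of generality, and the shape of the Vandermonde matrix becomes exactly the one displayed in the proof of the $\cS'$ proposition. This yields the three fundamental relations
\begin{align*}
Y_i &= \sum_{j=k+1}^{n-1} p_{i,j}\,Y_j,\\
Y_iX_i &= \sum_{j=k+1}^{n-1} p_{i,j}\,Y_jX_j,\\
Y_iX_i^2 &= \sum_{j=k+1}^{n-1} p_{i,j}\,Y_jX_j^2 \;+\; p_{i,n}.
\end{align*}

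Next, I substitute these three expressions into $(Y_iX_i^2)(Y_i)^2 + (Y_i)(Y_iX_i)^2 = 0$, which is~\eqref{eq:q=2} rewritten in characteristic~$2$. The key simplification is Frobenius linearity: since $q=2$ and $p_{i,j}\in\F_2$, squaring a sum becomes a sum of squares with the same coefficients, so
\[(Y_i)^2 = \sum_{j=k+1}^{n-1} p_{i,j}\,Y_j^2, \qquad (Y_iX_i)^2 = \sum_{j=k+1}^{n-1} p_{i,j}\,Y_j^2 X_j^2.\]
Expanding the two products and adding them (again, $+ = -$ in characteristic $2$) produces the double sum
\[\sum_{j,\ell=k+1}^{n-1} p_{i,j}p_{i,\ell}\,Y_jY_\ell^2\,(X_j^2+X_\ell^2) \;+\; p_{i,n}\sum_{j=k+1}^{n-1} p_{i,j}\,Y_j^2.\]

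The diagonal terms $j=\ell$ in the double sum vanish since $X_j^2+X_j^2=0$. For the off-diagonal terms, I pair $(j,\ell)$ with $(\ell,j)$ for $j<\ell$: their combined contribution is
\[p_{i,j}p_{i,\ell}\,(Y_jY_\ell^2+Y_\ell Y_j^2)(X_j^2+X_\ell^2) \;=\; p_{i,j}p_{i,\ell}\,Y_jY_\ell(Y_j+Y_\ell)(X_j+X_\ell)^2,\]
using the characteristic-$2$ identity $(X_j+X_\ell)^2=X_j^2+X_\ell^2$. Summing over unordered pairs $k+1\le j<\ell\le n-1$ and re-absorbing the boundary term $p_{i,n}\sum_j p_{i,j}Y_j^2$ as $\sum_j p_{i,j}p_{i,n}Y_j^2$ yields exactly the displayed polynomial in $\cB'$. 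The main bookkeeping step is the pairing argument and the use of $p_{i,j}^2=p_{i,j}$; there are no serious obstacles beyond carefully tracking indices and the role of the extra $p_{i,n}$ term coming from the last row of the specialized Vandermonde matrix.
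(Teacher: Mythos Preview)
Your proof is correct and follows essentially the same route as the paper's own argument: substitute the three specialized relations for $Y_i$, $Y_iX_i$, $Y_iX_i^2$ into the identity $(Y_iX_i^2)(Y_i)^2=(Y_i)(Y_iX_i)^2$, use Frobenius linearity over $\F_2$ to square the linear sums, expand, and then collect the off-diagonal terms via the factorization $Y_jY_{j'}^2+Y_{j'}Y_j^2=Y_jY_{j'}(Y_j+Y_{j'})$ together with $(X_j+X_{j'})^2=X_j^2+X_{j'}^2$. You have simply made explicit the steps the paper compresses into one sentence.
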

\begin{proof}
For $q=2$ we use the identity $(Y_iX_i^2)(Y_i)^2=(Y_i)(Y_iX_i)^2$, and the fact that for $q=2$, $Y_i^2=   \sum_{k+1\le j\le n-1} p_{i,j} Y_j^2$ and $(Y_i X_i)^2 = \sum_{k+1\le j\le n-1} p_{i,j} Y_j^2 X_j^2$. After expansion, using the fact that $Y_jX_{j'}^2+Y_{j'}X_{j'}^2+Y_{j'}X_{j}^2+Y_jX_j^2=(Y_j+Y_{j'})(X_j^2+X_{j'}^2)$ for any $k+1\le j < j'\le n-1$, we get $\cB'$.
\end{proof}
As before, we call $\code{B}'$ the vector space generated by the polynomials $\cB'$. The system $\cB'$ contains $\binom{3m}{2}-3m$ linearly independent polynomials of bidegree $(2,3)$ in the two blocks of variables $\Xm$
and $\Ym$ and its linearization involves $\binom{3m}{2}$ variables
\begin{align}
\begin{cases}
          Y_j Y_{j'}(Y_j+Y_{j'}) (X_{j}^2+X_{j'}^2) & \text{ for } k+1\le j < j' \le n-1,\\
          Y_j^2 & \text{ for } j'=n \text{ and } X_{n}=\infty.
        \end{cases}
\end{align}

In a similar way to what done for $\cS'$, we can define the vector spaces
\begin{align}
  \code{E}_j &= \frac1{Y_j}\left(\code{B'}\cap Y_j\cdot \fq{}[\Xm,\Ym]_{\le 3}\right), & j\in\Iintv{k+1}{n-1}
\end{align}
and an associated basis $\cE_j$ of $\code{E}_j$. The following proposition is fundamental for the efficiency of the attack.
\begin{proposition}
  For each $j\in\Iintv{k+1}{n-1}$,  for each polynomial
  \begin{align*}
    \sum_{{j'} \in \Iintv{k+1}{n}\setminus \{j\}} v_{j'} Y_{j'} (X_{j'}+X_{j})^2 \in \cV_j
  \end{align*}
  the polynomial
  \begin{align*}
    \sum_{{j'} \in \Iintv{k+1}{n}\setminus \{j\}} v_{j'} Y_{j'} (X_{j'}+X_{j})
  \end{align*}
  cancels on the solution $\xv,\yv$.  This produces new bilinear
  polynomials that we can add to our algebraic system.
\end{proposition}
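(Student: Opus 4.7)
The strategy I would pursue is based entirely on the characteristic $2$ Frobenius identity $(a+b)^2 = a^2 + b^2$, which is what makes the $q=2$ case special. Let $P(X,Y) \eqdef \sum_{j'} v_{j'} Y_{j'}(X_{j'}+X_j)^2$ be a polynomial in $\cV_j$; by hypothesis its coefficients $v_{j'}$ lie in $\fq = \F_2$, so $v_{j'}^{1/2}=v_{j'}$. The freshman's dream then gives the formal factorization
\[
P(X,Y) \;=\; \Big(\sum_{j'} v_{j'}\, Y_{j'}^{1/2}\,(X_{j'}+X_j)\Big)^{\!2}
\]
inside $\fqm[X, Y^{1/2}]$. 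Since $\fqm$ is perfect, evaluating at $(\xv,\yv)$ and extracting the square root yields
\[
\sum_{j'} v_{j'}\, y_{j'}^{1/2}\,(x_{j'}+x_j) \;=\; 0.
\]

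The next step is to convert this relation with $y_{j'}^{1/2}$ into one with $y_{j'}$. For this, I would write out the origin of the coefficients $v_{j'}$: by the definition of $\cV_j$, the polynomial $Y_j\cdot P$ lies in $\code{S}'$, so $v_{j'} = \sum_i \lambda_i p_{i,j}p_{i,j'}$ and $v_n = \sum_i \lambda_i p_{i,j}p_{i,n}$ for some $\lambda_i \in \F_2$ satisfying the auxiliary vanishing constraints $\sum_i \lambda_i p_{i,a}p_{i,b} = 0$ for $a<b$, $\{a,b\}\not\ni j$, and $\sum_i \lambda_i p_{i,a}p_{i,n}=0$ for $a\ne j$. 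Substituting these expressions for $v_{j'}$ in $\sum v_{j'} y_{j'}(x_{j'}+x_j)$ and invoking the alternant defining relations $\sum_{j'} p_{i,j'} y_{j'} = -y_i$ and $\sum_{j'} p_{i,j'} y_{j'} x_{j'} = -y_ix_i$ reduces the target identity to a linear combination of evaluations of the $\lambda_i$'s at quantities built from $y_i, x_i$, all of which are forced to vanish by the auxiliary constraints together with the already established $y^{1/2}$-version of the identity (via Frobenius).

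The main obstacle I anticipate is making the transfer from the $y^{1/2}$-identity to the $y$-identity fully rigorous, because $(\xv,\yv^{1/2})$ is not, in general, a solution of the original algebraic system. I would therefore not try to argue by a Frobenius symmetry directly on $\yv^{1/2}$, but rather work entirely at the level of the coefficients $(\lambda_i)$ and the trace-code structure of Proposition \ref{prop:dimalternant}: the map $(v_{j'})\mapsto \sum v_{j'} y_{j'}(x_{j'}+x_j)$ and its squared analogue differ only by a Frobenius twist on the multiplier, and this twist is absorbed by the constraint equations defining $\cV_j$ once $p_{i,j}^2 = p_{i,j}$ in $\F_2$ is used. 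Finally, once the identity is established, the bilinearity of $\sum v_{j'} Y_{j'}(X_{j'}+X_j)$ in $(\Xm,\Ym)$ is immediate from the expression itself.
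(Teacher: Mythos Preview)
Your square-root factorization is correct and does yield
\[
\sum_{j'} v_{j'}\, y_{j'}^{1/2}(x_{j'}+x_j) + v_n \;=\; 0,
\]
but this is \emph{not} the relation you need: the target is the same sum with $y_{j'}$ in place of $y_{j'}^{1/2}$ (and no constant term). The ``transfer'' step you sketch is where the argument breaks. If you carry out the coefficient manipulation you propose---writing $v_{j'}=\sum_i\lambda_i p_{i,j}p_{i,j'}$ and feeding in the alternant identities $\sum_{j'} p_{i,j'}y_{j'}x_{j'}^e=-y_ix_i^e$---you find
\[
\sum_{j'\ne j} v_{j'}\, y_{j'}(x_{j'}+x_j)
\;=\; -\sum_i \lambda_i p_{i,j}\, y_i(x_i+x_j)
\;=\; \sum_{a\ne j} v_a\, y_a(x_a+x_j),
\]
using $p_{i,j}^2=p_{i,j}$ and $x_j+x_j=0$; this is a tautology, not a vanishing statement. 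Likewise, applying Frobenius to your $y^{1/2}$-identity (or to its evaluation at the conjugate solution $(\xv^2,\yv^2)$) merely reproduces the original $\cV_j$ relation. So the information contained in $\cS'$ alone is not enough to pass from the $y^{1/2}$-version to the $y$-version.

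The missing ingredient is precisely what the paper supplies: a second family of equations $\cB'$, obtained from the \emph{different} identity $(Y_iX_i^2)Y_i^2=Y_i(Y_iX_i)^2$ (available only when $q=2$). Because $\cS'$ and $\cB'$ share the same linearized Macaulay matrix, the very same linear combination $(\lambda_i)$ that produces $\sum v_{j'}Y_{j'}(X_{j'}+X_j)^2\in\cV_j$ also produces $\sum v_{j'}Y_{j'}(Y_{j'}+Y_j)(X_{j'}+X_j)^2\in\cE_j$. Subtracting $Y_j$ times the $\cV_j$-polynomial isolates $\sum v_{j'}Y_{j'}^2(X_{j'}+X_j)^2$, and \emph{now} the freshman's dream gives $\bigl(\sum v_{j'}y_{j'}(x_{j'}+x_j)\bigr)^2=0$ with the correct exponent on $y_{j'}$. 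Without invoking $\cB'$ (or an equivalent independent relation), your plan cannot close.
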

\begin{proof}
Comparing the polynomials in $\cS'$ \eqref{eq:Sprime} and $\cB'$ \eqref{eq:Sprimeq=2}, it is clear that 
\[
\sum_{{j'} \in \Iintv{k+1}{n}\setminus \{j\}} v_{j'} Y_{j'} (X_{j'}+X_{j})^2 \in \cV_j \iff \sum_{{j'} \in \Iintv{k+1}{n}\setminus \{j\}} v_{j'} Y_{j'} (Y_{j'}+Y_{j})(X_{j'}+X_{j})^2 \in \cE_{j},
\]
hence $\dim(\cE_{j})=m-1.$ We can split a polynomial in $\cE_{j}$ in the following way:
\begin{align*}
&\sum_{{j'} \in \Iintv{k+1}{n}\setminus \{j\}} v_{j'} Y_{j'} (Y_{j'}+Y_{j})(X_{j'}+X_{j})^2\\
=&\sum_{{j'} \in \Iintv{k+1}{n}\setminus \{j\}} v_{j'} Y_{j'}^2 (X_{j'}+X_{j})^2+Y_j\sum_{{j'} \in \Iintv{k+1}{n}\setminus \{j\}} v_{j'} Y_{j'} (X_{j'}+X_{j})^2.
\end{align*}
Since $Y_j\sum_{{j'} \in \Iintv{k+1}{n}\setminus \{j\}} v_{j'} Y_{j'} (X_{j'}+X_{j})^2$ is in the ideal generated by $\cV_j$, we obtain
\[
\sum_{{j'} \in \Iintv{k+1}{n}\setminus \{j\}} v_{j'} Y_{j'}^2 (X_{j'}+X_{j})^2 \in \cE_j+Y_j\cV_j.
\]
Since the coefficients $v_{j'} \in \F_2$, by applying the Frobenius map $m-1$ times, we get that the polynomial
\begin{align*}
\left(\sum_{{j'} \in \Iintv{k+1}{n}\setminus \{j\}} v_{j'} Y_{j'}^2 (X_{j'}+X_{j})^2=0\right)^{2^{m-1}}=\sum_{{j'} \in \Iintv{k+1}{n}\setminus \{j\}} v_{j'} Y_{j'} (X_{j'}+X_{j})
\end{align*}
belong to the ideal generated by $\cE_j+\cV_j$ and the polynomials
$Y_{j'}^{2^m}-Y_{j'}$, $X_{j'}^{2^m}-X_{j'}$ for $j'\in\Iintv{k+1}{n}$, hence cancels on the
solution.
\end{proof}
As a consequence, from the computation of the $\cV_j$'s, and under the assumption that each $\cV_j$ contains $m-1$ linearly independent polynomials, we can produce  $(m-1)(3m-1)$ bilinear polynomials that are independent, as well as $m-1$ linearly independent linear polynomials in $Y_j$'s (in $\cV_n$). A Gr\"obner basis of the system formed by those polynomials together with the polynomials in the systems $\cV_j$'s can then be computed, and experimentally all computations are done by staying at degree $3$, which gives a global complexity that is polynomial in $m$. The last steps of the attack (computation of the solutions using the FGLM algorithm, and recovering of the entier support and multiplier) are identical to the $q>2$ case.

\subsection{Interlacing the algebraic recovering with the filtration}
\label{sec:interlacing}
We now get back to distinguish between the full-length vectors $\xv$ and $\yv$ and their shortening due to the filtration attack. We can restore the information lost from the filtration shortening, by simply repeating the attack twice on different sets. This is possible because $2(r-3)$ is very small compared to $n$. Indeed, if we shorten the positions corresponding to $\Ical_1\eqdef \Iintv{1}{r-3}$ (the order is irrelevant) during the filtration attack, at the end of the algebraic recovering we have access to $m$ pairs of vectors
\[
\bar{\xv}_{\Ical_1} \quad \text{and} \quad \bar{\yv}_{\Ical_1} \left(\prod_{i \in \Ical_1} (\bar{\xv}_{\Ical_1}-x_i)\right).
\]
Analogously, if we shorten the positions corresponding to $\Ical_2\eqdef \Iintv{(r-3)+1}{2(r-3)}$ during the filtration attack, at the end of the algebraic recovering we have access to $m$ pairs of vectors
\[
\bar{\xv}_{\Ical_2} \quad \text{and} \quad \bar{\yv}_{\Ical_2} \left(\prod_{i \in \Ical_2} (\bar{\xv}_{\Ical_2}-x_i)\right).
\]
In particular, if the same specialization has been chosen, we can couple $m$ pairs $(\bar{\xv}_{\Ical_1}, \bar{\xv}_{\Ical_2})$ such that the two vectors of each pair coincide on the last $n-(r-3)$ coordinates. We can easily detect them from the last $3m$ coordinates, so that we do not need to solve $2m$ linear systems but it is sufficient to choose one pair and solve only the 2 corresponding linear systems. In this way we obtain a full solution $\bar{\xv}$ for the original problem as
\[
\bar{\xv} = (\underbrace{\bar{x}_1,\dots,\bar{x}_{r-3}}_{\substack{\text{first } r-3\\ \text{ coordinates of } \bar{\xv}_{\Ical_1}}}, \underbrace{\bar{x}_{r-2},\dots,\bar{x}_{2(r-3)}}_{\substack{\text{first } r-3 \\\text{ coordinates of } \bar{\xv}_{\Ical_2}}},\underbrace{\bar{x}_{2(r-3)+1},\dots,\bar{x}_{n-3},0,1,\infty}_{\substack{\text{last common coordinates  of}\\ \bar{\xv}_{\Ical_1}\text{ and }\bar{\xv}_{\Ical_2}}}).
\]
By replacing the found values in the corresponding $\bar{\yv}_{\Ical_1} \left(\prod_{i \in \Ical_1} (\bar{\xv}_{\Ical_1}-x_i)\right)$ and $\bar{\yv}_{\Ical_2} \left(\prod_{i \in \Ical_2} (\bar{\xv}_{\Ical_2}-x_i)\right)$, we retrieve  $\bar{\yv}_{\Ical_1}$ and $\bar{\yv}_{\Ical_2}$. Similarly to what done for the support, we can put together the information of these two vectors and get
\[
\bar{\yv} = (\underbrace{\bar{y}_1,\dots,\bar{y}_{r-3}}_{\substack{\text{first }r-3\\\text{ coordinates of } \bar{\yv}_{\Ical_1}}}, \underbrace{\bar{y}_{r-2},\dots,\bar{y}_{2(r-3)}}_{\substack{\text{first }r-3\\\text{ coordinates of }\bar{\yv}_{\Ical_2}}},\underbrace{\bar{y}_{2(r-3)+1},\dots,\bar{y}_{n-1},1}_{\substack{\text{last common coordinates of }\\ \bar{\yv}_{\Ical_1}\text{ and }\bar{\yv}_{\Ical_2}}}).
\]

So, a pair of valid support and multiplier has been recovered. However, $\bar{\xv} \notin \Fqm^n$, because $\bar{x}_n= \infty$. The last question is therefore how to get a valid pair of support and multiplier such that both are defined over $\Fqm$, \ie how to get the alternant representation. In other words, we need to determine some $f \in GL_2(\Fqm)$ and $\lambda \in \Fqm\setminus\{0\}$ such that 
\[\bar{x}_i' = f(\bar{x}_i) \in \Fqm,\quad \forall i \in \Iintv{1}{n}\]
and
\[\bar{y}_i' = \lambda \theta(f,\bar{x}_i)^{r-1} \bar{y}_i \in \Fqm,\quad \forall i \in \Iintv{1}{n}.\]
We observe that, since there are only $n-1$ coordinates of $\bar{\xv}$ in $\Fqm$ and $n-1<q^m$, there exists at least one element $\hat{x} \in \Fqm$ that is different from all $\bar{\xv}$ coordinates. We also remark that $\hat{x} \ne 0$, since $\bar{x}_{n-2} = 0$, so the map $f$ on $\bar{\F}_{q^m}$
\[
f\eqdef\frac{z}{z-\hat{x}}
\]
is induced by an element of the linear group. We have $\theta(f, z)= z- \hat{x}$ if $z \in \Fqm$ and $\theta(f, \infty) = 1$ and we choose $\lambda=1 $. Therefore 
\begin{align*}
\bar{x}_i' = \frac{\bar{x}_i}{\bar{x}_i-\hat{x}},&\quad i \in \Iintv{1}{n-1},\\
\bar{x}_n' = 1,&\\
\bar{y}_i' =  (\bar{x}_i- \hat{x})^{r-1} \bar{y}_i,&\quad i \in \Iintv{1}{n-1},\\
\bar{y}_n' = \bar{y}_n=1.&\\
\end{align*}
We finally obtained a support and a multiplier with coordinates over $\Fqm$ that define the public code.
This concludes the key-recovery attack on high-rate random alternant codes.

 \section{Conclusion}
\subsection*{Breaking the $m=2$ barrier.} Even if the first step of the attack, namely the derivation of an alternant code of degree $3$ from the original 
alternant code is reminiscent of the filtration attack \cite{COT14,COT17} which allowed to break a McEliece scheme based on wild Goppa codes of extension degree $2$, it differs in a crucial way. 
In \cite{COT14,COT17} building this decreasing sequence of codes is based on taking the shortening of the original Goppa code and square codes of it. This method is bound to fail
when the extension degree $m$ is greater than $2$ as explained in \cite[\S III]{COT17}. This is a pity, because the square code is much easier to understand in this case since it retains the polynomial
structure of the generalized Reed-Solomon super-code. To go beyond the $m=2$ barrier, it is mandatory to look for the the square of shortenings of the dual of the Goppa/alternant code instead. This square is much more difficult to understand, since taking the dual of a subfield subcode as is the case here yields by Delsarte's theorem the trace code of a generalized Reed-Solomon 
code which loses a lot of its polynomial description. This is certainly one of the main difficulties that had to be overcome to transform the distinguisher of \cite{FGOPT11} into an attack. 

This is precisely what is achieved here. Besides shedding some light (together with \cite{MT22}) on the structure of the square of such dual codes, this paper also introduces some other crucial ideas. The first one being the crucial role of being a generic alternant code rather than a Goppa code and then the condition $r \geq q+1$ which is crucial for our main result, Theorem \ref{thm:main_filtration} to hold.
It is rather surprising that right now it seems that this filtration strategy needs apparently additional ideas to make it work in the particular case of Goppa codes as observed in Subsection \ref{ss:Goppa} despite the 
fact that Goppa codes are even more structured than plain alternant codes. However there are clearly reasons to believe that this paper has solved the main issue when it comes to transform the distinguisher of 
\cite{FGOPT11} into an attack, which is to be able to work with the dual of the alternant/Goppa code, since this is the crux in breaking the $m=2$ barrier. Therefore it is tempting to conjecture that Goppa codes,  at least in the regime where they are distinguishable from random codes (which applies in particular to the CFS scheme \cite{CFS01}) should eventually be attacked in polynomial by some variation the attack that has been given here. 

\subsection*{Understanding the Gröbner basis approach.}
The filtration approach which amounts here to construct a nested sequence of dual of alternant codes of decreasing degree results in a final alternant code of degree $3$ in the case at hand.
From here, a dedicated Gröbner basis is used to recover the polynomial structure of the alternant code. This approach is able to take into account in a very efficient way the fact that the 
multiplier and the support that are sought should satisfy certain non standard constraints (all the entries of the multiplier are non zero whereas all the entries of the support are distinct). Taking these constraints 
into account as we do here results in speeding up significantly the algebraic system solving by adding many new low degree equations to the algebraic system. When the code is a Goppa code it turns out that solving the relevant system behaves a little bit differently (as for the case of the filtration). However, unlike the case of the filtration where right now this part of the attack does not work at all, it seems that here even if the Gröbner basis computation consists of more
steps, solving the whole system should still be polynomial. Actually there are reasons to conjecture that the Gröbner basis approach stays of polynomial complexity for recovering the polynomial structure of an alternant code of degree $4$ and even potentially beyond this, say up to a constant degree. More generally, this raises the issue of  understanding  how the complexity of recovering the polynomial structure of a degree $r$ alternant/Goppa code through this Gröbner basis approach scales with $r$. An obvious application of this study would be to be able to increase the values of the field size
$q$ for which we still have a polynomial attack. This could also allow to get a better understanding how the McEliece resists to this algebraic approach in general and is desirable in the light of the ongoing NIST competition where the McEliece cryptosystem is an alternate fourth round candidate.

\newcommand{\etalchar}[1]{$^{#1}$}

\appendix
\section{Further results about trace codes}\label{sec:extended}
We will state and prove here a simple result on the extension of trace codes which will be useful in our context.
\begin{prop} \label{prop:trace_fqm}
	Let $\CC \subseteq \Fqm^n$ be an $\Fqm$- linear code. Then
\[
	\Tr{\CC}_{\Fqm}=\sum_{i=0}^{m-1} \CC^{q^i}.
\]
\end{prop}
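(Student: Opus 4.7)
My plan is to prove the two inclusions separately. The forward inclusion is immediate from unpacking the definition of the trace, while the reverse inclusion is the substantive part and will be handled via a Moore/Vandermonde invertibility argument exploiting the $\Fqm$-linearity of $\CC$.

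For the easy direction, observe that for any $\cv \in \CC$ the defining formula $\Trm{\cv} = \cv + \cv^q + \cdots + \cv^{q^{m-1}}$ exhibits $\Trm{\cv}$ as an element of $\CC + \CC^q + \cdots + \CC^{q^{m-1}}$. Since the latter is an $\Fqm$-linear subspace of $\Fqm^n$ (each $\CC^{q^i}$ is $\Fqm$-linear because the Frobenius is an automorphism of $\Fqm$), and $\Tr{\CC}_{\Fqm}$ is by definition the smallest $\Fqm$-linear code containing $\Tr{\CC}$, we conclude $\Tr{\CC}_{\Fqm} \subseteq \sum_{i=0}^{m-1}\CC^{q^i}$.

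For the reverse direction I would fix an index $\ell \in \Iintv{0}{m-1}$ and show that $\CC^{q^\ell} \subseteq \Tr{\CC}_{\Fqm}$; summing over $\ell$ then gives the required inclusion. Pick any $\Fq$-basis $\lambda_1,\dots,\lambda_m$ of $\Fqm$. The key observation is that, because $\CC$ is $\Fqm$-linear, each scaled vector $\lambda_j \cv$ still lies in $\CC$, so
\[
\Trm{\lambda_j \cv} = \sum_{i=0}^{m-1} \lambda_j^{q^i}\,\cv^{q^i} \;\in\; \Tr{\CC} \subseteq \Tr{\CC}_{\Fqm},\qquad j=1,\dots,m.
\]
These $m$ identities form a linear system, over $\Fqm$, expressing $m$ elements of $\Tr{\CC}_{\Fqm}$ in terms of the $m$ "unknown" vectors $\cv,\cv^q,\dots,\cv^{q^{m-1}}$; the coefficient matrix is the Moore matrix $M=(\lambda_j^{q^i})_{1\le j\le m,\;0\le i\le m-1}$, which is invertible over $\Fqm$ precisely because $\lambda_1,\dots,\lambda_m$ are $\Fq$-linearly independent. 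Inverting $M$ expresses each $\cv^{q^i}$ as an $\Fqm$-linear combination of the vectors $\Trm{\lambda_j \cv}$, and in particular $\cv^{q^\ell} \in \Tr{\CC}_{\Fqm}$.

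The main (and really only) obstacle is recalling and applying the invertibility of the Moore matrix; once this is granted, the computation is mechanical. The essential conceptual input is the $\Fqm$-linearity of $\CC$, which allows us to scale $\cv$ by the full $\Fq$-basis $\{\lambda_j\}$ before taking the trace, producing enough independent relations to isolate each Frobenius conjugate $\cv^{q^i}$ individually.
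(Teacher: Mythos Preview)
Your proof is correct and follows essentially the same approach as the paper: both directions are argued identically, with the reverse inclusion obtained by scaling $\cv$ by an $\Fq$-basis of $\Fqm$, taking traces, and inverting the resulting Moore matrix to recover each Frobenius conjugate $\cv^{q^i}$ as an $\Fqm$-linear combination of trace vectors. You even make explicit the role of the $\Fqm$-linearity of $\CC$ (ensuring $\lambda_j\cv\in\CC$), which the paper uses without comment.
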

\begin{proof}
Take any $\cv \in \CC$. Then $\Tr{\cv} = \cv + \cv^q + \cdots + \cv^{q^{m-1}}$ also belongs to
$\CC + \CC^q + \cdots + \CC^{q^{m-1}}$. This proves that $\Tr{\CC}\subseteq \sum_{i=0}^{m-1} \CC^{q^i}
$ and therefore $\Tr{\CC}_{\Fqm} \subseteq \sum_{i=0}^{m-1} \CC^{q^i}$. 
On the other hand, let us prove that any $\CC^{q^i}$ is a subspace of
$\Tr{\CC}_{\Fqm}$ for any $i$.
Consider an arbitrary $\Fq$-basis $\alpha \eqdef \{\alpha_1,\cdots,\alpha_m\}$ of $\Fqm$. Let $\xv_i \eqdef 
\Tr{\alpha_i \cv}$. Since 
$$\xv_i = \alpha_i \cv + \alpha_i^q \cv^q + \cdots + \alpha_i^{q^{m-1}}\cv^{q^{m-1}}$$
we have that
\begin{eqnarray*}
\begin{pmatrix} \xv_1 & \xv_2 & \cdots & \xv_m \end{pmatrix} & = & 
\begin{pmatrix} \cv & \cv^q & \cdots & \cv^{q^{m-1}} \end{pmatrix}
\underbrace{\begin{pmatrix} \alpha_1 & \alpha_2 & \cdots & \alpha_m \\
\alpha_1^q & \alpha_2^q & \cdots & \alpha_m^{q} \\
\vdots & \vdots & \vdots & \vdots \\
\alpha_1^{q^{m-1}} & \alpha_2^{q^{m-1}} & \cdots & \alpha_m^{q^{m-1}}
\end{pmatrix}}_{\eqdef \Mm(\alpha)}
\end{eqnarray*}
$\Mm(\alpha)$ is the Moore matrix associated to $\{\alpha_1,\cdots,\alpha_m\}$ and is  invertible because the 
$\alpha_i$'s are linearly independent over $\Fq$. Therefore
$$
\begin{pmatrix} \cv & \cv^q & \cdots & \cv^{q^{m-1}} \end{pmatrix} = \begin{pmatrix} \xv_1 & \xv_2 & \cdots & \xv_m \end{pmatrix} \Mm(\alpha)^{-1} 
$$
and therefore all the $\cv^{q^i}$ are $\Fqm$-linear combinations of the $\xv_j$'s and belong therefore to 
$\Tr{\CC}_{\Fqm}$. This shows that $\CC^{q^i} \subseteq \Tr{\CC}_{\Fqm}$ for any $i$ and shows therefore the reverse inclusion 
$$
\CC + \CC^q + \cdots + \CC^{q^{m-1}} \subseteq \Tr{\CC}_{\Fqm}.
$$
\end{proof}

\section{Proofs of Subsection \ref{ss:Goppa}}

Let us recall the proposition we are going to prove.
\propCodes*

\begin{proof}
\noindent
{\bf Proof of \eqref{eq:altrpu}.}
\eqref{eq:altrpu} was proved in \cite[Prop. 1]{B00}.

\medskip
\noindent
{\bf Proof of \eqref{eq:shi}.}

Choose an $\Fq$-basis $\{\alpha_1,\cdots,\alpha_m\}$ of $\Fqm$. We are first going to prove that
\begin{equation}
\label{eq:evaluateto0}
\Goppa{\puv{\xv}{i}}{\Gamma}^\perp + \Fqspan{\mathbf{1}} =\Alt{r}{\puv{\xv}{i}}{\puv{i}{\yv}(\puv{\xv}{i}-x_i)}^\perp+ \Fqspan{\mathbf{1}}.
\end{equation}
		\begin{align*}
			\Alt{r}{\puv{\xv}{i}}{\puv{\yv}{i}(\puv{\xv}{i}-x_i)}^\perp &= \Fqspan{\Tr{\alpha_j \puv{\xv}{i}^a (\puv{\xv}{i}-x_i)\puv{\yv}{i}} \mid a \in \Iintv{0}{r-1}, j \in \Iintv{0}{m-1}}\\
			&=\Fqspan{\Tr{\alpha_j\puv{\xv}{i}^{a+1} \puv{\yv}{i}}-\Tr{\alpha_j x_i \puv{\xv}{i}^a \puv{\yv}{i}} \mid a \in \Iintv{0}{r-1}, j \in \Iintv{0}{m-1}}\\
			&\subseteq\Fqspan{\Tr{\alpha_j\puv{\xv}{i}^b \puv{\yv}{i}} \mid b \in \Iintv{0}{r}, j \in \Iintv{0}{m-1}}\\
			&=\Alt{r+1}{\puv{\xv}{i}}{\puv{\yv}{i}}^\perp\\
			&= \Goppa{\puv{\xv}{i}}{\Gamma}^\perp + \Fqspan{\mathbf{1}}
		\end{align*}
	
		On the other hand, since $\Gamma(x_i)\neq 0$, then $\Gamma(\puv{\xv}{i}) \not\in \Fqspan{\puv{\xv}{i}^a (\puv{\xv}{i}-x_i) \mid a \in \Iintv{0}{r-1}}$. Therefore $\Fqspan{\Gamma(\puv{\xv}{i}), \puv{\xv}{i}^a (\puv{\xv}{i}-x_i) \mid a \in \Iintv{0}{r-1}}$ is a vector space of dimension $r+1$ of evaluations of polynomials with degree at most $r$. Hence
		\[\Fqspan{\Gamma(\puv{\xv}{i}), \puv{\xv}{i}^a (\puv{\xv}{i}-x_i) \mid a \in \Iintv{0}{r-1}}=\Fqspan{\puv{\xv}{i}^b \mid b \in \Iintv{0}{r}}.\]
		Since $\Tr{\alpha_j \Gamma(\puv{\xv}{i}) \puv{\yv}{i}}=\Tr{\alpha_j \cdot \mathbf{1}}\in \Fqspan{\mathbf{1}}$, we get
		\begin{align*}
		\Goppa{\puv{\xv}{i}}{\Gamma}^\perp + \Fqspan{\mathbf{1}} &=\Fqspan{\Tr{\alpha_j\puv{\xv}{i}^b \puv{\yv}{i}} \mid b \in \Iintv{0}{r}, j \in \Iintv{0}{m-1}}\\
		&\subseteq \Fqspan{\Tr{\alpha_j \puv{\xv}{i}^a (\puv{\xv}{i}-x_i)\puv{\yv}{i}} \mid a \in \Iintv{0}{r-1}, j \in \Iintv{0}{m-1}} + \Fqspan{\mathbf{1}}\\
		&=\Alt{r}{\puv{\xv}{i}}{\puv{\yv}{i}(\puv{\xv}{i}-x_i)}^\perp+ \Fqspan{\mathbf{1}}.
		\end{align*}
Because of the last point, $\sh{i}{\Alt{r+1}{\xv}{\yv}^\perp}$ is the set of codewords of $\Goppa{\xv}{\Gamma} + \Fqspan{\mathbf{1}}$ which 
evaluate to $0$ at position $i$. Clearly the elements of $\Alt{r}{\puv{\xv}{i}}{\puv{\yv}{i}(\puv{\xv}{i}-x_i)}^\perp$ viewed as polynomial evaluations and
extended canonically at position $i$ as the linear space
$$
\left\{\Tr{\yv (\xv-x_i)P(\xv)}: \;P \in \Fqm[X],\;\deg P < r \right\}
$$
 belong to this set. Since the all one vector does not meet this property and because of \eqref{eq:evaluateto0} this implies the point \eqref{eq:shi}.

\noindent
{\bf Proof of \eqref{eq:shicomshj}.} 
This is just a consequence that $\sh{i}{\sh{j}{\CC}}=\sh{j}{\sh{i}{\CC}}$ which holds for any code $\CC$. Here we apply it to $\CC=\Alt{r+1}{\xv}{\yv}^\perp$ and apply the previous point:
\begin{eqnarray*}
\sh{i}{\Alt{r}{\puv{\xv}{j}}{\puv{\yv}{j}(\puv{\xv}{j}-x_j)}^\perp} & = &\sh{i}{\sh{j}{\Alt{r+1}{\xv}{\yv}^\perp}} \;\text{(by \eqref{eq:shi})}\\
& = & \sh{j}{\sh{i}{\Alt{r+1}{\xv}{\yv}^\perp}}\;\;\text{(by the previous remark)}\\
& = & \sh{j}{\Alt{r}{\puv{\xv}{i}}{\puv{\yv}{i}(\puv{\xv}{i}-x_i)}^\perp} \text{(by \eqref{eq:shi})}
\end{eqnarray*}

\end{proof}
 \end{document}